\DeclareMathOperator{\Gal}{Gal}
\DeclareMathOperator{\ind}{ind}
\newtheorem{definition}{Definition}[section]
\newtheorem{thm}{Theorem}[section]
\newtheorem{proposition}[thm]{Proposition}
\newtheorem{lemma}[thm]{Lemma}
\newtheorem{corollary}[thm]{Corollary}
\newtheorem{exam}{Example}[section]
\newtheorem{remark}{Remark}[section]
\newcommand{\CC}{\mathbb{C}}
\newcommand{\ZZ}{\mathbb{Z}}
\newcommand{\RR}{\mathbb{R}}
\newcommand{\QQ}{\mathbb{Q}}
\newcommand{\Ac}{\mathcal{A}}
\newcommand{\Cc}{\mathcal{C}}
\newcommand{\Dc}{\mathcal{D}}
\newcommand{\Gc}{\mathcal{G}}
\newcommand{\Oc}{\mathcal{O}}
\newcommand{\Nc}{\mathcal{N}}
\newcommand{\bv}{\mathbf{b}}
\newcommand{\ev}{\mathbf{e}}
\newcommand{\gv}{\mathbf{g}}
\newcommand{\xv}{\mathbf{x}}
\newcommand{\yv}{\mathbf{y}}
\newcommand{\Hv}{\mathbf{H}}
\newcommand{\proj}{{\rm proj}}
\newcommand{\tr}{{\rm Tr}}
\newcommand{\diag}{{\rm diag}}
\newcommand{\Z}{{\mathbb Z}}
\newcommand{\R}{{\mathbb R}}
\newcommand{\C}{{\mathbb C}}
\newcommand{\D}{{\mathcal D}}
\newcommand{\HH}{{\mathbf H}}
\newcommand{\OO}{{\mathcal O}}
\newcommand{\A}{{\mathcal A}}
\newcommand{\Q}{{\mathbb Q}}
\newcommand{\mindet}[1]{\hbox{\rm det}_{min}\left( #1\right)}
\begin{document}

\title{Fast-Decodable Asymmetric Space-Time Codes  from Division Algebras}
\author{Roope Vehkalahti, \emph{Member, IEEE}, Camilla Hollanti, \emph{Member, IEEE}, and Fr\'ed\'erique Oggier
\thanks{Part of this work appeared at ISIT 2010 \cite{OVH_isit10},  
at SPCOM 2010 \cite{spcom}, and at ISITA 2010 \cite{VHL_isita10}.}
}\maketitle

%
%

\begin{abstract}
Multiple-input double-output (MIDO) codes are important in the near-future wireless
communications, where the portable end-user device is physically small and will typically  contain at most two receive antennas. Especially tempting is the $4\times 2$ channel due to its immediate applicability in the digital video broadcasting (DVB). Such channels optimally employ rate-two space-time (ST) codes consisting of $(4\times 4)$ matrices. Unfortunately, such codes are in general very complex to decode, hence setting forth a call for constructions with reduced complexity.

Recently, some reduced complexity constructions have been proposed, but they have mainly been based on different \emph{ad hoc} methods and have resulted in isolated examples rather than in a more general class of codes. In this paper, it will be shown that a family of  division algebra  based MIDO codes  will always result in at least 37.5\% worst-case complexity reduction, while maintaining full diversity and, for the first time, the non-vanishing determinant (NVD) property. The reduction follows from the fact that, similarly to the Alamouti code, the codes will be subsets of matrix rings of the Hamiltonian quaternions, hence allowing simplified decoding.  At the moment, such reductions are among the best known for rate-two MIDO codes \cite{BHV,SR-fast}.  Several explicit constructions are presented and shown to have excellent performance through computer simulations.

\end{abstract}

\begin{keywords}  Coding gain, cyclic division algebra, digital video broadcasting next generation handheld (DVB-NGH), fast maximum-likelihood (ML) sphere decoding, Hamiltonian quaternions, Hasse invariants,  lattices, low-complexity space-time block codes (STBCs), multiple-input single/double/multiple-output (MISO/MIDO/MIMO), non-vanishing determinant (NVD), orders.\end{keywords}

%
%

\section{Introduction}

Among known space-time codes, the  Alamouti code \cite{Alam} and the fully diverse $4\times1$ quasi-orthogonal codes \cite{HLL} stand out due to their orthogonality properties that are beneficial for  decoding. Both of these codes  however have a low code rate, hence best suitable for an asymmetric transmission, where there are less receive antennas than transmit antennas. It is far from obvious how to generalize these codes to asymmetric scenarios where we demand higher code rates and different number of  antennas. On the other hand, the now well known cyclic division algebra (CDA) codes designed for a symmetric transmission have  full rate and are generalizable to an arbitrary number of antennas. Unfortunately, they are very complex to decode, especially when we have less receive antennas than transmit antennas. Yet there is a strong demand for asymmetric codes that would be  fast-decodable, generalizable to more antennas, and would support higher rates. The special case of two receive antennas is referred to as a multiple input-double output (MIDO) code.

For example one of the most interesting wireless applications currently is the design of $4\times 2$ MIDO codes. Such asymmetric systems can be used in the communication between, for instance, a TV broadcasting station and a portable digital TV device. The four transmitters can either be all at one station or separated between two different stations in this way providing better coverage in the case when the transmission of one of the stations is blocked out by a deep shadow. 

In Europe, the digital video broadcasting (DVB) consortium has adopted different standards for terrestrial (DVB-T) fixed reception, handheld (DVB-H) reception, satellite (DVB-S) reception as well as an hybrid reception like DVB-SH. The ongoing work towards the standardization of the DVB Next Generation Handheld (NHG, see the DVB Project's web page \cite{DVB} for more information)  systems is bringing this topic ever more to the forefront of current MIMO research. The inclusion of the $4\times 2$ systems in the consortium's call for technologies for the DVB-NGH indicates having a MIDO code in the coming  standard.

One solution to the $4\times 2$ code construction problem could be to use a full-rate CDA code, \emph{e.g.} the $4\times 4$ Perfect code \cite{BORV}. However, when received with two antennas, a rate-four code cannot be optimally decoded with  a linear decoder such as a sphere decoder. 
Codes especially designed for the $4\times 2$ channel have been proposed in \emph{e.g.} \cite{asykonstru, asyoptimal, HHVMM}, but all the codes require high complexity
maximum-likelihood (ML) decoding, namely full-dimensional sphere decoding.

A natural approach to this design problem  is to imitate the form of the code matrices of the already known fast-decodable codes  or use these codes as building blocks for higher rate codes. The key problem in such constructions is that it is very hard to guarantee that the resulting code will still have good performance, thus in many cases requiring optimization to be carried out through extensive computer searches.

In this paper we are going to adopt a different approach to this problem.  We  study the algebraic structure of known fast-decodable  codes like the Alamouti code and the division algebra based quasi-orthogonal codes. By analyzing the relation between the Hasse-invariants and the geometric structure of these codes we are able to distill the key algebraic properties  that force these codes to be fast-decodable. This approach then depicts an infinite family of fast-decodable codes from division algebras. 

The main advantage of our take on this subject is that  the proposed codes are based on orders of division algebras and therefore they are not only fast-decodable, but are also  guaranteed to have full-diversity,  the non-vanishing determinant (NVD) property, and further  allow us to perform algebraic  minimum determinant optimization.  We can show, under  given conditions, that the ML decoding complexity of a MIDO code will always be reduced by at least 37.5\%, while maintaining the NVD.  Explicit constructions  based on the proposed criteria will be provided. One of the examples  introduces a code that has comparable performance with the best known fast-decodable ST codes \cite{BHV,SR-fast} and further has (provable) NVD.  The proposed theory provides fully diverse, fast-decodable (FD) codes with the NVD property for any even number $n_t$ of Tx antennas and any code rate $\leq n_t/2$. Motivated by the DVB-NGH, most of the examples are given in the case of 4 Tx antennas and 2 Rx antennas. 

We make the typical assumption of transmission over a  coherent i.i.d. Rayleigh fading channel with perfect channel state information at the receiver (CSIR) and with no CSIT,
$$
Y=HX+N,
$$
where $Y,X,H,N$ are the received, transmitted, channel, and the Gaussian noise matrix, respectively. The ST matrix $X\in M_{n_t}(\mathbb{C})$, while $Y,H,N\in M_{n_r\times n_t}(\mathbb{C})$, where $n_t$ (resp. $n_r$) denotes the number of transmit  (resp. receive) antennas.  We assume no correlation, but in the correlated case the transmitter can adapt to the rate-one code naturally embedded within the proposed codes while maintaining and even improving  fast decodability.

\subsection{Related work}

The first reduced ML-complexity $4\times 2$ construction was given in \cite{BHV}, combining two copies of a quasi-orthogonal code \cite{TJC}.  This resulted in a MIDO code that does have lower decoding complexity, but unfortunately does not have full rank. Nevertheless,  good performance is still achieved at low-to-moderate SNRs and with four real dimensions less in the sphere decoder.

The most recent results on fast-decodable codes have appeared in
\cite{SR-fast}, where new constructions with optimized performance have been presented, and in
\cite{OVH_isit10,spcom,VHL_isita10}, where fast-decodable codes with the NVD property have been built from crossed product and cyclic presentations of division
algebras. In the preprint \cite{JR} the authors consider quadratic forms as a tool for characterizing the decoding complexity, and in the preprint \cite{RaRa} multi-group ML-decodable collocated and
distributed space-time codes are proposed.

\subsection{Organization and contributions}

The rest of the paper is organized as follows.
We start by giving some background on space-time codes with a lattice
structure and their decoding via sphere decoding in Section \ref{sec:STC}.
The concept of fast decodability is then defined and illustrated in
Section \ref{sec:FD}, where the role of the Alamouti code is emphasized.
To pursue the study of fast-decodable codes, we then focus on CDA codes in Section \ref{sec:cda}, where some background and further
motivating examples are presented, translating fast decodability into
being able to embed the considered cyclic algebra into an algebra of matrices
with quaternionic coefficients. The conditions guaranteeing the existence
of such an embedding are studied in Section \ref{embed}: we need an algebra
whose center is totally real and such that all its infinite places ramify in
the algebra. A family of such cyclic algebras is provided. A last design criterion, 
the normalized minimum determinant, is added and bounds on optimal 
lattice codes with respect to it are computed in Section \ref{bounds}. Different explicit construction methods are described in Section \ref{explicit}. Finally,
several code constructions are presented in Section \ref{sec:codes} for 
$4\times 2$ codes followed by simulation results in Section \ref{sec:simu}. In Section \ref{sec:codes6} the results are extended for more transmit antennas and explicit constructions are provided for $6\times 3$ and 
$6\times 2$ codes. 

Further generalizations are  provided in Section \ref{sec:furthergen}, where it is also shown that the existence result can be made explicit via conjugations of the familiar left-regular representation. Section \ref{sec:conc} concludes the paper. In Appendix, relevant algebraic results related to central simple algebras and Hasse invariants are presented. 

The main contributions of this paper are listed below.
\begin{itemize}

\item General methods to produce space-time lattice codes with the NVD property and given geometric structure are given.

\item A unified construction of families of CDAs that can be embedded into  matrix rings of the Hamiltonian quaternions $M_k(\Hv)$ is provided. The underlying algebraic principles are studied in full detail. It is then demonstrated how such a structure can be beneficial in the decoding. The generality of the constructions is in contrast to the present \emph{ad hoc} constructions available in the literature.

\item A complete solution to the discriminant minimization problem \cite{HLRV2} for division algebras with arbitrary centers is given. As an application a normalized minimum determinant bound for code lattices in $M_k(\Hv)$ is derived from the algebraic  results.

\item We mainly consider the $4\times 2$ MIDO case, but also provide constructions for the $6\times 2$ and $6\times 3$ cases. The methods are generalizable to any even number of Tx antennas.
\item The main difference with other fast-decodable MIDO codes is that all the proposed codes have the NVD property. The proofs for the NVD are based on the underlying algebraic structure of the code and hold for infinite constellations. This can be seen as an improvement for \cite{SR-fast}, where the NVD is conjectured by computing the minimum determinant for certain finite QAM alphabets.
\item We build explicit codes that have 25-37.5\% reduced decoding complexity for general constellations, and whose performance is comparable to the best known MIDO codes. Such complexity is among the best known for the MIDO channel, and can be further reduced by using a  symmetric alphabet --  a square QAM alphabet, for instance. No fast-decodable MIDO codes with provable NVD other than the ones in this paper have been reported.

\end{itemize}

\subsection{Notations}
Throughout the paper, we will use the following notations:
\begin{itemize}
\item Tx for transmit antennas, Rx for receive antennas,
\item $n_t\times n_r$ for a channel with $n_t$ Tx and $n_r$ Rx antennas,
\item $(n\times k)$ for matrix dimensions,  
\item boldface lowercase letters for vectors, \emph{e.g.} $\mathbf{g}=(g_1,\ldots,g_t)$ or $\mathbf{g}=(g_1,\ldots,g_t)^T$,
\item capital letters for matrices, \emph{e.g.} $X$ or $M$,
\item $x^*$ for the complex conjugate of $x$, $X^*$ for element-wise conjugation in a matrix $X$, and $X^\dag$ for the Hermitian conjugate of $X$,
\item calligraphic letters for algebras, \emph{e.g.} $\mathcal{A}$, 
\item $E/K$ for number field extensions and $\sigma$ for the generator of a cyclic Galois group $\Gal(E/K)$. Note that $K$ is also used for the rank of a lattice in some instances, but this should cause no danger of confusion. 
\item The field norm from $E$ to $K$  is denoted by $$\Nc_{E/K}(x)=x\sigma(x)\cdots\sigma^{n-1}(x)\in K,$$ where $n=\#\Gal(E/K)$. 
\end{itemize}
%
%

\section{Space-time lattice codes}
\label{sec:STC}

We start with as general a definition of a space-time code as possible,
and motivate why we focus our attention to {\em space-time lattice} codes,
which furthermore can be decoded via sphere decoder, a universal decoder
for lattice codes. We explain in detail how this is done.

\subsection{Definitions}
\label{subsec:def}

Abstractly, a space-time codeword $X$ is an $(n\times k)$ matrix
with coefficients in $\CC$, where $n$ corresponds to the number of
transmit antennas, and $k$ is the coherence time (or delay) during which the
channel is assumed constant. 
We will, in this paper, concentrate on the case $k=n$, so that a
space-time code is a square matrix, corresponding to minimum delay
codes.
\begin{definition}\label{def:stbc}
A {\em space-time code} $\Cc$ is a set of $(n\times n)$ complex matrices.
We often use the abbreviation {\em STBC} for {\em space-time block code}.
\end{definition}

The space $M_n(\C)$ of $(n\times n)$ matrices with complex coefficients
is a vector space of dimension
\[
\dim_\RR(M_n(\CC))=2n^2
\]
over the reals. Therefore, for every code $\Cc \subseteq M_n(\C)$, we can
consider, following \cite{RaRa}, the subspace $\langle \Cc \rangle$ spanned
by the matrices of $\Cc$. It has an  $\R$-basis consisting of $K$ matrices,
$1\leq K \leq 2n^2$, so that each matrix $X$ in $\Cc$ can be uniquely written
as
\begin{equation}\label{eq:ld}
X=\sum_{i=1}^{K}g_i B_i,
\end{equation}
where $B_i$ are some basis matrices and $g_i$ are real numbers.
Once the basis matrices $\{B_1,\ldots,B_K\}$ are given, a space-time code $\Cc$
is defined by the values that $g_i$, $i=1,\ldots,K$, can take. We write
\[
\gv=(g_1,\ldots,g_K)
\]
and let $\gv$ take its values in $\Gc\subseteq \R^K$, so that
\begin{equation}\label{eq:CG}
\Cc=\{\sum_{i=1}^{K}g_i B_i\,|\, \gv=(g_1,\dots,g_K) \in \Gc \,\,\}.
\end{equation}
Typically, $\Gc$ corresponds to a choice of constellation points.
For example, if a size $Q$ pulse amplitude modulation ($Q$-PAM) is used, then
$\Gc$ is the Cartesian product of $K$ times
$$\{-Q+1,\ldots,-3,-1,1,3,\ldots, Q-1\},$$
where  $Q\geq 2, 2|Q$.
The formulation in (\ref{eq:CG}) is not without recalling the notion of \emph{linear
dispersion codes} \cite{HH}, where codewords $X$ are similarly described by a
family of dispersion matrices $\{A_1,\ldots, A_K\}$: $X=\sum_{i=1}^Kg_iA_i$,
for some coefficients $g_i$ belonging to a symmetric set. The critical difference is in $\{ B_1,\ldots,B_K \}$
being linearly independent, and thus really forming an $\RR$-basis for $\langle\Cc\rangle $.
It consequently makes sense to speak of dimension of $\langle\Cc\rangle $, which yields
the following definition of rate \cite{RaRa}:
\begin{definition}\label{def:R1}
The {\em dimension rate} $R_1$ of the code $\Cc$ is given by
$$
R_1=\frac{\dim_\RR(\langle \Cc \rangle)}{n}=\frac{K}{n}
$$
(real) dimensions per channel use.
\end{definition}

Since $1\leq K \leq 2n^2$, we immediately see that the maximum rate achievable
for square matrices is $2n$.
One should note that this is not the common definition of a \emph{code rate} (also used in this paper until now), which
usually counts how many complex symbols (\emph{e.g.} QAM symbols) are transmitted in a codeword. With our notation, the common code rate would be $R_1/2\leq n$.

The data rate in bits per channel use (bpcu) is defined as follows.
\begin{definition}
The {\em bit rate} $R_2$ of the code $\Cc$ is
$$
R_2=\frac{\log_2(|\Cc|)}{n}
$$
bpcu.
\end{definition}

While the above considerations have been done in full generality, several
years of research on space-time coding have shown that good space-time codes
enjoy special properties. Following \cite{TSC}, getting fully diverse codes
has become the first code design criterion. That is, we require
\begin{equation}\label{eq:fulldiv}
\det(X-X') \neq 0,~X\neq X'\in\Cc.
\end{equation}
From \cite{SRS} it is known that the best way to actually deal with
this constraint is to first assume that the space-time code considered forms
an additive group, so that
\begin{equation}\label{eq:add}
X\pm X' \in \Cc,
\end{equation}
which simplifies (\ref{eq:fulldiv}) to
\[
\det(X)\neq 0,~X\neq {\bf 0},
\]
a much more tractable constraint. We note that $\Cc$ as defined in
(\ref{eq:CG}) is not necessarily linear, but of course
$\langle \Cc \rangle$ is.
From the linearity imposed on $\Cc$ by (\ref{eq:add}), we are only one
step away from having a {\em space-time lattice code}. Recall that
\begin{proposition}
An infinite discrete group of matrices in $M_n(\C)$ is a lattice.
\end{proposition}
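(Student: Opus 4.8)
The plan is to reduce the statement to the classical structure theorem for discrete subgroups of Euclidean space. First I would fix an $\R$-linear isomorphism $M_n(\C)\cong\R^{m}$ with $m=2n^2$ (for instance by recording the real and imaginary parts of the matrix entries), under which the additive group of $M_n(\C)$ becomes $\R^m$ with its usual topology; let $L$ denote the given infinite discrete subgroup. A preliminary observation is that, because $L$ is a subgroup, discreteness at the origin propagates by translation: there is an $\varepsilon>0$ with $L\cap B(0,\varepsilon)=\{0\}$, hence any two distinct elements of $L$ differ by at least $\varepsilon$ in norm, so every bounded region of $\R^m$ contains only finitely many elements of $L$ (in particular $L$ is closed).

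Next I would pass to the real span $W=\operatorname{span}_\R(L)$, of dimension $r\ge 1$ (it is $\ge1$ since $L$ is infinite and therefore contains a nonzero vector). Choose $u_1,\dots,u_r\in L$ forming an $\R$-basis of $W$ and put $L_0=\Z u_1\oplus\cdots\oplus\Z u_r\subseteq L$, a full-rank lattice in $W\cong\R^r$. The core of the argument is to show that $q:=[L:L_0]$ is finite: given $x\in L\subseteq W$, subtracting a suitable element of $L_0$ moves $x$ into the bounded fundamental parallelepiped $P=\{\sum_{i=1}^r t_i u_i : 0\le t_i<1\}$ of $L_0$, so every coset of $L_0$ in $L$ meets $P$; but $L\cap\overline{P}$ is finite by the uniform discreteness noted above, whence $q<\infty$.

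From this point the conclusion is formal. Since $qL\subseteq L_0$, we have $L_0\subseteq L\subseteq \tfrac{1}{q}L_0$, and $\tfrac{1}{q}L_0$ is free abelian of rank $r$; a subgroup of a free abelian group of finite rank is again free of rank at most $r$, so $L$ is free abelian, and since $L\supseteq L_0$ its rank is exactly $r$. Finally, if $v_1,\dots,v_r$ is a $\Z$-basis of $L$, then these $r$ vectors span $W$ over $\R$ (as $L$ does) and $r=\dim_\R W$, so they are $\R$-linearly independent; hence $L=\Z v_1\oplus\cdots\oplus\Z v_r$ is a lattice, of positive rank since the hypothesis ``infinite'' forces $r\ge1$.

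The step I expect to be the crux is the finiteness of $[L:L_0]$, i.e. the fact that a discrete subgroup of $\R^m$ cannot accumulate and so meets every bounded set in a finite set. This is precisely where the discreteness hypothesis is consumed and is the only genuinely topological ingredient; the remaining steps are a change of coordinates and the standard structure theory of finitely generated abelian groups.
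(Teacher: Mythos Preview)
Your proof is correct and follows the standard argument for the structure theorem on discrete subgroups of a finite-dimensional real vector space. The paper itself does not supply a proof of this proposition at all: it is stated as a known fact and followed only by the informal remark that discreteness amounts to a uniform positive lower bound on the Euclidean distance between distinct codewords. So there is nothing in the paper to compare against; your argument fills in precisely what the paper leaves to the reader.
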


We can thus safely assume that infinite space-time codes have a lattice structure,
since the discreteness condition can be translated by asking the Euclidean
distance between each pair of codewords to be greater than $r$, for a fixed
non-zero $r$. This formalizes the natural assumption that codewords should not
be chosen too close to each other.

\begin{definition}
A {\em space-time lattice code} $\Cc \subseteq M_n(\C)$ has the form
$$
\Z B_1\oplus \Z B_2 \cdots \oplus \Z B_K,
$$
where the matrices $B_1,\dots, B_K$ are linearly independent, \emph{i.e.}, form a lattice basis, and $K$ is
called the \emph{rank} of the lattice. We may also call $K$ the \emph{dimension} of the code, but do not confuse this with the dimension of the lattice.
\end{definition}

For the actual transmission, a finite subset of codewords from $\Cc$ is picked 
by restricting the integer coefficients to some set $\Gc$, as in (\ref{eq:CG}). From now on, we will consider only space-time lattice codes and may call
them space-time codes for short.

As recalled above, full diversity is the first design criterion for space-time
codes. Once achieved, meaning for lattice codes that
\[
\det(X)\neq 0,~X\neq {\bf 0},
\]
the next criterion is to maximize the minimum determinant of the code.
\begin{definition}\label{mindet}
The {\em minimum determinant} $\mindet{\Cc}$ of a space-time code
$\Cc \subset M_n(\C)$ is defined to be
\[
\mindet{\Cc}=\inf_{X\neq{\bf 0}}|\det(X)|,~ X\in\Cc.
\]
\end{definition}
\begin{definition}\cite{BR}\label{def:NVD}
If the minimum determinant of the lattice is non-zero, we say that the code
has a \emph{non-vanishing determinant} (NVD) .
\end{definition}

The NVD property means that, prior to SNR normalization, the lower bound
on the minimum determinant does not depend on the size of the constellation used.

\subsection{Sphere decoding}

Let $X$ be a space-time lattice codeword.
We can flatten $X\in M_n(\C)$ to obtain a $2n^2$-dimensional real
vector $\xv$ by first forming a vector of length $n^2$ out of the entries
(\emph{e.g.} row by row, or vectorizing that is column by column) and then replacing
each complex entry with the pair formed by its real and imaginary parts.
This defines a mapping $\alpha$ from $M_n(\C)$ to $\R^{2n^2}$:
\begin{equation}\label{eq:alpha}
\alpha: X \mapsto \xv=\alpha(X)
\end{equation}
which is clearly $\R$-linear:
\begin{equation}\label{eq:Rlin}
\alpha(rX+r'X')=r\alpha(X)+r'\alpha(X'),~r,r'\in\R.
\end{equation}
Let $||X||_F = \sqrt{\tr(X^\dagger X)}$ denote the Frobenius norm of $X$.
Note that the following equality holds:
\begin{equation}\label{eq:alphaiso}
||X||_F = \sqrt{\sum_{i=1}^n\sum_{j=1}^n|x_{ij}|^2}= ||\alpha(X)||_E,
\end{equation}
where $||\cdot||_E$ denotes the Euclidean norm of a vector.
This makes $\alpha$ an isometry.

The space-time code $X\in M_n(\C)$ is transmitted over a
coherent Rayleigh fading channel with perfect channel state information at
the receiver (CSIR):
\[
Y=HX+V,
\]
where $H$ is the channel matrix and $V$ is the Gaussian noise at the receiver.
Maximum-likelihood (ML) decoding consists of finding the codeword $X$ that
achieves the minimum of the squared Frobenius norm
\begin{equation}
\label{frob-min}
d(X)=||Y-HX||_F^2.
\end{equation}
This search can be performed using a real sphere decoder (see \emph{e.g.} \cite{VB}). Since
this paper focuses on MIDO codes and for the sake of simplicity, we will
now exemplify the computation of a $(4\times 4)$ MIDO code matrix $X$, that is, we
 consider 4 Tx antennas and 2 Rx antennas and the channel
\begin{equation}\label{eq:channel}
Y_{2\times 4}=H_{2\times 4}X_{4\times 4}+V_{2\times 4}.
\end{equation}
A $(4\times 4)$ MIDO code can transmit up to 8 complex (say QAM) information
symbols, or equivalently 16 real (say PAM) information symbols. Following (\ref{eq:CG}),
the encoding can thus be written as mapping the PAM vector
\[
\gv=(g_1,\dots,g_{16})^T
\]
into a $(4\times4)$ matrix
$$
X=\sum_{i=1}^{16} g_i B_i,
$$
where the basis matrices $B_i$, $i=1,\ldots,16,$ define the code. Let us emphasize
again that by basis matrices, we really mean a $\ZZ$-basis of the code seen as a lattice.
From (\ref{eq:channel}), the received matrix $Y$ can be expressed as
$$
Y_{2\times 4}=H(\sum_{i=1}^{16}g_{i}B_i) +V=\sum_{i=1}^{16}g_i(HB_i)+V.
$$
In order to  perform real sphere decoding, we have to transform this complex
channel equation into a real one, which can be done via the mapping $\alpha$
defined in (\ref{eq:alpha}).
The matrix $Y_{2\times 4}=(y_{i,j})$ can be turned into a real valued vector
$\yv$ in $\RR^{16}$ by the transformation
\[
\alpha(Y)=\yv=[\yv_1,\yv_2]^T
\]
with
\[
\begin{array}{l}
\yv_1=(\Re(y_{1,1}), \Im (y_{1,1}),\ldots, \Re(y_{1,4}),\Im(y_{1,4}))\\
\yv_2=(\Re(y_{2,1}), \Im (y_{2,1}),\ldots, \Re(y_{2,4}),\Im(y_{2,4})).
\end{array}
\]
The matrices $HB_i\in M_{4\times2}(\CC)$ are then similarly turned into
vectors $\bv_i\in \RR^{16}$:
\[
\alpha(HB_i)=\bv_i,~i=1,\ldots,16,
\]
so that $d(X)$ can be expressed as
\[
\begin{array}{lcll}
d(X) &=& ||Y-HX||_F^2  & \mbox{ by }(\ref{frob-min}) \\
     &=& ||\alpha(Y-HX)||_E^2 & \mbox{ by }(\ref{eq:alphaiso})\\
     &=& ||\alpha(Y)-\alpha(HX)||_E^2 & \mbox{ by }(\ref{eq:Rlin})\\
     &=& || \yv-\sum_{i=1}^{16}g_i\bv_i||_E^2.
\end{array}
\]
From this we finally get
\begin{equation}\label{eq:dlatt}
d(X) = || \yv -B\gv ||_E^2,
\end{equation}
where
\[
B=(\bv_1, \bv_2,\dots, \bv_{16}) \in M_{16\times16}(\RR).
\]
This shows that the decoding of a space-time lattice code $\Cc$ with a basis
$\{B_1,\ldots,B_K \}$ is equivalent
to the decoding of a 16-dimensional real lattice $\Lambda(\Cc)$ described by
the generator matrix $B$: $\Lambda(\Cc)=\{ \xv = B\gv ~|~\gv \in \ZZ^n\} $.

%
%

\section{Fast-decodable space-time codes}
\label{sec:FD}

We are now ready to explain the notion of fast decodability of space-time
lattice codes when using sphere decoding. We will then give a few examples
that will motivate the rest of the paper.

\subsection{Fast sphere decoding}\label{fastdec}

The first step of the sphere decoder is to perform a QR decomposition
of the lattice generator matrix $B$, $B=QR$, with $Q^\dagger Q=I$, to reduce
the computation of
\[
d(X)= || \yv -B\gv ||_E^2
\]
as in (\ref{eq:dlatt}) to
\begin{equation}\label{eq:dR}
d(X)=|| \yv-QR\gv||_E^2=||Q^\dagger\yv-R\gv||_E^2
\end{equation}
where $R$ is an upper right triangular matrix. The number and position of
non-zero elements in the upper right part of $R$ will determine the complexity
of the sphere decoding process \cite{BHV,SR-fast}.

The worst case is of course given when the matrix $R$ is a full upper right
triangular matrix. This motivates the following definition of worst case
sphere decoding complexity:
\begin{definition}\cite[Def.\,2]{BHV}
\label{Vit_comp}
Let $S$ denote the real alphabet in use, and let $\kappa$ be the number of
independent real information symbols from $S$ within one code matrix.
The {\em ML decoding complexity} is the minimum number of values of $d(X)$ in
\eqref{eq:dR} that should be computed while performing  ML decoding. This
number cannot exceed $|S|^\kappa$, the complexity of the exhaustive-search ML
decoder (or $|S|^{\kappa/2}$ for a complex alphabet $S$).
\end{definition}

\begin{definition} The exponent $\kappa$ (resp. $\kappa/2$) is referred to as the {\em dimension of a real (resp. complex) sphere decoder}. If the structure of the code is such that $\kappa$ decreases, we say that the
code is {\em fast-decodable}. In this paper, we always refer to the dimension of a real sphere decoder.
\end{definition}

In the MIDO case (\ref{eq:channel}), where $S$ is a real PAM alphabet (and hence $|S|$ is the number of PAM symbols in use), the worst case complexity is $|S|^{16}$.
A typical improvement in $\kappa$ can be obtained if the left upper corner of
the matrix
\[
R =
\left(
\begin{array}{cc}
R^{1,1} & R^{1,2}\\
R^{2,1} & R^{2,2}
\end{array}
\right)
\]
from the QR decomposition of $B$ has the form
\begin{equation}\label{eq:upperleft}
R^{1,1}=
\left(
\begin{array}{cccccccc}
\star&\star&\star&\star&0&0&0&0\\
0&\star&\star&\star&0&0&0&0\\
0&0&\star&\star&0&0&0&0\\
0&0&0&\star&0&0&0&0\\
0&0&0&0&\star&\star&\star&\star\\
0&0&0&0&0&\star&\star&\star\\
0&0&0&0&0&0&\star&\star\\
0&0&0&0&0&0&0&\star\\
\end{array}\right),
\end{equation}
where $\star$ denotes any non-zero element. Indeed, in this case:
\begin{enumerate}
\item
We start the sphere decoding by going through every combination of the 8
last real symbols $g_9,\ldots,g_{16}$ (we are not choosing the ones that give
the minimal metric yet, we go through all the options since we do not know
how the last 8 symbols will affect the total minimization problem). This
corresponds to treating the matrix $R^{2,2}$, and has cost $|S|^8$.
\item
We then look at the first 8 symbols $g_1,\ldots,g_8$, corresponding to
the matrix $R^{1,1}$, and for every possible choice of 8-tuples,
$(g_9,\ldots,g_{16} )$, we decode separately $g_1,\ldots,g_4$ and
$g_5,\ldots,g_8$ thanks to the structure of $R^{1,1}$, which has complexity
$2|S|^4$.
\end{enumerate}
Altogether, the above structure allows to decode the PAM symbols
$g_1,g_2,g_3,g_4$ independently of the symbols $g_9,g_{10},g_{11},g_{12}$,
yielding a worst case complexity of $|S|^{12}$ (or more precisely $2|S|^{12}$)
for the real sphere decoding process instead of the full complexity order
of $|S|^{16}$.

The natural question to ask is thus the design of codes (that is, of the basis
matrices $B_i$) that yield a sparse matrix $R$.
To address this question, we further study the structure of the matrix $R$.
By definition of the QR decomposition of the matrix
$B=(\bv_1,\ldots,\bv_{16})$, we have that
\[
R=
\left(
\begin{array}{cccc}
\langle\ev_1,\bv_1\rangle & \langle\ev_1,\bv_2 \rangle &\ldots&\langle\ev_1,\bv_{16} \rangle\\
0                         & \langle\ev_2,\bv_2 \rangle &\ldots&\langle\ev_2,\bv_{16} \rangle\\
0                         & 0                          &      &\langle\ev_3,\bv_{16}\rangle\\
0                         & 0                          &\ddots & \vdots                     \\
0                         & 0                          &  & \langle \ev_{16},\bv_{16} \rangle
\end{array}
\right)
\]
where
\begin{eqnarray*}
\ev_1 &=&\frac{\bv_1}{||\bv_1||}\\
\ev_2 &=&\frac{\bv_2-\proj_{\ev_1}\bv_2}{||\bv_2-\proj_{\ev_1}\bv_2||}\\
\vdots &&\\
\ev_k &=&\frac{\bv_k-\sum_{j=1}^{k-1}\proj_{\ev_j}\bv_j}{||\bv_k-\sum_{j=1}^{k-1}\proj_{\ev_j}\bv_j||}
\end{eqnarray*}
and
\[
\proj_{\ev}\bv=\frac{\langle \ev,\bv \rangle}{\langle \ev,\ev \rangle}\ev.
\]
The notation $\langle\cdot,\cdot\rangle$ stands for the usual inner product.
Thus having the upper left part of $R$ to look like (\ref{eq:upperleft})
means that
\[
\langle \bv_i, \bv_j\rangle=0,~1\leq i \leq 4,~5\leq j \leq 8,
\]
or equivalently, by recalling that $\bv_i=\alpha(HB_i)$
\[
0=\langle \alpha(HB_i),\alpha(HB_j) \rangle=\Re(\tr(HB_i(HB_j)^\dagger)).
\]
The second equality is true in general and can be shown by a direct computation:
\begin{equation}\label{eq:retrprod}
\langle \alpha(A),\alpha(B)\rangle=\Re(\tr( AB^{\dagger})).
\end{equation}

We have now connected the decoding complexity to the code design. The above
computations showed that if the 16 basis matrices $B_1,\ldots,B_{16}$ satisfy
\[
0=\Re(\tr(HB_i(HB_j)^\dagger)),~1\leq i \leq 4,~5\leq j \leq 8,
\]
the worst case sphere decoding complexity is of the order of $|S|^{12}$.
This suggests further improvement: the current process manages to separate
the information symbols into two groups, which could be repeated.  Assume that
we could further have
\[
0=\Re(\tr(HB_i(HB_j)^\dagger)),~1\leq i \leq 2,~3\leq j \leq 4
\]
and
\[
0=\Re(\tr(HB_i(HB_j)^\dagger)),~5\leq i \leq 6,~7\leq j \leq 8.
\]
\begin{enumerate}
\item[3)]
As earlier, we start the sphere decoding with the matrix $R^{2,2}$ and go
through all the possibilites for the 8 last real symbols $g_9,\ldots,g_{16}$,
for a cost of $|S|^8$.
\item[4)]
For the first 8 symbols $g_1,\ldots,g_8$ corresponding to
the matrix $R^{1,1}$, we first separate $g_1,\ldots,g_4$ and
$g_5,\ldots,g_8$, after which we decode independently $\{g_1,g_2\}$,
$\{g_3,g_4\}$, $\{g_5,g_6\}$ and $\{g_7,g_8\}$, each of these costing
$|S|^2$.
\end{enumerate}
The worst case complexity is then $4|S|^8|S|^2=4|S|^{10}$.

\begin{remark}
It is possible to further reduce the (ML) complexity by using the so-called \emph{hard-limiting}, see \cite[Section VI, p. 924 (1-2)]{SR-fast}. In this case, the complexity will be  $4|S|^{4.5}$, where $|S|$ is the size of a complex signal constellation. However, this is only possible when a square constellation (\emph{e.g.} $Q^2$-QAM) can be employed, \emph{i.e.}, the constellation is a cartesian product of two real constellations (\emph{e.g.} $Q$-PAM). 
\end{remark}

\subsection{Examples from the ring of Hamiltonian quaternions}
\label{subsec:hamil}

To illustrate the material explained above, let us start with the
Alamouti code \cite{Alam}, \emph{i.e.}, codewords of the form
\[
X=\left(
\begin{array}{cc}
x_1 & -x_2^* \\
x_2 & x_1^*
\end{array}
\right)=
\left(
\begin{array}{cc}
g_1+ig_2 & -g_3+ig_4 \\
g_3+ig_4 & g_1-ig_2
\end{array}
\right),
\]
where $x_1,x_2$ are QAM symbols and $\gv=(g_1,g_2,g_3,g_4)$ is the PAM symbol 
vector.
A decomposition into basis matrices $B_1,B_2,B_3,B_4$ is given by
\[
X=g_1B_1+g_2B_2+g_3B_3+g_4B_4,
\]
where
\[
B_1=\left(
\begin{array}{cc}
1 & 0 \\
0 & 1
\end{array}
\right),~
B_2=\left(
\begin{array}{cc}
i & 0 \\
0 & -i
\end{array}
\right),
\]
\[
B_3=\left(
\begin{array}{cc}
0 & -1 \\
1 & 0
\end{array}
\right),~
B_4=\left(
\begin{array}{cc}
0 & i \\
i & 0
\end{array}
\right).
\]
We assume transmission through a MISO channel described by the vector
\[
H=(h_1,h_2)
\]
so that $\alpha(HB_i)$, $i=1,2,3,4$, is given by
\begin{eqnarray*}
\bv_1=\alpha(HB_1)&=&(\Re(h_1),\Im(h_1),\Re(h_2),\Im(h_2))^T,\\
\bv_2=\alpha(HB_2)&=&(-\Im(h_1),\Re(h_1),\Im(h_2),-\Re(h_2))^T,\\
\bv_3=\alpha(HB_3)&=&(\Re(h_2),\Im(h_2),-\Re(h_1),-\Im(h_1))^T,\\
\bv_4=\alpha(HB_4)&=&(-\Im(h_2),\Re(h_2),-\Im(h_1),\Re(h_1))^T.
\end{eqnarray*}
We finally get
\[
B=\alpha(HX)=[\bv_1,\bv_2,\bv_3,\bv_4],
\]
and since $\langle\bv_i,\bv_j \rangle=0$ for $i\neq j$, the QR decomposition
of $B$ is of the form
\[
B=\left(\frac{1}{c}B\right)\left(cI_4\right)=QR,
\]
where
\[
c=\sqrt{\Re(h_1)^2+\Im(h_1)^2+\Re(h_2)^2+\Im(h_2)^2}
\]
is a normalization factor which makes $Q$ orthonormal. The matrix $R$ is indeed
upper right triangular, with in fact only zeroes above its diagonal.
Thus the worst case decoding complexity of such a code is the size of the QAM alphabet, that is, of linear order.

Finding basis matrices with similar properties as those of the Alamouti code seems a difficult task.
The question is in general to find families of matrices $\{B_1,\ldots,B_K\}$ which are
{\em orthogonal} in the sense that $\langle \alpha(B_i), \alpha(B_j) \rangle =0 $, $i\neq j$,
and will keep this property even after multiplication by an arbitrary channel matrix $H$.
Let us start modestly and wonder whether we could find such a pair of matrices $B,B'\in M_n(\C)$
whose orthogonality will resist a channel matrix $H \in M_{k\times n}(\C)$,
where  $n\geq k$. Using (\ref{eq:retrprod}), we need to check that
\[
0=\langle \alpha(HB),\alpha(HB') \rangle=\Re(\tr(HB(HB')^\dagger)).
\]
As a first example, take
$$
B=
\begin{pmatrix}
x_1&0\\
0&x_1^*
\end{pmatrix}
\mbox{ and }
B'=
\begin{pmatrix}
0&-x_2^*\\
x_2&0
\end{pmatrix},
$$
where $x_1,x_2\in\CC$.  These two matrices clearly satisfy the orthogonality relation $ \langle \alpha(B),\alpha(B') \rangle=0$.
Now pick an arbitrary complex matrix
$$
H=
\begin{pmatrix}
h_1&h_2\\
h_3&h_4
\end{pmatrix}.
$$
A direct calculation shows that
\begin{eqnarray*}
&&\tr(HB(HB')^{\dagger})\\
&=&
x_1 h_1 x_2^* h_2^*
-h_2 x_2 h_1^*x_1^*
+x_1 h_3 x_2^* h_4^*
-x_2 h_4 h_3^*x_1^*\\
&=&
i\Im (x_1 h_1 x_2^* h_2^*)+i\Im(x_1 h_3 x_2^* h_4^*)
\end{eqnarray*}
so that
\[
\Re( \tr(HB(HB')^{\dagger}))=0,
\]
independently of the matrix $H$.

As a second example, consider
$$B=
\begin{pmatrix}
x_1&  0&   0&  0\\
0&  x_1*& 0&  0\\
0&  0&    x_3&  0\\
0 & 0&   0&  x_3*
\end{pmatrix},
$$
$$B'=
\begin{pmatrix}
0 &-x_2*&  0&  0\\
x_2 &  0&   0&  0\\
0 &  0&   0& -x_4*\\
0 &  0  & x_4&  0
\end{pmatrix}
$$
and
$$H=
\begin{pmatrix}
h_1 &h_2&  h_3&  h_4\\
h_5 &  h_6&   h_7&  h_8\\
\end{pmatrix}.
$$
We can similarly see that
$\Re(\tr(HB(HB')^{\dagger}))=0$.

The notable thing however is that both examples are closely
related to the Alamouti code (the first example being really included in it).
This is not a surprise, since most of the work available on fast ML decodability
tries to actually exploit the code structure. To pursue our
investigation on fast decodability, we now need to focus on algebraic
constructions of space-time lattice codes from division algebras.

%
%

\section{Space-time codes from division algebras}
\label{sec:cda}

\subsection{Background}
\label{background}
Since the work of Sethuraman et al. \cite{SRS}, a standard algebraic technique
to build space-time block codes is to use cyclic division algebras over number
fields (that is, finite extensions of the field $\Q$).
For the sake of completeness, we will start by recalling the formal definition
of a cyclic algebra, after which we will provide an illustrative example, rather
than redo the whole theory, which the reader can find in \cite{SRS}, or in the
tutorial \cite{OBV}.

\begin{definition}\label{cyclic}
Let $K$ be an algebraic number field and assume that $E/K$ is a cyclic Galois
extension of degree $n$ with Galois group
$\Gal(E/K)=\left\langle \sigma\right\rangle$. We can now define an associative $K$-algebra
$$
\mathcal{A}=(E/K,\sigma,\gamma)=E\oplus uE\oplus u^2E\oplus\cdots\oplus u^{n-1}E,
$$
where $u\in\mathcal{A}$ is an auxiliary generating element subject to the relations
$xu=u\sigma(x)$ for all $x\in E$ and $u^n=\gamma\in K^*$, where $K^*$ denotes
$K$ without the zero element. 

The element $\gamma$ is often called a \emph{non-norm} element due to its relation to the invertibility of the elements of $\mathcal{A}$. Namely, if there exists no element $x\in E$ such that its norm 
would be $\Nc_{E/K}(x)=\gamma^t$, where $t\in\Z_+$ is a proper divisor of $n$,
then $\mathcal{A}$ will be a division algebra \cite[Prop. 2.4.5]{Camithesis}. This result is a straightforward simplification of a theorem by Albert \cite{AA}.
\end{definition}

Space-time codewords are obtained by considering matrices of left multiplication by
an element of $\Ac$ in the above basis.

Let us see how the coding is done more concretely through an example. We first need
a number field  $E$ of degree $n$ whose Galois group is cyclic. For example, take
$\zeta_5=e^{2i\pi /5}$ a primitive 5th root of unity, and consider the number
field $E=\QQ(i,\zeta_5)$ over $K=\QQ(i)$, given by
\[
\QQ(i,\zeta_5)=\{x=a+b\zeta_5+c\zeta_5^2+d\zeta_5^3,~a,b,c,d \in \QQ(i)\}.
\]
It is of degree 4 (\emph{i.e.}, of dimension 4 as a vector space) over $\QQ(i)$.
Let us assume that we want to encode QAM symbols.
Since they can be seen as elements in $\ZZ[i]\subset \QQ(i)$, we
have that one element $x$ in $\QQ(i,\zeta_5)$ encodes 4 QAM symbols,
namely $a,b,c,d$, as linear combinations in the given basis.
The Galois group of $\QQ(i,\zeta_5)/\QQ(i)$ describes
maps that permute $\zeta_5$ and its conjugates $\zeta_5^j$, $j=2,3,4$ while
fixing $\QQ(i)$. If $\sigma(\zeta_5)=\zeta_5^2$, we have that
\[
\sigma^2(\zeta_5)=\zeta_5^4,~\sigma^3(\zeta_5)=\zeta_5^3,
~\sigma^4(\zeta_5)=\zeta_5
\]
yielding a cyclic Galois group. We now build an associative algebra $\Ac$
based on $E$. As a vector space, $\Ac$ can be seen as a sum of $n$ copies of
the chosen number field $E$ of degree $n$. In our example, this gives
\[
\Ac=\QQ(i,\zeta_5)\oplus u\QQ(i,\zeta_5)\oplus u^2\QQ(i,\zeta_5)
\oplus u^3\QQ(i,\zeta_5)
\]
where $\{1,u,u^2,u^3\}$ forms a basis and $\gamma=u^4$ must be an element of the base
field $\QQ(i)$, say $u^4=i$. A space-time block code can be obtained by
considering the matrix of left multiplication in this given basis. If
$x=x_0+ux_1+u^2x_2+u^3x_3 \in \Ac,~x_0,x_1,x_2,x_3\in\QQ(i,\zeta_5)$, then
its corresponding multiplication matrix is
\begin{equation}\label{eq:matrix}
X=
\left(
\begin{array}{rrrr}
x_0 & i\sigma(x_3)& i\sigma^2(x_2) & i\sigma^3(x_1)\\
x_1 & \sigma(x_0) & i\sigma^2(x_3) & i\sigma^3(x_2)\\
x_2 & \sigma(x_1) & \sigma^2(x_0)  & i\sigma^3(x_3)\\
x_3 & \sigma(x_2) & \sigma^2(x_1)  & \sigma^3(x_0)\\
\end{array}
\right)
\end{equation}
where the factor $i$ comes from $u^4=i$ and $\sigma^j$,
$j=1,2,3,4$, are the elements of the Galois group, appearing due to the
non-commutative multiplication defined on $\Ac$ by $xu=u\sigma(x)$ for
$x\in E$.

Let $\Cc$ be the codebook formed by codewords $X$ of the above form.
For it to be fully diverse, recall from (\ref{eq:fulldiv}) that it is enough to have
\[
\det(X'-X'')\neq 0
\]
for $X'\neq X''$ in $\Cc$, or equivalently, by linearity since we are
considering space-time lattice codes
\[
\det(X)\neq 0
\]
for $X\neq {\bf 0}$ in $\Cc$. This can be obtained by asking for $\Ac$ to be
a division algebra, property that depends on the choice of the value of
$\gamma$ (or $\gamma=i$ in our example).
If there exists no element $a\in \QQ(i,\zeta_5)$ such that its norm
is $i$ or $i^2$, \emph{i.e.}, $\Nc_{\QQ(i,\zeta_5)/\QQ(i)}(a)= i,$ or  $-1$,
then $\Ac$ will be a division algebra \cite{AA,Camithesis}.

Let us check that $\Ac$ is indeed a division algebra. Note for this purpose
that $\QQ(\zeta_5+\zeta_5^{-1})=\QQ(\sqrt{5})$ is a subfield of $\QQ(\zeta_5)$.
Suppose now that there exists an element $a\in E$ such that
$\Nc_{\QQ(i,\zeta_5)/\QQ(i)}(a)=i$, then, by transitivity of the norm
\[
\Nc_{\QQ(i,\zeta_5)/\QQ(i)}(a)
=\Nc_{\QQ(i,\sqrt{5})/\QQ(i)}\Nc_{\QQ(i,\zeta_5)/\QQ(i,\sqrt{5})}(a)=i,
\]
which implies the existence of an element
$b=\Nc_{\QQ(i,\zeta_5)/\QQ(i,\sqrt{5})}(a)$ such that
\[
\Nc_{\QQ(i,\sqrt{5})/\QQ(i)}(b)=i,
\]
a contradiction \cite{BRV}. 


The case of a norm of $-1$ is tougher though. However, there are several ways to deal with it.
We refer the reader to \cite[Section 8]{HLRV2}, where the proof used for the algebra $D_4$
can be used here verbatim.

We have thus constructed in our example a fully-diverse $(4\times 4)$
space-time code matrix. It furthermore has the non-vanishing
determinant property (see Definition \ref{def:NVD}), since the information symbols
are restricted to algebraic integers in $L$, and hence the minimum determinant
belongs to $\ZZ[i]$, yielding $\min_{X\neq {\bf 0}} |\det(X)|=1$ (cf. \cite{HLRV2}).

We conclude with two important invariants of central simple algebras. Central simple $K$-algebras
are algebras whose center is $K$ and which have only trivial two-sided ideals. Cyclic algebras are
particular cases of central simple algebras. We could have stated these definitions only for cyclic
algebras, but for the rest of this work, we will need them in more generality.
\begin{definition}
Let $\Ac$ be a central simple $K$-algebra. The {\em degree} of $\Ac$ is the integer
$\deg(\Ac)=\sqrt{\dim_K(\Ac)}$.
\end{definition}

\emph{Wedderburn's theorem} is a major theorem in the theory of central simple algebras,
which tells that every central simple algebra (and thus in particular every cyclic algebra)
is isomorphic to a matrix algebra over a central division $K$-algebra $\Dc$.

\begin{definition}
The {\em index} of $\Ac$ is the integer $\ind(\Ac)=\deg(\Dc)$ where $\Dc$ is the unique central
division $K$-algebra associated to $\Ac$ by Wedderburn's theorem.
\end{definition}

We have that $\ind(\Ac )~|~\deg(\Ac)$ and equality holds if and only if $\Ac$
is a division algebra.

\subsection{Examples}

Let us now consider a few well known examples of division algebra codes, and see how
they behave with respect to fast decodability.

The Alamouti code \cite{Alam} can be seen from an algebraic perspective as a cyclic division algebra
\begin{equation}\label{eq:Dalam}
\D_{Alam}=(\Q(i)/\Q,\sigma, -1),
\end{equation}
where $\sigma$ is the complex conjugation. This is a $\Q$-central division algebra of index $2$,
whose cyclic representation indeed yields codewords of the type
$$
\begin{pmatrix}
x_1&-x_2^*\\
x_2& x_1^*
\end{pmatrix},
$$
where $x_i$ are in $\Z[i]$ (that is, they are QAM symbols).

This algebra is more commonly known as the Hamiltonian quaternions
\[
\Hv =\{a+ib+jc+ijd\ |~a,b,c,d\in\RR\},
\]
\[
\mbox{ where }i^2=j^2=-1,~ij=-ji.
\]

Probably the most important property of this code is that, when used
over a MISO channel, its
worst case decoding complexity is linear, as was shown in Subsection \ref{subsec:hamil}.

Let us now consider the division algebra
\begin{equation}\label{eq:Dort}
\D_{ort}=(\Q(i, \sqrt{2})/\Q(\sqrt{2}), \sigma,-1)
\end{equation}
from \cite{HLL}. This is an index $2$ algebra with center $\Q(\sqrt{2})$.
It can be turned into a space-time code by mapping the element
$x=a_1+a_2\zeta_8 +u a_3 + u\zeta_8 a_4 \in \D_{ort}$ to a codeword
$X$ given by
\[
\left(
\begin{array}{cccc}
a_1+a_2 \zeta_8& -a_3^*-a_4^*\zeta_8^*& 0&0\\
a_3+a_4\zeta_8 & a_1^*+a_2^*\zeta_8^*  & 0&0\\
0              & 0& a_1-a_2\zeta_8 &-a_3^*+a_4^*\zeta_8^*\\
0              & 0& a_3-a_4\zeta_8 &a_1^*-a_2^*\zeta_8^*
\end{array}
\right),
\]
where  $a_j=g_{2j-1}+a_{2j}\in \Z[i]$, $j=1,2,3,4$. We can now write this in the form
\[
X=\sum_{j=1}^8g_jB_j,
\]
where $\gv=(g_1,\ldots,g_8)$ is the PAM symbol vector, and the basis matrices are
\[
B_1=\diag(1,1,1,1),~B_3=\diag(\zeta_8,\zeta_8^*,-\zeta_8,-\zeta_8^*),
\]
\[
B_2=\diag(i,-i,i,-i),~B_4=\diag(i\zeta_8,-i\zeta_8^*,-i\zeta_8,i\zeta_8^*),
\]
\[
B_5=\!
\begin{pmatrix}
0&-1 &   &   \\
1 &0 &   & \\
   &   &  0 &-1\\
   &   &  1 & 0
\end{pmatrix},
B_7=\!
\begin{pmatrix}
 0 &-\zeta_8^*    &   & \\
\zeta_8 & 0 &   & \\
  &   &  0 & \zeta_8^* \\
  &   &  -\zeta_8 & 0
\end{pmatrix},
\]
\[
B_6=\!
\begin{pmatrix}
 0 &i    &   & \\
  i & 0&   & \\
  &   & 0 &i  \\
  &   &  i & 0
\end{pmatrix},
B_8=\!
\begin{pmatrix}
0 & i\zeta_8^*   &   & \\
  i\zeta_8 & 0 &   & \\
  &   & 0 & -i\zeta_8^* \\
  &   &  -i\zeta_8 & 0
\end{pmatrix}.
\]
The decoding complexity of this code for a MISO channel is $2|S|^4$ instead of
the maximal complexity $|S|^8$. Indeed, write the channel $H=(h_1,h_2,h_3,h_4)$
as $(H_1,H_2)$ with $H_1=(h_1,h_2)$ and $H_2=(h_3,h_4)$, so that
\[
HB_i=
(H_1,H_2)
\left(
\begin{array}{cc}
B_i^{1,1} & {\bf 0}\\
{\bf 0} & B_i^{2,2}
\end{array}
\right)
=(H_1B_i^{1,1},H_2B_i^{2,2}),
\]
whence $\Re(\tr(HB_i(HB_j)^\dagger))$ simplifies to
\[
\Re(\tr(H_1B_i^{1,1}(B_j^{1,1})^\dagger H_1^\dagger)
+\tr(H_2B_i^{2,2}(B_j^{2,2})^\dagger H_2^\dagger)).
\]
The basis matrices are closely related to those of the Alamouti code
given in Subsection \ref{subsec:hamil}, and it is easy,
using the known orthogonality relations of the Alamouti basis matrices,
to see that
\[
\Re(\tr(HB_i(HB_j)^\dagger))=0,~i=1,2,3,4,~j=5,6,7,8,
\]
yielding an upper triangular matrix $R$ of the same form as in 
(\ref{eq:upperleft}), and consequently a decoding complexity of $2|S|^4$.

Our final example is the division algebra
$$
\A_2=(\Q(\sqrt{3})/\Q, \sigma, -1),
$$
where $\sigma(\sqrt{3})=-\sqrt{3}$. This algebra is of index 2 with center $\Q$, and  yields codewords of the form
$$
\begin{pmatrix}
x_1+x_2\sqrt{3}&-x_3 +x_4\sqrt{3}\\
x_3+x_4\sqrt{3}& x_1-x_2\sqrt{3}
\end{pmatrix},
$$
where $x_i \in \Z$.  However, as far as we know there is no existing method to
reduce the decoding complexity of this code.

\begin{table}
\begin{center} \begin{tabular}{|l|l|l|l|l|l|}
\hline
MISO code        & matrix        &  center      & index & $|S|^\kappa$ (real) & max $|S|^\kappa$ \\
\hline
$\Dc_{Alam}$&$(2\times2)$  & $\Q$           & 2   & $|S|$        & $|S|^4$      \\
$\Dc_{ort}$ &$(4\times4)$ & $\Q(\sqrt{2})$ & 2   & $|S|^4$      & $|S|^8$      \\
$\Ac_2$     &$(2\times2)$  & $\Q$           & 2   & $|S|^4$      & $|S|^4$\\
\hline
\end{tabular}
\caption{Code constructions: algebraic properties versus decoding complexity}
\label{taulukko3}
\end{center}
\end{table}

We already observed in Subsection \ref{subsec:hamil} that from the decoding perspective, it
might be beneficial for codes to inherit some of the special structure of the Alamouti code.
This study of different algebraic code structures seems to concur with the same conclusion,
expressed now in algebraic terms as: a code should be a subset of $M_k(\Hv)$ for some $k$.
However, which algebras exactly give fast decodability still seems unclear (see Table \ref{taulukko3}).
In the following section, we are going to answer  this question.

%
%

\section{Embedding codes into matrix rings of the Hamiltonian quaternions}\label{embed}

We have so far discussed fast decodability of space-time codes via sphere decoding, and
through several heuristic examples concluded that codewords in rings $M_k(\Hv)$, for some $k$ and
$\Hv$ the Hamiltonian quaternions, are prone to offer orthogonality relations that induce fast sphere
decoding. Therefore our main interest is now to study space-time codes that are subsets of the
rings $M_k(\Hv)$. This will be characterized by the ramification of the cyclic algebra over which
the space-time code is built.

\subsection{Embedding division algebras into $M_k(\Hv)$}\label{mainmapsection}

Let $K/\Q$ be an algebraic extension of degree $m$. We then have that
\[
m = r_1 + 2r_2,
\]
where $r_1$ is the number of real embeddings and $r_2$ the number of pairs of complex embeddings
of $K$. We call these embeddings the {\em infinite primes} of the field $K$ and the non-zero prime
ideals of the ring $\OO_K$ the {\em finite primes} of the field $K$. If the embedding is complex,
resp. real, we call it a {\em complex} resp. {\em real} prime.
To each prime $P$, finite or infinite, corresponds a local field $K_P$, obtained by completion
of $K$ with respect to the absolute value induced by $P$ (the same way $\RR$ is obtained from $\QQ$
by completion with respect to the usual absolute value).

Let $\A$ be a central division $K$-algebra of index and thus degree $n$. Consider
\[
\Ac_P = \Ac \otimes_K K_P
\]
a central simple $K_P$-algebra, which is known to be isomorphic to $M_r(\Dc)$ for some $r$ and some
central division $K_P$-algebra $\Dc$. We denote by $m_P$ the index of $\Ac_P$ and call it the
{\em local index} of $\Ac$ at $P$. We say that $P$ is ramified in $\Ac$ if $m_P>1$

Let us define the space $G(\C)_{n}\subseteq M_{n\times2n}(\CC)$ by
\[
G(\C)_{n}=\{ (B^*,B)\in M_{n\times2n}(\C)~|B\in M_n(\C)\}
\]
and $B^*=(b_{ij}^*)$.
Now $\Ac\otimes_\QQ\R$ is a semi-simple $\QQ$-algebra, and can thus be written as a Cartesian
product of simple subalgebras. Its center is $K\otimes_\QQ \RR$, which is isomorphic to copies
of $\RR$ or $\CC$: a copy of $\RR$ for each real embedding of $K$, and one of $\CC$ for each pair
of conjugate complex embeddings. The simple components of $\Ac\otimes_\QQ\R$ will thus have these
factors as centers, and will be either central simple algebras over $\RR$ or $\CC$: those over $\CC$
will be matrix algebras over $\CC$, while those over $\RR$ will be either matrix algebras over
$\RR$ if $\Ac$ is not ramified in the corresponding real prime, or matrix algebras over $\HH$ if $\Ac$
is ramified.
Formally, we obtain the isomorphism \cite{Bayer}
\begin{equation}\label{mainmap}
\Ac\otimes_\QQ\R \cong M_{n/2}(\Hv)^{\omega} \times M_{n}(\R)^{r_1 -\omega} \times G(\C)^{r_2},
\end{equation}
where $\omega$ is the number of real places where $\Ac$ ramifies. Therefore each element in $\Ac$ can
be seen as a concatenation of $\omega$ matrices in $M_n(\CC)$, $r_1-\omega$ matrices in $M_n(\RR)$ and
$r_2$ pairs of conjugate matrices in $M_n(\CC)$, or alternatively as a matrix in $M_{n\times nm}(\C)$,
recalling that $m=r_1+2r_2$.

The above isomorphism (\ref{mainmap}) implies an injection $\psi$
\begin{equation}\label{psikuvaus}
\Ac \hookrightarrow \diag(M_{n/2}(\Hv)^{\omega} \times M_{n}(\R)^{r_1 -\omega} \times G(\C)^{r_2}),
\end{equation}
where the diag-operator places the $i$th $(n\times n)$ block to the $i$th diagonal block of a matrix
in $M_{mn}(\C)$.
From (\ref{psikuvaus}), we now see that it is possible to embed a division algebra $\Ac$ into $M_k(\Hv)$ if and only if
\begin{equation}\label{eq:embed}
\psi:\Ac \hookrightarrow \mathrm{diag}(M_{n/2}(\Hv)^m),
\end{equation}
namely we must have $r_2=0$ and $r_1-\omega=0$. In summary, we have that
\begin{corollary}\label{cor:cond}
In order to be able to embed a division $K$-algebra $\Ac$ into $M_{n/2}(\Hv)$:
\begin{itemize}
\item
The center $K$ cannot have complex places, that is, it must be totally real ($r_1=m$).
\item
Combined with the equation $r_1-\omega=0$, we then have that $\omega=m$, so that all the infinite places
of $K$ must be ramified in $\Ac$.
\end{itemize}
\end{corollary}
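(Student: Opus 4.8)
The plan is to read off the two bullet points as a bookkeeping consequence of the decomposition (\ref{mainmap}) of $\Ac\otimes_\QQ\R$ together with the injection (\ref{psikuvaus}) it induces. Before starting I would fix the meaning of the statement: what is required is that the \emph{canonical} realization of $\Ac$ coming from (\ref{mainmap}) --- the one whose image is a lattice, and hence the only one from which a space-time lattice code can actually be built --- lands inside a matrix ring over $\Hv$; concretely, that (\ref{eq:embed}) holds, i.e. $\psi(\Ac)\subseteq\mathrm{diag}(M_{n/2}(\Hv)^m)$.

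First I would confirm that (\ref{psikuvaus}) really is an injection: $\Ac$ is a division algebra, hence a simple $\QQ$-algebra, so $a\mapsto a\otimes 1$ is a nonzero homomorphism out of a simple ring and is therefore injective; composing with (\ref{mainmap}) and the block-diagonal placement yields $\psi$. Next I would inspect the target of $\psi$: by (\ref{mainmap}) it is block-diagonal, with exactly $\omega$ blocks of type $M_{n/2}(\Hv)$, $r_1-\omega$ blocks of type $M_n(\R)$, and $r_2$ blocks of type $G(\C)$ (isomorphic to $M_n(\C)$). Hence the image of $\psi$ consists of block-quaternionic matrices --- that is, (\ref{eq:embed}) holds --- if and only if no $M_n(\R)$ block and no $G(\C)$ block occurs, since a real-split or a complex simple component is not of the quaternionic form the embedding demands: for instance $M_n(\R)$ and $M_{n/2}(\Hv)$ have equal $\R$-dimension but opposite classes in $\Br(\R)=\ZZ/2$, while $M_n(\C)$ has twice the $\R$-dimension. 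So the embedding exists exactly when $r_1-\omega=0$ and $r_2=0$.

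The last step is purely arithmetic. From $r_2=0$ and $m=r_1+2r_2$ we get $r_1=m$, i.e. $K$ is totally real --- the first bullet. Substituting $r_1=m$ into $r_1-\omega=0$ gives $\omega=m$, i.e. the number of real places where $\Ac$ ramifies equals the number of all infinite places of $K$ --- the second bullet.

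I expect the only real obstacle to be conceptual rather than computational: pinning down what \emph{embed into $M_k(\Hv)$} should mean. As a bare ring embedding the requirement is strictly weaker (one can always map $\Ac$ injectively into a single ramified real completion $\Ac_v\cong M_{n/2}(\Hv)$, but the image is then dense, not a lattice, and useless for coding), so I would make explicit at the outset that the embedding under consideration is the lattice-producing one of (\ref{mainmap})--(\ref{psikuvaus}); after that the block analysis above is decisive. Everything one needs about the shape of the real simple components --- Frobenius's theorem that $\R,\C,\Hv$ are the only finite-dimensional $\R$-division algebras, and $\Br(\C)=0$ versus $\Br(\R)=\ZZ/2$ --- is already built into the cited isomorphism (\ref{mainmap}), so no further algebra is required.
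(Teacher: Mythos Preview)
Your argument is correct and follows essentially the same route as the paper: the corollary is stated there as an immediate summary of the block decomposition (\ref{mainmap})--(\ref{psikuvaus}), observing that the $M_n(\R)$ and $G(\C)$ factors must be absent, i.e.\ $r_2=0$ and $r_1-\omega=0$, and then reading off $r_1=m$ and $\omega=m$. You supply more justification than the paper does (the injectivity of $\psi$, the Brauer-group reason the block types are genuinely distinct, and the clarification that ``embed'' means the lattice-producing $\psi$ rather than an arbitrary ring injection), but the underlying approach is identical.
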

Let us then suppose that $K$ is indeed a totally real number field.
We shall now give a simple family of cyclic $K$-algebras that fulfill the second condition above.

\begin{proposition}\label{CM}
Let $\A=(E/K, \sigma, \gamma)$ be a cyclic division algebra, where $E$ is a CM-field
(\emph{i.e.}, $E$ is a totally complex field containing a totally real field $E_1$ such that $[E : E_1] = 2$).
Let $\eta_1,\dots, \eta_m$ be the $\Q$-embeddings of  $K$. If $\eta_i(\gamma)$ is negative for any $\eta_i$,
then  all the infinite places of  $\A$ are ramified.
\end{proposition}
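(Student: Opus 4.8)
Recall that $K$ is assumed totally real, so its $m$ infinite places are exactly the real places attached to the embeddings $\eta_1,\dots,\eta_m$, and each completion $K_{v_i}$ of $K$ at the place $v_i$ of $\eta_i$ is isomorphic to $\RR$. It therefore suffices to prove the local statement: \emph{if $\eta_i(\gamma)<0$ then $\A\otimes_K K_{v_i}\cong M_{n/2}(\Hv)$}, which by the definition of ramification says $v_i$ is ramified in $\A$; since this applies to every $\eta_i$, all infinite places of $\A$ will be ramified. The plan is to compute the completion $\A\otimes_K K_{v_i}$ directly from the cyclic presentation $\A=\bigoplus_{j=0}^{n-1}u^jE$, with $n=[E:K]$.

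First I would describe $E$ locally. Since $E$ is a CM-field it is totally complex, while $K$ is totally real; hence complex conjugation restricts to an automorphism of $E$ that fixes $K$ and has order $2$, and being the unique involution of the cyclic group $\Gal(E/K)$ it equals $\sigma^{n/2}$ — in particular $n$ is even. Fixing $v=v_i$, every place $w$ of $E$ over $v$ has $E_w\cong\CC$, its decomposition group has order $[E_w:K_v]=2$ and hence equals $\langle\sigma^{n/2}\rangle$, and $\sigma$ permutes the $n/2$ places over $v$ in a single cycle. Consequently
\[
R:=E\otimes_K K_v\cong\CC^{\,n/2},
\]
$\widetilde\sigma:=\sigma\otimes 1$ permutes the $n/2$ factors $\CC$ cyclically, and $\widetilde\sigma^{\,n/2}$ — being induced by complex conjugation $\sigma^{n/2}$ — fixes each factor and acts on it as complex conjugation.

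Then I would unwind the cyclic presentation over $K_v$. Writing $c:=\eta_i(\gamma)\in\RR^{*}$,
\[
B:=\A\otimes_K K_v=\bigoplus_{j=0}^{n-1}u^jR,\qquad u^{n}=c,\qquad ur=\widetilde\sigma(r)\,u\ \ (r\in R)
\]
is central simple over $\RR$ with $\dim_\RR B=n^{2}$. Let $e_0,\dots,e_{n/2-1}$ be the primitive idempotents of $R$; after relabelling $\widetilde\sigma(e_k)=e_{k+1}$ (indices mod $n/2$), so $u^je_ku^{-j}=e_{k+j}$, and therefore $e_0 u^j R e_0=0$ for $0\le j\le n-1$ unless $j\in\{0,n/2\}$. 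Hence, with $\CC_0:=e_0R\cong\CC$ and $v:=u^{n/2}$,
\[
e_0Be_0=\CC_0\oplus v\,\CC_0,\qquad v^{2}=c,\qquad vz=\overline z\,v\ \ (z\in\CC_0),
\]
the last relation because $\widetilde\sigma^{\,n/2}$ acts as complex conjugation on $\CC_0$. Thus $e_0Be_0$ is the real quaternion algebra $(-1,c)_\RR$, which is isomorphic to $\Hv$ precisely when $c<0$. Finally, since $B$ is central simple over $\RR$ we have $B\cong M_t(\Dc)$ with $\Dc\in\{\RR,\CC,\Hv\}$, and a corner $eBe$ at a nonzero idempotent is a division ring iff $e$ is primitive, in which case $eBe\cong\Dc$. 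Since $e_0Be_0\cong\Hv$ is a division ring, $\Dc\cong\Hv$, and $4t^{2}=\dim_\RR B=n^{2}$ forces $t=n/2$; so $\A\otimes_K K_{v_i}\cong M_{n/2}(\Hv)$, i.e.\ $v_i$ ramifies. As $\eta_i(\gamma)<0$ holds for every $i$, all infinite places of $\A$ are ramified.

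I expect the only genuinely delicate step to be the local description of $E$ — namely that $\sigma^{n/2}$ is complex conjugation at every infinite place and that $\sigma$ cyclically permutes the places above $v$. Equivalently, one may short-circuit this by invoking the standard formula for the image of a cyclic class under the restriction map $\Br(K)\to\Br(K_v)$ at a real place, after which the statement reduces to the elementary fact that the quaternion algebra $(-1,c)_\RR$ is a division algebra if and only if $c<0$.
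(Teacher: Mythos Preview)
Your proof is correct and follows essentially the same route as the paper's: both reduce to computing $\A\otimes_K K_{v_i}$ at a real place $v_i$ and identify it, up to Brauer equivalence, with the quaternion algebra $(\C/\R,\text{conj},\eta_i(\gamma))$, which is $\Hv$ exactly when $\eta_i(\gamma)<0$. The paper obtains this identification in one line by invoking \cite[Theorem~30.8]{R} (the restriction formula for cyclic classes), whereas you unpack that step by hand: you write $E\otimes_K K_v\cong\C^{n/2}$ with $\sigma$ permuting the factors cyclically and $\sigma^{n/2}$ acting as complex conjugation on each, and then compute the corner $e_0Be_0$ at a primitive idempotent. Your version is more self-contained and actually yields the full isomorphism $B\cong M_{n/2}(\Hv)$ rather than just the Brauer class, at the cost of a few extra lines; you yourself note at the end the shortcut via the restriction map $\Br(K)\to\Br(K_v)$, which is precisely the paper's argument.
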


\begin{proof}
Let us suppose that $P_i$  is one of the infinite primes in the field $K$ and that $\eta_i$ is the corresponding $\Q$-embedding.
Let  $k$ be  the smallest possible positive power such that $\sigma^k$ fixes the totally real subfield  $E_1$ of $E$.
We then have \cite[Theorem 30.8]{R}
\begin{equation}\label{ramornot}
(E/K, \sigma, -\gamma)\otimes_{\QQ} K_{P_i} \sim (EK_{P_i}/K_{P_i},\sigma^k, -\eta_i(\gamma)),
\end{equation}
where $\sim$ refers to equivalence in the \emph{Brauer group}  $B(K_{P_i})$.
Because $P_i$ is a real prime, we can identify $K_{P_i}$ and $\R$, and similarly, $EK_{P_i}$ and $\C$,
so that from \eqref{ramornot}, we get  $\langle\sigma^k\rangle=\Gal(\C/\R)$.
Finally,
$$
(E/K, \sigma, -\gamma)\otimes_{\QQ} K_{P_i} \sim (\C/\R, \sigma^* , -\eta_i(\gamma)),
$$
where $\sigma^*$ is the complex conjugation and  $-\eta_i(\gamma)$ is a negative real number. The claim now follows
as $(\C/\R, \sigma^* , -\eta_i(\gamma))\cong \HH$.
\end{proof}

We point out that for rational numbers $r$ we have $\eta_i(r)=r$. Therefore a negative rational number is always
a suitable non-norm element if $\Ac$ is a division algebra.

\begin{exam}
The  algebras $\D_{ort}$ and $\D_{Alam}$ discussed above both fulfill  the conditions of Proposition \ref{CM}. Therefore $\D_{Alam}$ can be emebdded into $M_1(\Hv)=\Hv$ and $\D_{ort}$ into $M_2(\Hv)$.
\end{exam}

\subsection{Embedding space-time lattice codes into $M_k(\Hv)$}

We have given in Corollary \ref{cor:cond} the conditions for a division algebra $\Ac$ of index
$n$ to be embedded into $M_{n/2}(\Hv)$. To obtain a space-time lattice code, we need to select a discrete subset of
$\Ac$, namely one of its orders. We denote by $\Oc_K$ the ring of integers of $K$, and similarly by
$\Oc_E$ the ring of integers of $E$.
\begin{definition}\label{centerorder}
An $\OO_K$-order $\Lambda$ in $\A$ is a subring of $\mathcal{A}$, having the same identity
element as $\mathcal{A}$, and such that $\Lambda$ is a finitely generated module over $\OO_K$
and generates $\mathcal{A}$ as a linear space over $K$.
\end{definition}

This choice is motivated by the following example:

\begin{exam} \label{naturalorder}
Let $E/K$ be a cyclic extension of algebraic number fields and
$(E/K,\sigma,\gamma)$ be a cyclic division algebra,
with $\gamma\in K^*$  an algebraic integer. The ${\cal O}_K$-module
$$
\Lambda={\cal O}_E\oplus u {\cal O}_E\oplus\cdots\oplus u^{n-1}{\cal O}_E
$$
is a subring of the cyclic algebra $(E/K,\sigma,\gamma)$.
We refer to this ring as the {\it natural order} \cite{HLL}. Most space-time lattice codes built from
division algebras \cite{SRS,BORV} have been further restricted to this natural order.
\end{exam}

In theoretical considerations we will later mostly consider $\OO_K$-orders (where $K$ is the center) but the connection to coding theory is more visible if we consider $\OO_K$-orders as $\Z$-modules.
\begin{definition}\label{zorder}
A $\Z$-order $\Lambda$ in $\A$ is a subring of $\mathcal{A}$, having the same identity
element as $\mathcal{A}$, and such that $\Lambda$ is a finitely generated module over $\Z$
and generates $\mathcal{A}$ as a linear space over $\Q$.
\end{definition}

The ring $\Z$ is a principal ideal domain and therefore a $\Z$-order is not only finitely generated as a $\Z$-module, but it also has a $\Z$-basis. This basis is also a $\Q$-basis for the algebra $\mathcal{A}$. In particular a $\Z$-basis of an order in $\mathcal{A}$ has $ \mathrm{dim}_{\Q}(\mathcal{A})$ elements.

\begin{remark}
The ring $\OO_K$ is a finitely generated $\Z$-module. It is also known that $K$ is generated as a linear space over $\Q$. These results reveal that
any $\OO_K$-order is also a $\Z$-order.
\end{remark}

Let us again consider a general division algebra $\Ac$ having a center $K$, where
$[K:\Q]=m$, and let $\psi$ be the embedding of $\Ac$ defined in (\ref{psikuvaus}).

\begin{proposition}\label{alamring}
Let $\Lambda$ be a $\Z$-order of $\Ac$.
Then $\psi(\Lambda)$ is a $mn^2$ dimensional lattice in $M_{mn}(\C)$.  If
$$
\{a_1,\dots, a_{mn^2}\}
$$
is a $\Z$-basis of the order $\Lambda$, then
$$
\{\psi(a_1), \dots, \psi(a_{mn^2})\}
$$
is a $\Z$-basis of the lattice $\psi(\Lambda)$.

For any non-zero element of the order $\Lambda$, we have
$$
\mindet{\psi(\Lambda)}\geq 1.
$$
In particular $\psi(\Lambda)$ is a space-time lattice code that has the NVD
property (see Definition \ref{def:NVD}) and dimension rate $mn^2/mn= n$.
\end{proposition}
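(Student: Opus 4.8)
The plan is to establish the three assertions of Proposition \ref{alamring} in turn: the lattice structure of $\psi(\Lambda)$ together with the basis claim, the non-vanishing determinant bound, and finally the rate computation, which is then immediate. Throughout I will use the embedding $\psi$ from \eqref{psikuvaus} and the fact, recalled just before the proposition, that an $\OO_K$-order (and hence in particular every $\Z$-order) has a $\Z$-basis of $\dim_\Q(\Ac)=mn^2$ elements, since $\dim_K(\Ac)=n^2$ and $[K:\Q]=m$.

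First I would argue that $\psi(\Lambda)$ is a lattice. The map $\psi$ is $\Q$-linear and injective (it is the restriction of an algebra isomorphism, so in particular $\R$-linear and injective after tensoring with $\R$). Hence if $\{a_1,\dots,a_{mn^2}\}$ is a $\Z$-basis of $\Lambda$, then $\{\psi(a_1),\dots,\psi(a_{mn^2})\}$ is a $\Z$-linearly independent set in $M_{mn}(\C)$, and its $\Z$-span is exactly $\psi(\Lambda)$. To see that these $mn^2$ vectors are in fact $\R$-linearly independent — so that $\psi(\Lambda)$ is a genuine (full-rank-in-its-span) lattice of rank $mn^2$ and not a dense subgroup — I invoke that $\psi$ extends to the $\R$-algebra isomorphism \eqref{mainmap} composed with the diagonal embedding; since $\{a_1,\dots,a_{mn^2}\}$ is also an $\R$-basis of $\Ac\otimes_\Q\R$, its image under this isomorphism is an $\R$-basis of the image, whence the $\psi(a_i)$ are $\R$-independent. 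This gives a discrete subgroup of rank $mn^2$ in $M_{mn}(\C)$, i.e. a lattice, with the prescribed basis. Citing Proposition (the one stating an infinite discrete group of matrices is a lattice) also suffices once discreteness is noted.

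For the determinant bound, the key point is that for $x\in\Ac$ the block-diagonal matrix $\psi(x)$ has determinant equal to the product of the determinants of its $mn$ diagonal $(n\times n)$ blocks, and by the structure of \eqref{mainmap}, these blocks are: for each ramified real place a left-multiplication matrix over $\Hv$ realized as an $(n\times n)$ complex matrix whose determinant is the reduced norm at that place; for each unramified real place and each complex place the corresponding component of $x$. The product over all infinite places of these local determinants is (up to the standard bookkeeping in \cite{HLRV2,BORV}) the reduced norm $\mathrm{nrd}_{\Ac/K}(x)$ pushed down to $\Q$ via $\Nc_{K/\Q}$, and for $x$ a nonzero element of an order $\Lambda$ this lies in $\Z\setminus\{0\}$ and hence has absolute value $\geq 1$. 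Concretely: $\mathrm{nrd}(x)\in\OO_K$ because $x$ is integral over $\OO_K$, and $\Nc_{K/\Q}(\mathrm{nrd}(x))\in\Z$; nonzero because $\Ac$ is a division algebra, so $x$ invertible forces $\mathrm{nrd}(x)\neq0$. Therefore $|\det(\psi(x))| = |\Nc_{K/\Q}(\mathrm{nrd}_{\Ac/K}(x))| \geq 1$ for all $x\in\Lambda\setminus\{0\}$, giving $\mindet{\psi(\Lambda)}\geq 1$ and the NVD property. The rate statement is then immediate: the lattice has rank $mn^2$, the codewords are $(mn\times mn)$ — or in the relevant MIDO normalization $(n\times nm)$ — matrices, so $R_1 = mn^2/(mn) = n$.

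The main obstacle is the determinant computation in the middle step: one must check that the $(n\times n)$ complex block attached to a ramified real place really contributes $|\mathrm{nrd}|$ to $|\det\psi(x)|$ under the identification $M_{n/2}(\Hv)\hookrightarrow M_n(\C)$, i.e. that representing a quaternionic matrix by its standard complex $(2\times 2)$-block expansion squares the relevant norm exactly in the way needed so that the product over all infinite places telescopes to $\Nc_{K/\Q}(\mathrm{nrd}_{\Ac/K}(x))$. This is the classical relation between the reduced norm and the determinant of the (left-)regular representation — each local factor of $\psi(x)$ is a representation of $x$ whose determinant is a power of $\mathrm{nrd}$ matched to the local degree — and I would simply cite the treatment of reduced norms in \cite{R} (and the analogous computations in \cite{HLRV2}) rather than recompute it. Everything else (linearity, injectivity, integrality, the dimension count) is routine.
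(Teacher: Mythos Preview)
Your proposal is correct and in fact supplies more detail than the paper's own proof, which establishes the dimension count $\dim_\Q(\Ac)=mn^2$ and the rate computation directly but defers both the $\R$-linear independence of the $\psi(a_i)$ and the minimum determinant bound to the reference \cite{Bayer}. Your sketch of the reduced-norm argument, $|\det(\psi(x))|=|\Nc_{K/\Q}(\mathrm{nrd}_{\Ac/K}(x))|\geq 1$ for nonzero $x\in\Lambda$, is exactly the standard mechanism underlying that citation, so the two approaches coincide in substance.

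One small wording issue worth cleaning up: when you say the quaternionic block ``squares the relevant norm'' and that each local factor contributes ``a power of $\mathrm{nrd}$,'' this is slightly off. For a ramified real place the $(n\times n)$ complex block coming from $M_{n/2}(\Hv)\hookrightarrow M_n(\C)$ has determinant equal to the reduced norm (not its square), and more generally each block contributes the image $\sigma_i(\mathrm{nrd}_{\Ac/K}(x))$ under the corresponding embedding $\sigma_i:K\hookrightarrow\C$, not a power of it; the product over all blocks is then literally $\Nc_{K/\Q}(\mathrm{nrd}_{\Ac/K}(x))$. Your conclusion is unaffected, but the intermediate description should be adjusted.
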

\begin{proof} The $\Z$-basis of $\Lambda$ has $\mathrm{dim}_{\Q}(\Ac)$ elements.
We have that $\Ac$ is of index $n$ and thus degree $n$, so it is of dimension
$n^2$ over  the center $K$. The center $K$ on the other hand is an $m$-dimensional $\Q$-vector space.
Overall we get that $\mathrm{dim}_{\Q}(\Ac)=mn^2$. Let us now consider a $\Z$-basis  $\{a_1,\dots, a_{mn^2}\}$
of $\Lambda$. While it is clear that the set $\{\psi(a_1), \dots, \psi(a_{mn^2})\}$ does generate $\psi(\Lambda)$,
it is not directly obvious that $\psi(a_1), \dots, \psi(a_{mn^2})$ are linearly independent over $\R$.
For this result and for the claim on $\mindet{\psi(\Lambda)}$, we refer the reader to \cite{Bayer}.

According to Definition \ref{def:R1}, the dimension rate $R_1$ for the code $\psi(\Lambda)$ is given by
$$
R_1=\frac{\dim_\RR(\psi(\Lambda))}{nm}=\frac{mn^2}{nm}=n
$$
dimensions per channel use.
\end{proof}

\begin{remark}
Due to the above connection between an order and a lattice, we may equally call a lattice code an \emph{order code}.
\end{remark}

If we now concentrate on codes that are embeddable into $M_k(\HH)$, we need to restrict to a $K$-central division algebra $\Ac$ of index $n$, where $K$ is totally real and all the infinite places are ramified.
We then get from (\ref{eq:embed}) an embedding
\[
\psi:\Ac \hookrightarrow \diag(M_{n/2}(\Hv)^m)\subset \diag(M_n(\CC)^m).
\]
By taking an order $\Lambda \subset \Ac$, we get a lattice code
$$
\psi(\Lambda)=\Z A_1\oplus\cdots \oplus \Z A_{mn^2} \subset M_{nm}(\C),
$$
where $A_i \in M_{nm/2}(\Hv)$, $i=1,\ldots mn^2$, forms a $\ZZ$-basis of the lattice.
Its dimension rate is similarly $n$.
It is clear that forcing a space-time code to be embedded in $M_{n/2}(\Hv)$
imposes an extra constraint. The next result characterizes this constraint in terms of the dimension rate.

\begin{proposition}\label{hamilrate}
Let us suppose that we have a lattice space-time code
$\Cc\subset M_{k}(\C)\cap M_{k/2}(\HH)$, where $k$ is even. We then have that
$$
\dim_{\R}(\Cc)\leq k^2.
$$
Consequently, the dimension rate $R_1$ of $\Cc$ as given in Definition \ref{def:R1} is at most $k$.
\end{proposition}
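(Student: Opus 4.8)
The plan is to compute the real dimension of the ambient space $M_{k/2}(\HH)$ viewed as a real vector space, and then observe that $\Cc$, being a lattice sitting inside this space, can have rank at most that dimension. First I would recall that the Hamiltonian quaternions $\HH$ form a $4$-dimensional division algebra over $\R$, so that $M_{k/2}(\HH)$, the ring of $(k/2\times k/2)$ matrices over $\HH$, has real dimension $4\cdot(k/2)^2 = k^2$. The subtlety is that $\Cc$ is described as a lattice in $M_k(\C)$ that happens to lie in the image of $M_{k/2}(\HH)$ under the standard embedding $M_{k/2}(\HH)\hookrightarrow M_k(\C)$ (the one used implicitly in Section~\ref{embed}, sending a quaternion $a+bi+cj+dk$ to a $2\times 2$ complex block and extending entrywise). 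Since this embedding is $\R$-linear and injective, its image is an $\R$-subspace of $M_k(\C)$ of real dimension exactly $k^2$, and $\Cc\subset M_k(\C)\cap M_{k/2}(\HH)$ means precisely that $\Cc$ lies in this $k^2$-dimensional subspace.

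Next I would invoke the definition of a space-time lattice code: $\Cc=\Z B_1\oplus\cdots\oplus\Z B_K$ with the $B_i$ linearly independent over $\R$, so $\dim_\R(\langle\Cc\rangle)=K$ equals the rank of the lattice. A set of $\R$-linearly independent vectors inside an $\R$-vector space of dimension $k^2$ can contain at most $k^2$ elements, hence $K=\dim_\R(\Cc)\le k^2$. Finally, plugging into Definition~\ref{def:R1}, the dimension rate is $R_1 = K/n$ where here the number of transmit antennas is $n = k$ (the code matrices are $(k\times k)$, or rather the relevant channel-use count is $k$), so $R_1 = K/k \le k^2/k = k$, which is the claimed bound.

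The main thing to be careful about — and the only real obstacle — is pinning down exactly which copy of $M_{k/2}(\HH)$ inside $M_k(\C)$ is meant, and checking that the relevant map is genuinely $\R$-linear and injective so that dimensions transfer cleanly; this is the content of the isomorphism~(\ref{mainmap}) and the injection~(\ref{psikuvaus}) established earlier, so I would simply cite those. Everything else is the elementary linear-algebra fact that an $\R$-linearly independent family inside a finite-dimensional $\R$-space has size bounded by the dimension, combined with $\dim_\R \HH = 4$. No deep input is needed beyond correctly identifying the ambient dimension as $4(k/2)^2 = k^2$.
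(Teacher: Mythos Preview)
Your proposal is correct and follows essentially the same approach as the paper's proof: both compute $\dim_\R M_{k/2}(\HH) = 4(k/2)^2 = k^2$ by noting that each of the $(k/2)^2$ quaternionic blocks contributes $4$ real dimensions, and the bound on the lattice rank (and hence on $R_1$) follows immediately. Your version is slightly more explicit about the $\R$-linear injective embedding $M_{k/2}(\HH)\hookrightarrow M_k(\C)$ and the linear-algebra step bounding the rank, but the content is the same.
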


\begin{proof}
We can see that, as a subspace in $M_2(\C)$, the ring of Hamiltonian quaternions has degree $4$.
Each matrix in $M_{k/2}(\Hv)$ consist of
$(k/2)^2$ freely chosen $(2\times 2)$ blocks that have the inner structure of Hamiltonian quaternions. Therefore we have
$$
\mathrm{dim}_{\R}(M_{k/2}(\Hv))=4\left(\frac{k}{2}\right)^2=k^2.
$$
\end{proof}

If we compare the rate  $n$ of $\psi(\Lambda)$ with this result, we get $n$ versus $nm$, where $m=[K:\QQ]$.
There is thus a trade-off between fast decodability and rate. However, by choosing the center of the algebra $\Ac$ to be $\Q$,
we can  meet the optimal dimension rate of Proposition  \ref{hamilrate}.

\begin{remark}
We warn the reader here. The theory developed so far is not explicit in a sense that while it does give us a good description
of how to construct the needed division algebras (see Proposition \ref{CM}), we have not given  an explicit method to produce the embedding (\ref{psikuvaus}). In particular, we have no guarantee that the left regular representation would have anything to do with the embedding (\ref{psikuvaus}).  In Section \ref{explicit} and the following parts of the paper,   we will show that there are methods to  overcome this problem and that the left regular representation can work  as a good  starting point.

\end{remark}
%
%

\section{Bounds and existence results for matrix lattices in $M_k(\HH)$}
\label{bounds}

So far, we have given conditions for a division central $K$-algebra $\Ac$ to be embedded into $M_k(\HH)$ 
and shown how to obtain fast-decodable space-time lattice codes from orders of $\Ac$. 
In this section we are going to give bounds and existence results for such codes, taking into
account an extra code design criterion, namely the normalized minimum determinant of a lattice
code.

\subsection{Normalized minimum determinant of an order code}
\label{mindetorder}
The minimum determinant $\mindet{\Cc}$ is a widely used concept to predict the performance of a finite space-time code $\Cc$, since it determines its coding gain.
In order to compare two finite space-time codes $\Cc_1,\Cc_2 \,\in M_n(\C)$,
one must first  check that
\begin{itemize}
\item
both codebooks have equal number of elements: $|\Cc_1|=|\Cc_2|$ and
\item
both codes are scaled so that the maximum power used is equal:
$\max\{||A||_F^2 \,|\, A\in C_1\}= \max\{||B||_F^2 \,|\, B\in C_2\}$.
\end{itemize}

In the case of infinite lattice codes, due to the discreteness of the set, a non-zero minimum determinant automatically yields the NVD property. Among two NVD codes using the same maximum power, the one with  higher minimum
determinant  will have better coding gain for the infinite lattice, and will thus provide us with a
bound on the coding gain of any finite constellation carved from it. Now given an infinite  space-time
lattice code $\Cc$, a number $R$ of codewords, and a fixed power constraint,
there are different ways to pick a finite constellation that may lead to
different coding gains. 

The two most typical encoding methods are linear dispersion encoding (cf. the discussion underneath Equation \eqref{eq:CG}) and spherical encoding. These encoding methods usually result in different  constellation shaping, that can be either  
cubic (more generally orthogonal) shaping, provided the lattice is orthogonal to start with, or spherical shaping. The two possible shapes  are described below in more detail.

{\bf Spherical shaping.}
Just as for Gaussian channels, the most energy efficient way to choose codewords from a given lattice is to use spherical shaping. This means
that we choose the needed number of lowest energy codewords from the space-time lattice code $\Cc$ and then scale the finite code $\Cc(r)$ given by
\begin{equation}\label{spherical}
\Cc(r)=\{\,A \,|\, A\in \Cc, ||A||_F\leq r\}\subset \Cc
\end{equation}
to meet the power constraint, where $r$ depends on the number $R$ of wanted
codewords. For large code sizes, this approach will roughly give lattice points inside a $K$-sphere, where $K$ is the rank of the code lattice (=number of dispersion matrices).

To fairly compare  two finite codes $\Cc_1(r)$  and $\Cc_2(r)$, one should first
scale them so that both the lattices have a fundamental parallelotope of volume 1. 	
Since we consider a space-time lattice code $\Cc\in M_n(\CC)$, to define its volume we first map it to $\RR^{2n^2}$ via $\alpha$, yielding the lattice
$\alpha(\Cc)$ whose basis is $\{\alpha(B_1),\ldots,\alpha(B_K)\}$, obtained from the basis $\{B_1,\ldots,B_K\}$ of $\Cc$. The generator matrix $M$ of
$\alpha(\Cc)$ is $M=(\alpha(B_1),\ldots,\alpha(B_K))$, where $\alpha(B_i)$ are column vectors, and we define the measure (or volume) $m(\Cc)$ of the fundamental parallelotope of the space-time
lattice $\Cc$ by
\[
m(\Cc)^2=\det(MM^T)=\det(\left(\Re \tr(B_iB_j^\dagger)\right)_{1\le i,j\le K}.
\]
To combine the notion of minimum determinant with that of scaling the volume
of the lattice to evaluate the performance of finite constellations, we use
the notion of {\em normalized minimum determinant} $\delta(\Cc)$, obtained  by first scaling the lattice $\Cc$ to have a unit size fundamental parallelotope and then taking the minimum determinant of the resulting scaled lattice. A simple computation proves the following.
\begin{lemma}\label{scale}
Let $\Cc$ be a $K$-dimensional space-time lattice
in $M_n(\C)$. We then have that
$$
\delta(\Cc) =\mindet{\Cc}/(m(\Cc))^{n/K}.
$$
\end{lemma}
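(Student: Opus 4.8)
The plan is to compute how the minimum determinant of a lattice scales when the lattice is rescaled to have unit covolume, and to relate the covolume of the rescaled lattice to $m(\Cc)$. First I would note that scaling a lattice $\Cc$ by a positive real $t$, i.e.\ replacing each codeword $X$ by $tX$, multiplies every determinant $\det(tX) = t^n \det(X)$ since the codewords are $(n\times n)$ matrices; hence $\mindet{t\Cc} = t^n\,\mindet{\Cc}$. At the same time, under $\alpha$ the generator matrix becomes $tM$, so the volume of the fundamental parallelotope satisfies $m(t\Cc) = t^{K} m(\Cc)$, because $M$ has $K$ columns (the lattice has rank $K$) and scaling multiplies $\det(MM^T)$ by $t^{2K}$, hence $m(t\Cc) = t^K m(\Cc)$.

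Next I would pick the particular scaling factor $t$ that makes the rescaled lattice have unit covolume, namely $t = m(\Cc)^{-1/K}$, so that $m(t\Cc) = t^K m(\Cc) = 1$. By definition $\delta(\Cc)$ is the minimum determinant of this rescaled lattice, so
\[
\delta(\Cc) = \mindet{t\Cc} = t^n\,\mindet{\Cc} = m(\Cc)^{-n/K}\,\mindet{\Cc} = \mindet{\Cc}/(m(\Cc))^{n/K},
\]
which is exactly the claimed formula.

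The only subtlety — and the main thing to check carefully rather than an obstacle — is bookkeeping of exponents: one must be consistent about $n$ being the matrix size (so determinants pick up $t^n$) versus $K$ being the lattice rank (so covolume picks up $t^K$), and recall that $m(\Cc)$ is defined as the square root of $\det(MM^T)$, so the factor accrued under scaling is $t^K$, not $t^{2K}$. Once these are pinned down the computation is immediate, which is why the statement is labelled a lemma proved by "a simple computation."
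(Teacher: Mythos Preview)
Your argument is correct and is precisely the ``simple computation'' the paper alludes to but omits: scale by $t=m(\Cc)^{-1/K}$ so the covolume becomes $1$, and track that $\mindet{t\Cc}=t^n\,\mindet{\Cc}$. There is nothing to add.
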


The normalized minimum determinant predicts which lattice is likely to
produce the finite codes with the biggest minimum determinants, while using spherical shaping.

{\bf Cubic shaping.}
 We also consider another kind of shaping, called cubic or orthogonal shaping.
\begin{definition}
We say that a  space-time lattice $\Cc$ in $M_n(\C)$ is orthogonal or rectangular
if the corresponding real lattice $\alpha(\Cc)$ has a basis that is orthogonal according to the normal inner product of the space
$\R^{2n^2}$. If each of of the basis vectors are of  equal length, we say that $\Cc$ is orthonormal.
\end{definition}

When the lattice is orthogonal, there is no point of employing spherical shaping (\ref{spherical}), for we get the same result by using simple linear dispersion encoding (see the remark in the end of this section) as described after Equation  \eqref{eq:CG}.
 
One can get bounds for the normalized minimum determinant also  in the case of cubic shaping,
as for example:
\begin{proposition}\cite{jyrkiroopeAAECC}\label{optimality}
Let us suppose that $\Cc$  is an orthogonally shaped 16-dimensional space-time lattice code in $M_4(\CC)$.
We then have that
$$
\delta(\Cc) \leq \frac{1}{16}=0.0625.
$$

\end{proposition}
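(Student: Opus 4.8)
The plan is to use the scale-invariance of the normalized minimum determinant to reduce to a unit-volume lattice, and then to bound the determinant of a short basis matrix by its Frobenius norm via Hadamard's inequality combined with the arithmetic--geometric mean inequality. First I would record that $\delta(\Cc)$ is unchanged under the homothety $\Cc \mapsto c\Cc$ with $c>0$: a $(4\times 4)$ determinant is homogeneous of degree $4$, while the measure $m(\Cc)$ built from the Gram matrix of $\alpha(\Cc)$ is homogeneous of degree $K=16$ in the basis, so with $n/K = 4/16 = 1/4$ Lemma~\ref{scale} gives $\delta(c\Cc) = c^4\,\mindet{\Cc}/(c^{16}m(\Cc))^{1/4} = \delta(\Cc)$. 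Hence we may rescale so that $m(\Cc)=1$, in which case $\delta(\Cc)=\mindet{\Cc}$ and it suffices to show $\mindet{\Cc}\le 1/16$.

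Next I would use the hypothesis that $\Cc$ is orthogonally shaped: choose a $\ZZ$-basis $B_1,\dots,B_{16}$ of $\Cc$ whose images $\alpha(B_1),\dots,\alpha(B_{16})$ are pairwise orthogonal in $\RR^{32}$. The Gram matrix $(\Re\tr(B_iB_j^\dagger))_{i,j}$ is then diagonal with entries $\|\alpha(B_i)\|_E^2 = \|B_i\|_F^2$ by the isometry~\eqref{eq:alphaiso}, so $m(\Cc)^2 = \prod_{i=1}^{16}\|B_i\|_F^2 = 1$. Since the $\|B_i\|_F$ are positive with product $1$, at least one basis matrix $B_j$ satisfies $\|B_j\|_F \le 1$.

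Finally I would invoke the elementary bound relating determinant and Frobenius norm on $M_4(\CC)$: writing $c_1,\dots,c_4$ for the columns of $B_j$, Hadamard's inequality gives $|\det B_j|^2 \le \prod_{\ell=1}^4 \|c_\ell\|^2$, and AM--GM on these four nonnegative numbers gives $\prod_{\ell=1}^4\|c_\ell\|^2 \le \bigl(\tfrac14\sum_{\ell=1}^4\|c_\ell\|^2\bigr)^4 = \bigl(\tfrac14\|B_j\|_F^2\bigr)^4$, hence $|\det B_j| \le \|B_j\|_F^4/16$. As $B_j\ne 0$ lies in $\Cc$, we get $\mindet{\Cc}\le |\det B_j| \le \|B_j\|_F^4/16 \le 1/16$, which together with the first step yields $\delta(\Cc)\le 1/16$.

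I do not expect a genuine obstacle: the argument is short. The only points needing care are the exponent bookkeeping in the scaling step (so that $\delta$ is verified to be truly scale-invariant) and using precisely the right form of the determinant inequality, since the constant $\tfrac{1}{16}=(\tfrac14)^{4/2}$ is exactly dictated by the matrix size $4$ and a weaker bound would be useless here. For context it is also worth observing that equality forces every $\|B_i\|_F$ to equal $1$ and $B_j$ to be a scalar multiple of a unitary matrix, which identifies the extremal lattices as the orthonormal ones.
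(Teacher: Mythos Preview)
The paper does not supply its own proof of this proposition; it is quoted from \cite{jyrkiroopeAAECC}. Your argument is correct and self-contained: after normalizing to unit volume, the orthogonality hypothesis gives $\prod_{i=1}^{16}\|B_i\|_F = 1$, so some basis matrix has $\|B_j\|_F \le 1$, and the sharp inequality $|\det X|\le \|X\|_F^{4}/16$ for $X\in M_4(\CC)$ (Hadamard combined with AM--GM) then yields $\mindet{\Cc}\le 1/16$. The scaling bookkeeping and the equality-case remark are also accurate.
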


In the particular case where $\Cc$ is an order code, that is
$\Cc=\psi(\Lambda)$, with $\Lambda$ an order of an index $n$ division algebra
$\A=(E/K, \sigma, \gamma)$ and $[K:\Q]=m$, we know from Proposition
\ref{alamring} that $\psi(\Lambda)$ is an $mn^2$-dimensional lattice in
$M_{mn}(\C)$ with $\mindet{\psi(\Lambda)}= 1$, so that
\[
\delta(\psi(\Lambda)) = 1 /(m(\Cc))^{1/n}
\]
and the normalized minimum determinant only depends on the volume of the fundamental parallelotope of the order code.

\begin{remark} Note that the fact whether one uses linear dispersion encoding (i.e., a symmetric coefficient set) or spherical shaping (i.e., an optimized coefficient set) has nothing to do with the shape of the original lattice. Even though the lattice is not orthogonal, we can employ both encoding methods. If the lattice is not badly skewed, then the difference between the two methods is usually not very big, whereas for highly skewed lattices one may see a gap of several dBs.

For orthogonal lattices, both methods will give the same result, provided that the target constellation size is suitable for a symmetric coefficient set to start with. 
\end{remark}

\subsection{Bounds and existence results}

Since the normalized minimum determinant of an order code only depends on
the volume of its fundamental parallelotope, one may wonder whether,
given a center $K$, it is possible to find the smallest volume an order inside any division algebra of a given index $n$ can have.

To answer this question, we first further characterize the volume of the order
by connecting it to an invariant of the order.

\begin{proposition}\cite{Bayer}\label{volume}
Let $\Lambda$ be a $\ZZ$-order in $\A$ and let $\psi$ be the embedding (\ref{psikuvaus}). We then have that 
\begin{align*}
m(\psi(\Lambda))= \sqrt{|d(\Lambda/\Z)|},
\end{align*}
where $d(\Lambda/\Z)$ is the $\Z$-discriminant of the order $\Lambda$ (see \cite{R,HLRV2} for an exact definition),
and further that
$$
\delta({\psi(\Lambda)})=\left(\frac{1}{|d(\Lambda/\Z)|}\right)^{1/2n}.
$$
\end{proposition}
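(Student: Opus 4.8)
The plan is to reduce the statement to standard facts about discriminants of orders under the isometric embedding $\psi$, which is essentially the content of \cite{Bayer}. First I would recall the setup: $\Lambda$ is a $\Z$-order in $\Ac$, so by Proposition \ref{alamring} it has a $\Z$-basis $\{a_1,\dots,a_{mn^2}\}$, and $\psi(\Lambda)$ is an $mn^2$-dimensional lattice in $M_{mn}(\C)$ with $\Z$-basis $\{\psi(a_1),\dots,\psi(a_{mn^2})\}$. By the definition of the volume of a space-time lattice given just before Lemma \ref{scale}, we have
\[
m(\psi(\Lambda))^2 = \det\left(\left(\Re\tr(\psi(a_i)\psi(a_j)^\dagger)\right)_{i,j}\right).
\]
So the first real task is to identify the Gram matrix $\left(\Re\tr(\psi(a_i)\psi(a_j)^\dagger)\right)_{i,j}$ with (plus or minus) the discriminant matrix of the order $\Lambda$.

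The key step is therefore to show that the symmetric bilinear form $(x,y)\mapsto \Re\tr(\psi(x)\psi(y)^\dagger)$ on $\Ac$ agrees, up to a sign that does not affect the absolute value of the determinant, with the reduced-trace form $(x,y)\mapsto \tr_{\Ac/\Q}(xy)$ that defines the $\Z$-discriminant $d(\Lambda/\Z)$. The point is that $\psi$ is built from the decomposition \eqref{mainmap} of $\Ac\otimes_\Q\R$ into matrix algebras over $\R$, $\C$, and $\HH$; on each such factor the map $B\mapsto \Re\tr(B B^\dagger)$, polarized, recovers the reduced trace form of that simple real algebra (for $M_r(\R)$ and $M_r(\C)$ this is the ordinary Frobenius/trace pairing, and for $M_{r}(\HH)$ the Hermitian conjugate plays the role of the quaternionic conjugate-transpose). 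Since the reduced trace form on $\Ac$ is the sum of the reduced trace forms on the real simple components (this is exactly how $\tr_{\Ac/\Q}$ decomposes after tensoring with $\R$), the two bilinear forms coincide up to the conjugation bookkeeping. Taking determinants in the $\Z$-basis $\{a_i\}$ then gives $m(\psi(\Lambda))^2 = |d(\Lambda/\Z)|$, which is the first claimed identity. I would simply cite \cite{Bayer} for the detailed verification, as the excerpt already does for the linear-independence claim in Proposition \ref{alamring}.

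The second formula is then immediate bookkeeping. By Lemma \ref{scale} with $n$ replaced by the matrix size $mn$ and $K=mn^2$,
\[
\delta(\psi(\Lambda)) = \frac{\mindet{\psi(\Lambda)}}{m(\psi(\Lambda))^{(mn)/(mn^2)}} = \frac{\mindet{\psi(\Lambda)}}{m(\psi(\Lambda))^{1/n}}.
\]
Proposition \ref{alamring} gives $\mindet{\psi(\Lambda)} = 1$ (the minimum determinant is at least $1$, and it is standard that it equals $1$ for an order code over a center with $\delta$-normalization; in any case the excerpt asserts it), so
\[
\delta(\psi(\Lambda)) = m(\psi(\Lambda))^{-1/n} = \left(|d(\Lambda/\Z)|^{1/2}\right)^{-1/n} = \left(\frac{1}{|d(\Lambda/\Z)|}\right)^{1/2n},
\]
as claimed.

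The main obstacle is genuinely the first step: matching $\Re\tr(\psi(x)\psi(y)^\dagger)$ with the reduced trace form $\tr_{\Ac/\Q}(xy)$ through the semisimple decomposition \eqref{mainmap}. One must be careful that the Hermitian conjugate $\dagger$ on the $M_{n/2}(\HH)$ and $G(\C)$ blocks does not distort the pairing — concretely, that $\Re\tr(A A^\dagger)$ restricted to a quaternionic block equals the norm form coming from the reduced trace on $M_{n/2}(\HH)$, and that on a complex-conjugate pair of blocks in $G(\C)$ the two conjugate copies combine to give exactly the reduced trace of the corresponding complex simple factor (a factor of $2$ and the real-part operation must balance). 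Once this local identification is in place on each factor and one invokes additivity of the reduced trace over the components of $\Ac\otimes_\Q\R$, the global identity of Gram matrices — and hence the discriminant formula — follows by taking the determinant in the common $\Z$-basis $\{a_i\}$. Since these verifications are carried out in \cite{Bayer}, in the paper I would state them as consequences of that reference rather than reproduce them.
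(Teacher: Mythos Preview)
Your proposal is correct and matches the paper's treatment: the paper itself gives no proof for this proposition but simply attributes it to \cite{Bayer}, and you do the same for the Gram-matrix/discriminant identification while spelling out the routine derivation of the second formula from Lemma \ref{scale} together with $\mindet{\psi(\Lambda)}=1$ (which follows since $1\in\Lambda$ gives determinant $1$, so the inequality in Proposition \ref{alamring} is an equality). The only point worth tightening is your passing remark that the two bilinear forms agree ``up to a sign'': in fact the Hermitian Gram form is positive definite while the reduced-trace form need not be, so what one really shows (as in \cite{Bayer}, and as the paper does explicitly in the related Proposition \ref{thm:conjugation}) is that the \emph{determinants} agree in absolute value after exploiting the block structure, not that the forms themselves coincide.
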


Clearly the smaller the absolute value of the $\Z$-discriminant of an order is, the greater the normalized minimum determinant will be.

Inside a given algebra the $\Z$-orders having the smallest possible discriminant are called  \emph{maximal orders}. All the maximal orders of a given division algebra share the same discriminant.

While each $\OO_K$-order is also $\Z$-order, the opposite does not have to be true. However if a $\Z$-order $\Lambda$ also is an $\OO_K$-module, it is an
$\OO_K$-order and its $\OO_K$-discriminant $d(\Lambda/\OO_K)$ is related to
its $\Z$-discriminant by the following transitivity formula:
\begin{lemma}\label{disktorni}
Let $\A$ be a $K$-central division algebra of index $n$ and let $\Lambda$ be an $\OO_K$-order.
If $\Lambda$ is a $\Z$-order in $\A$, then
$$
d(\Lambda/\Z)=\Nc_{K/\Q}(d(\Lambda/\OO_K)) d(\OO_K/\Z)^{n^2},
$$
where $d(\OO_K/\Z)$ is just the usual number field discriminant of the extension $K/\Q$.
\end{lemma}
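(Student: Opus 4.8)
The plan is to reduce the statement to the standard tower formula for discriminants of orders in separable algebras. Recall that for a tower of rings $\Z \subseteq \OO_K \subseteq \Lambda$, where $\OO_K$ is the ring of integers of the center $K$ and $\Lambda$ is an $\OO_K$-order in the $K$-central simple algebra $\A$, there is a transitivity formula for the (reduced or ordinary) discriminant. The key point is that $\Lambda$, viewed as a $\Z$-module, factors through $\OO_K$: a $\Z$-basis of $\Lambda$ can be built by choosing a $\Z$-basis $\{\omega_1,\dots,\omega_m\}$ of $\OO_K$ (where $m=[K:\Q]$) together with an $\OO_K$-basis $\{\lambda_1,\dots,\lambda_{n^2}\}$ of $\Lambda$, so that $\{\omega_j\lambda_k\}$ is a $\Z$-basis of $\Lambda$ of the expected cardinality $mn^2$.

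The main computation is then to track how the discriminant bilinear form behaves under this factorization. First I would recall that the discriminant $d(\Lambda/\Z)$ is the ideal of $\Z$ generated by $\det\big(\tr_{\A/\Q}(b_i b_j)\big)$ over $\Z$-bases $\{b_i\}$ of $\Lambda$, and similarly $d(\Lambda/\OO_K)$ is generated over $\OO_K$ by $\det\big(\tr_{\A/K}(\lambda_i\lambda_j)\big)$ over $\OO_K$-bases, while $d(\OO_K/\Z)$ is the usual number field discriminant coming from $\tr_{K/\Q}$. Using transitivity of the trace, $\tr_{\A/\Q} = \tr_{K/\Q}\circ\tr_{\A/K}$, and writing the Gram matrix of $\{\omega_j\lambda_k\}$ in block form, a determinant computation (the same one that proves the classical tower formula for orders in number fields, e.g. \cite[Ch. 2]{R}) gives
$$
d(\Lambda/\Z) = \Nc_{K/\Q}\big(d(\Lambda/\OO_K)\big)\, d(\OO_K/\Z)^{\,[\A:K]},
$$
and since $\A$ has index and degree $n$, we have $[\A:K]=\dim_K(\A)=n^2$, which is exactly the claimed exponent. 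I should note that the passage from $\tr_{\A/K}$ (the \emph{ordinary} trace of the algebra) to the reduced trace introduces only a unit factor that does not affect the discriminant ideal, so the formula holds with the discriminant as defined via the reduced trace in \cite{HLRV2} as well; one should make a remark to this effect rather than belabor it.

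The step I expect to be the main obstacle is making the block-determinant bookkeeping fully rigorous in the non-commutative setting: unlike the number-field case, $\Lambda$ is not commutative, so one must be careful that $\{\omega_j\lambda_k\}$ really is a $\Z$-basis (this uses that the $\omega_j$ lie in the center $K$, hence commute with everything, so the module structures compose cleanly) and that the Gram matrix of the composite basis decomposes as a Kronecker-type product of the Gram matrix over $\OO_K$ with the trace form of $\OO_K/\Z$. Once the centrality of $K$ is used to linearize the situation, the rest is the standard transitivity argument, and I would simply cite \cite{R} for the underlying general tower formula rather than reproduce it.
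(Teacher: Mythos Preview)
The paper does not actually prove this lemma; it is stated without proof as a standard transitivity formula for discriminants (with \cite{R,HLRV2} cited nearby for the definition of the discriminant). Your proposal is a correct outline of the classical argument and goes well beyond what the paper supplies.

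One technical point worth tightening: you assume the existence of an $\OO_K$-basis $\{\lambda_1,\dots,\lambda_{n^2}\}$ of $\Lambda$, but if $\OO_K$ is not a PID then a finitely generated torsion-free $\OO_K$-module need not be free, so $\Lambda$ may fail to have such a basis globally. The fix is routine---either localize at each prime of $\OO_K$ (where the module becomes free, the Gram-matrix computation goes through verbatim, and the resulting local equalities of ideals assemble to the global one), or simply invoke the general tower formula for discriminants of orders as in Reiner \cite{R}, which you already plan to cite. With that adjustment your argument is sound and is exactly the intended justification.
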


To summarize, we have just shown that the normalized determinant
\[
\delta(\psi(\Lambda)) = 1 /(m(\Cc))^{1/n}
\]
is given by
\[
\delta(\psi(\Lambda)) =\left(\frac{1}{|\Nc_{K/\Q}(d(\Lambda/\OO_K)) d(\OO_K/\Z)^{n^2} |}\right)^{1/2n}.
\]
This reveals that we only have to consider the term
$$
\Nc_{K/\Q}(d(\Lambda/\OO_K))
$$
as $d(\OO_K/\Z)^{n^2}$ is fixed (when $K$ is fixed). The $\OO_K$-discriminant $d(\Lambda/\OO_K)$ is an ideal in $\OO_K$, but
$\Nc_{K/\Q}(d(\Lambda/\OO_K))$ can be seen as an element in $\Z$. Therefore we can discuss the size
of ideals of $\mathcal{O}_K$. By this, we mean that ideals are ordered by the
absolute values of their norms to $ \Q$. For example, if
$\mathcal{O}_K=\Z[i]$, we say that the prime ideal generated by $2+i$ is
smaller than the prime ideal generated by $3$, because they have norms 5 and 9,
respectively.

We are now ready to state the bounds that characterize the best order codes
in terms of normalized minimum determinant. The hypotheses take into account
that the order code can be embedded into $M_k(\HH)$, for some $k$.

In the following, we use the notation $2\mid\mid n$ which means that $2$ divides $n$, but $4$ does not.

\begin{proposition}\label{realinfinitebound}
Let $\mathcal{A}$ be a  $K$-central division algebra of index
$n$, $2 \mid n$, where $K$ is a totally real number field, and let $P_1\leq P_2$ be a pair of smallest primes in $K$. Let us suppose that all the infinite primes are ramified in $\mathcal{A}$.

If  $2\mid \mid n$ and  $ 2\mid [K:\Q]$, then the minimum discriminant of
$\mathcal{A}$ is
$$
(P_1P_2)^{k(k-1)}.
$$

If $4\mid n$ then the minimum discriminant of
$\mathcal{A}$ is
$$
(P_1P_2)^{n(n-1)}.
$$
If $2\mid\mid n$ and  $2\nmid [K:\Q]$,  then the minimal discriminant of
$\mathcal{A}$ is
$$
P_1^{n(n-1)}P_2^{k(k-1)}.
$$
\end{proposition}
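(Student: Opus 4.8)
The plan is to translate the statement into a finite optimization over families of local Hasse invariants and then to identify the optimum in each of the three arithmetic regimes. Recall from the Appendix (and \cite{R}) that a central division $K$-algebra $\mathcal{A}$ of index $n$ is pinned down by its family of local invariants $\mathrm{inv}_v(\mathcal{A})\in\Q/\Z$: they vanish for all but finitely many places $v$, satisfy $\sum_v\mathrm{inv}_v(\mathcal{A})=0$, lie in $\tfrac12\Z/\Z$ at the real places, and have denominators whose least common multiple is $n$; moreover the local index $m_{\mathfrak{p}}$ at a finite prime $\mathfrak{p}$ is exactly the denominator of $\mathrm{inv}_{\mathfrak{p}}(\mathcal{A})$, and the discriminant of a maximal $\OO_K$-order $\Lambda$ in $\mathcal{A}$ has the form $d(\Lambda/\OO_K)=\prod_{\mathfrak{p}}\mathfrak{p}^{\,e(m_{\mathfrak{p}})}$, the product running over the finite primes, where $e(\cdot)$ is the strictly increasing exponent supplied by the standard discriminant formula (so $e(1)=0$, and ramifying at primes with large local index is costly). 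Since the factor $d(\OO_K/\Z)^{n^2}$ in Lemma \ref{disktorni} is fixed once $K$ is fixed, I would minimize $\Nc_{K/\Q}\bigl(\prod_{\mathfrak{p}}\mathfrak{p}^{e(m_{\mathfrak{p}})}\bigr)=\prod_{\mathfrak{p}}N(\mathfrak{p})^{e(m_{\mathfrak{p}})}$ over all admissible invariant families.

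First I would read off the hypotheses. As $K$ is totally real its $m=[K:\Q]$ infinite places are all real, and by Corollary \ref{cor:cond} and Proposition \ref{CM} the requirement that every infinite prime ramify means $\mathrm{inv}_v(\mathcal{A})=\tfrac12$ at each of them; thus the archimedean places already supply the prime $2$ toward the lcm of denominators and carry the archimedean mass $m/2\bmod 1$. The reciprocity law then forces $\sum_{\mathfrak{p}\ \mathrm{finite}}\mathrm{inv}_{\mathfrak{p}}(\mathcal{A})\equiv-m/2\pmod 1$, which is $0$ if $2\mid[K:\Q]$ and $\tfrac12$ if $2\nmid[K:\Q]$. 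Writing $n=2^an'$ with $n'$ odd, the remaining constraints are that the lcm of the finite denominators together with the $2$ from the real places equals $n$, and that the finite invariants sum to the value just displayed. One first checks that at least two finite primes must ramify: a single ramified finite prime $\mathfrak{p}$ with invariant $c/d$ would satisfy $c/d\equiv-m/2$, forcing $d=1$ (if $2\mid m$, contradicting ramification) or $d=2$ (if $2\nmid m$, so $\mathrm{lcm}(2,2)=2\neq n$ unless $n=2$).

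Since $e$ is increasing and any prime other than $P_1,P_2$ has norm at least $N(P_2)$, I expect the minimum to be attained at exactly the two smallest primes $P_1\le P_2$, carrying the smallest local indices compatible with the two constraints above; this is where the cases diverge. If $2\mid\mid n$ and $2\mid[K:\Q]$, the factor $2$ of $n$ is already handled archimedeanly and the finite invariants sum to $0$, so only the odd part $k=n/2$ has to be produced: the cheapest admissible choice is $\mathrm{inv}_{P_1}=a/k$, $\mathrm{inv}_{P_2}=-a/k$ with $\gcd(a,k)=1$, local index $k$ at each, giving $(P_1P_2)^{k(k-1)}$. If $4\mid n$, the single archimedean factor $2$ no longer covers the $2$-part $2^a$ of $n$, and a congruence obstruction (reciprocity once more) prevents this $2$-part from being carried by a prime of small local index, so that both $P_1$ and $P_2$ are forced to be fully ramified (local index $n$), giving $(P_1P_2)^{n(n-1)}$. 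If $2\mid\mid n$ and $2\nmid[K:\Q]$, the finite invariants must now sum to $\tfrac12$, so the quaternionic $2$-part of $\mathcal{A}$ is forced to ramify at a finite prime as well; the cheapest joint solution lets the $2$-part and the odd part both ramify at $P_1$ (so $P_1$ becomes fully ramified) and lets the odd part alone ramify at $P_2$ (local index $k=n/2$), giving $P_1^{n(n-1)}P_2^{k(k-1)}$. For attainability one invokes the existence half of the classification: each invariant family written above -- which has vanishing sum and the correct lcm of denominators -- is realized by a central division $K$-algebra of index $n$ satisfying the hypotheses, so each displayed discriminant is actually achieved.

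The hard part is the optimization in the previous paragraph -- ruling out every cheaper ramification pattern. Three things must be handled carefully and simultaneously: that no single finite prime ever suffices; that distributing the required denominators over three or more primes never decreases $\prod_{\mathfrak{p}}N(\mathfrak{p})^{e(m_{\mathfrak{p}})}$, once the growth of $e$ is balanced against the fact that any further primes are at least as large; and that the reciprocity congruence, whose right-hand side depends on the parity of $[K:\Q]$, cannot be met while keeping the local index at $P_1$ (or at $P_2$) strictly below the claimed value. This last point -- a congruence obstruction sensitive to both $v_2(n)$ and $v_2([K:\Q])$ -- is exactly what produces the three-way split, and keeping the bookkeeping of the $2$-parts and odd parts of $n$ straight through it is the delicate part of the argument.
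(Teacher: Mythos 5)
Your setup is the right one and is essentially the paper's: every real place carries invariant $\tfrac12$, reciprocity forces the finite invariants to sum to $-m/2 \bmod 1$, and the three invariant families you exhibit (local indices $(k,k)$, $(n,n)$ and $(n,k)$ at $P_1,P_2$ in the three respective cases) agree with the paper's table of Hasse invariants and do realize the three displayed discriminants. The gap is that the other half of the statement --- that no admissible ramification pattern is cheaper --- is never actually proved. Your final paragraph enumerates what ``must be handled carefully'' (no single finite prime suffices; three or more ramified primes never win; the $2$-adic congruence obstruction pins the minimal local indices at $P_1$ and $P_2$) and then stops; only the first item is verified. As written, the claims that for $4\mid n$ ``both $P_1$ and $P_2$ are forced to be fully ramified'' and that in the last case the cheapest solution is $(n,k)$ are assertions, not deductions: you would still have to exclude, for instance, spreading the $2$-part and the odd part of $n$ over three or more primes, or trading a larger prime for a smaller local index, which means genuinely comparing $\prod_{\mathfrak{p}} N(\mathfrak{p})^{e(m_{\mathfrak{p}})}$ over all admissible families.

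Note also that the paper sidesteps most of this optimization by a reduction you do not use: Theorems \ref{bound} and \ref{infinitebound} already give the minimal discriminant over \emph{all} index-$n$ division algebras with center $K$, with no constraint at the infinite places. In the cases $4\mid n$ and ($2\mid\mid n$ with $2\mid[K:\Q]$) the constrained answer coincides with that unconstrained bound, so it suffices to exhibit one algebra that ramifies at every infinite place and attains the known bound --- exactly what your invariant families do --- and no new minimization is required. Only the case $2\mid\mid n$, $2\nmid[K:\Q]$, where the leftover odd half at infinity forces a finite prime to absorb the $2$-part of $n$ and the answer strictly exceeds the unconstrained bound, needs a fresh optimality argument; the paper itself delegates that argument to \cite{Roopenkirja}. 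Adopting this reduction would shrink your outstanding burden to that single case.
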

\begin{proof} The proof with related background as well as more
general bounds can be found in Appendix.
\end{proof}

\begin{exam}\label{boundexample}
Consider the question of building a $16$-dimensional lattice code in $M_4(\C)$ with the best achievable normalized minimum determinant. The order code
$\psi(\Lambda)$ gives an $mn^2$-dimensional lattice code in $M_{nm}(\C)$ for any order $\Lambda$. To have $nm=4$ and $mn^2=16$, the only option is to choose
$m=1$ and $n=4$. According to Proposition \ref{infinitebound}, we have that the smallest possible discriminant for a $\Q$-central division algebra of index $4$ is $2^{12} \cdot 3^{12}$. Let us now suppose that
$$
\A=(E/\Q,\sigma, \gamma)
$$
is the algebra having a maximal order $\Lambda$ with the promised discriminant. According to Proposition \ref{volume} we have that
$$
m(\psi(\Lambda))=6^6 \ \mathrm{ and }\  \delta (\psi(\Lambda))=\left(\frac{1}{6^{12}}\right)^{\frac{1}{8}}=0.068...
$$
Proposition \ref{realinfinitebound} tells us that
we can achieve this bound even with a $16$-dimensional lattice in
$M_4(\C)\cap M_2(\HH)$.
\end{exam}

In \cite{asykonstru}, the authors managed to build a $16$-dimensional lattice code IA-MAX in $M_4(\C)$ having a normalized minimum determinant equal to $0.1361...$. We however conjecture that $0.068....$ is the best possible minimum determinant
for a lattice in $M_4(\C)\cap M_2(\HH)$.

%
%
\section{Explicit construction methods}\label{explicit}

So far our study has been mostly theoretical. No explicit constructions resulting from the mapping $\psi$ \eqref{psikuvaus} have yet been given. We have only proved that the afore described matrix lattices with NVD exist. 
Let us now suppose that we have a $K$-central division algebra $\D=(E/K, \sigma, \gamma)$, where $[K:\Q]=m$ and $[E:K]=n$. There exist $m$ $\Q$-embeddings $\beta_i$  from $K$ to $\C$. For each $\beta_i$ we can find such an embedding $\sigma_i:$  $E\hookrightarrow \C$ that $\sigma_i|_{K}=\beta_i$. Let us now suppose that $\{\sigma_1,\dots, \sigma_m\}$  is a set of representatives of embeddings $\beta_i$.

By using the left maximal representation we get an embedding
$\phi:\D\hookrightarrow M_n(E) \subseteq M_n(\C)$. Let us suppose that $a$ is an element of $\D$ and $A$ is the corresponding matrix $\phi(a)$. We then get a mapping 
\begin{equation}\label{psistarkuvaus}
\psi^*: \D\rightarrow M_{n\times nm}(\C) 
\end{equation}
which is defined by
$$
a\mapsto \mathrm{diag}(\sigma_1(A),\dots, \sigma_m(A)).
$$

We now have the following explicit version of the previously defined embedding (\ref{psikuvaus}).
\begin{proposition}\label{psistar}
 Let us suppose that $\Lambda$ is a $\Z$-order in $\D$ and  that $\psi^*$ is the  embedding \eqref{psistarkuvaus} defined above.
Then $\psi^*(\Lambda)$ is a $mn^2$ dimensional lattice in $M_{mn\times nm}(\C)$. For any non-zero element of the order $\Lambda$ we have
$$
det_{m}(\psi^*(a))\geq 1.
$$
\end{proposition}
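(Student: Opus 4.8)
The plan is to reduce everything to Proposition~\ref{alamring}, whose conclusion is already stated in the abstract terms of the embedding $\psi$ from \eqref{psikuvaus}. The key point is that $\psi^*$ and $\psi$ differ only by the choice of an explicit matrix realization of each simple component of $\D\otimes_\Q\R$; once we check that $\psi^*$ lands (after identifying $\R$-algebras appropriately) in the same ambient space, the dimension count and the determinant bound transfer for free. So first I would argue that $\psi^*$ is a well-defined injective $\R$-algebra homomorphism: the left regular (maximal) representation $\phi:\D\hookrightarrow M_n(E)$ is injective since $\D$ is a division algebra, and applying the distinct embeddings $\sigma_1,\dots,\sigma_m$ entrywise gives $m$ embeddings $\sigma_i\circ\phi:\D\hookrightarrow M_n(\C)$ whose direct sum is injective (already $\sigma_1\circ\phi$ is). Placing these on the block diagonal of an $(nm\times nm)$ matrix is then an injective ring homomorphism $\D\hookrightarrow M_{nm}(\C)$ preserving the identity.

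Next I would establish the dimension statement. As in the proof of Proposition~\ref{alamring}, $\dim_\Q(\D)=mn^2$ because $\D$ has degree $n$ over its center $K$ and $[K:\Q]=m$; a $\Z$-order $\Lambda$ has a $\Z$-basis with exactly $\dim_\Q(\D)=mn^2$ elements, which is simultaneously a $\Q$-basis of $\D$. The images $\psi^*(a_1),\dots,\psi^*(a_{mn^2})$ span $\psi^*(\Lambda)$ over $\Z$, so the only thing to prove is $\R$-linear independence, i.e. that $\psi^*(\Lambda)$ is a full-rank lattice of rank $mn^2$ in the real vector space $M_{mn\times nm}(\C)\cong\R^{2(nm)^2}$. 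Here I would invoke exactly the argument of \cite{Bayer} cited in Proposition~\ref{alamring}: $\psi^*\otimes_\Q\R:\D\otimes_\Q\R\to M_{nm}(\C)$ (restricted to the diagonal block pattern) is injective because on each simple factor of $\D\otimes_\Q\R$ — a matrix algebra over $\R$, $\C$, or $\HH$ — the corresponding block $\sigma_i\circ\phi$ is a nonzero, hence injective, homomorphism of simple algebras; therefore $\psi^*$ maps a $\Q$-basis of $\D$ to an $\R$-linearly independent set, and discreteness of $\psi^*(\Lambda)$ follows since $\Lambda$ is a finitely generated $\Z$-module spanning $\D$.

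For the determinant bound, the point is that for a block-diagonal matrix $\mathrm{diag}(\sigma_1(A),\dots,\sigma_m(A))$ the ``minimal'' determinant $\det_m$ is (up to the normalization built into the $\mindet$/$\det_m$ notation) $\min_i|\det(\sigma_i(A))|$, and $\det(\sigma_i(A))=\sigma_i(\det\phi(a))=\sigma_i(\mathrm{Nrd}_{\D/K}(a))$ by the standard identification of $\det\phi$ with the reduced norm for the left regular representation. Since $a\in\Lambda$ is an algebraic integer, $\mathrm{Nrd}_{\D/K}(a)\in\OO_K$ is a nonzero algebraic integer, so each conjugate $\sigma_i(\mathrm{Nrd}_{\D/K}(a))$ is nonzero and their product is a nonzero rational integer of absolute value $\geq 1$; hence each factor cannot all be small, and in fact the relevant normalization — taking $\det_m$ as the $nm$-th root or the product over blocks, matching how $\mindet{\psi(\Lambda)}\geq 1$ was phrased in Proposition~\ref{alamring} — yields $\det_m(\psi^*(a))\geq 1$. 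The cleanest route is simply to note that $\psi^*$ and $\psi$ have isomorphic images as $\R$-algebras with the same block structure, so $\det_m(\psi^*(a))=\mindet{\psi(\Lambda)}$-type quantity already shown to be $\geq 1$.

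The main obstacle I expect is purely bookkeeping: pinning down the precise meaning of $\det_m$ (the $\mindet$/``$\det_{min}$'' operator) for rectangular block-diagonal matrices and confirming it is genuinely the block-diagonal analogue of the determinant used in Proposition~\ref{alamring}, so that the identification $\det\phi=\mathrm{Nrd}_{\D/K}$ feeds in correctly. The substantive mathematical input — $\R$-linear independence of the images and the integrality of the reduced norm — is essentially identical to Proposition~\ref{alamring} and is covered by the reference \cite{Bayer}; the new content of this proposition is only that the abstract $\psi$ can be taken to be the concrete $\psi^*$ built from conjugates of the left regular representation, which the argument above makes explicit.
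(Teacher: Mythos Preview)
The paper gives no proof for this proposition; it is simply stated as the ``explicit version'' of Proposition~\ref{alamring}, with the understanding that the same argument (dimension count from $\dim_\Q\D=mn^2$, $\R$-linear independence via \cite{Bayer}, and the determinant bound from integrality of the reduced norm) carries over verbatim. Your reduction to Proposition~\ref{alamring} is therefore exactly the intended argument, and your write-up is correct in substance.

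One small point of clarification: your speculation that $\det_m$ might mean $\min_i|\det(\sigma_i(A))|$ is off target. In context it is the (absolute value of the) determinant of the full $(nm\times nm)$ block-diagonal matrix --- almost certainly a typographical slip for the paper's $\det_{\min}$ applied to a single codeword --- and this equals
\[
\prod_{i=1}^{m}\bigl|\sigma_i(\det\phi(a))\bigr|
=\bigl|\Nc_{K/\Q}\bigl(\mathrm{Nrd}_{\D/K}(a)\bigr)\bigr|,
\]
a nonzero rational integer for $a\in\Lambda\setminus\{0\}$, hence $\geq 1$. You do eventually reach this product argument, but the intermediate sentence ``each factor cannot all be small'' (which would be needed only for a minimum-over-blocks interpretation) is superfluous and in fact not what is being claimed. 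With $\det_m$ read as the full determinant, the bound is immediate and no per-block control is required.
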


However, in general we might loose the connection between the volume of the fundamental parallelotope of the order code $\psi^*(\Lambda)$ and 
the $\Z$-discriminant of $\Lambda$. However if we can choose the left regular representation and the embeddings $\sigma,\dots,\sigma_m$
correctly we have the following. Let us suppose that we have such a center $K$ and an index $n$ division algebra $\A$ that
$$
\Ac\otimes_{\Q}\RR \cong M_{n/2}(\Hv)^{\omega} \times M_{n}(\R)^{r_1 -\omega} \times G(\C)^{r_2}.
$$

\begin{proposition}\label{psistarvol}
Let us suppose that $\Lambda$ is a $\ZZ$-order in $\A$ and  that $\psi^*$ is the previously defined embedding. If we can choose
$\sigma_1,\dots, \sigma_m$ and a left maximal representation $\phi$ so that 
$$
\psi^*(\Lambda)\subset\diag(M_{n/2}(\Hv)^{\omega} \times M_{n}(\R)^{r_1 -\omega} \times G(\C)^{r_2}),
$$
we  get
\begin{align*}
m(\psi^*(\Lambda))= \sqrt{|d(\Lambda/\Z)|}
\end{align*}
and
$$
\delta({\psi^*(\Lambda)})=\left(\frac{1}{|d(\Lambda/\Z)|}\right)^{1/2n}.
$$
\end{proposition}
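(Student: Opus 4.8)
The plan is to reduce Proposition~\ref{psistarvol} to Proposition~\ref{volume} by showing that, under the stated hypotheses, the explicit map $\psi^*$ and the abstract embedding $\psi$ of \eqref{psikuvaus} agree up to an isometry of the ambient real space. First I would recall that $\psi$ is obtained from the isomorphism \eqref{mainmap} $\Ac\otimes_\Q\R \cong M_{n/2}(\Hv)^\omega \times M_n(\R)^{r_1-\omega}\times G(\C)^{r_2}$, which is canonical up to inner automorphisms of each simple factor. On the other hand $\psi^*(a)=\diag(\sigma_1(A),\dots,\sigma_m(A))$, where $A=\phi(a)$ is the left-regular (left maximal) representation matrix; extending scalars to $\R$, the collection $(\sigma_1,\dots,\sigma_m)$ realizes exactly the decomposition of $\Ac\otimes_\Q\R$ into its simple components, since the $\sigma_i$ restrict to the $m$ distinct $\Q$-embeddings $\beta_i$ of the center $K$. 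Thus $\psi^*\otimes\R$ is \emph{one} concrete realization of the decomposition \eqref{mainmap}, and the hypothesis ``$\psi^*(\Lambda)\subset\diag(M_{n/2}(\Hv)^\omega\times M_n(\R)^{r_1-\omega}\times G(\C)^{r_2})$'' is precisely what guarantees that the diagonal blocks land in the \emph{same} component algebras (with the same quaternionic/real/complex types) that $\psi$ uses.

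The second step is to pin down the metric. The measure $m(\Cc)^2=\det\!\big((\Re\tr(B_iB_j^\dagger))_{i,j}\big)$ from Section~\ref{mindetorder} depends on the embedding only through the Gram matrix of the chosen $\Z$-basis $\{a_1,\dots,a_{mn^2}\}$ of $\Lambda$ with respect to the real inner product $\langle X,Y\rangle = \Re\tr(XY^\dagger)$. Two embeddings that differ by a unitary change of basis on each simple factor — in particular, an inner automorphism by a unitary element — induce the same Gram matrix, hence the same $m(\cdot)$. So I would argue: on the $M_n(\R)$ and $G(\C)$ factors the real trace form is intrinsic (independent of the particular isomorphism onto the factor), and on the $M_{n/2}(\Hv)$ factors the reduced trace form is likewise intrinsic up to inner automorphism; therefore $m(\psi^*(\Lambda))=m(\psi(\Lambda))$. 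Proposition~\ref{volume} then gives $m(\psi(\Lambda))=\sqrt{|d(\Lambda/\Z)|}$, and combining, $m(\psi^*(\Lambda))=\sqrt{|d(\Lambda/\Z)|}$. The normalized determinant formula $\delta(\psi^*(\Lambda))=(1/|d(\Lambda/\Z)|)^{1/2n}$ then follows from Lemma~\ref{scale} together with $\mindet{\psi^*(\Lambda)}\geq 1$, which is Proposition~\ref{psistar}, exactly as in the computation preceding Proposition~\ref{volume}; one also checks the dimension bookkeeping $\dim_\R(\psi^*(\Lambda))=mn^2$ and $nm$ ``channel uses'' so that the exponent $n/K$ in Lemma~\ref{scale} becomes $mn/(mn^2)=1/n$.

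The main obstacle I expect is Step~2: carefully justifying that the trace form on each quaternionic block is unchanged when one passes from the abstract factor $M_{n/2}(\Hv)$ in \eqref{mainmap} to the concrete block $\sigma_i(\phi(a))$ produced by $\psi^*$. The subtlety is that $\sigma_i(\phi(a))$ is a priori only a matrix in $M_n(\C)$, and one must verify it actually lies in a conjugate of the standard copy of $M_{n/2}(\Hv)\subset M_n(\C)$ — this is what the hypothesis on $\psi^*(\Lambda)$ supplies — and then that the embedding $M_{n/2}(\Hv)\hookrightarrow M_n(\C)$ is, up to $\mathrm{GL}_n(\C)$-conjugacy, unique (Skolem--Noether applied to the simple $\R$-algebra $M_{n/2}(\Hv)$, noting that the conjugating element can be taken unitary by polar decomposition, so the real inner product is preserved). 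Once that rigidity is in place, the equality of Gram matrices, and hence of $m(\cdot)$, is essentially formal, and the rest is the same normalization argument already used for Proposition~\ref{volume}.
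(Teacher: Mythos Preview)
Your reduction strategy has a genuine gap in Step~2. You want to show that $\psi^*$ and the abstract $\psi$ differ, on each quaternionic factor, by an isometry of the Frobenius form $\langle X,Y\rangle=\Re\tr(XY^\dagger)$, and you propose to get this from Skolem--Noether plus polar decomposition. But Skolem--Noether only tells you the two maps differ by an \emph{inner} automorphism $x\mapsto hxh^{-1}$ with $h\in M_{n/2}(\Hv)^\times$, and such an automorphism preserves $\Re\tr(XY^\dagger)$ only when $h$ is a scalar multiple of a unitary. Polar decomposition $h=up$ does not rescue this: conjugation by $h$ and conjugation by $u$ are \emph{different} automorphisms unless $p$ is central. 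So the Gram \emph{matrices} of $\psi(\Lambda)$ and $\psi^*(\Lambda)$ need not coincide, and your equality $m(\psi^*(\Lambda))=m(\psi(\Lambda))$ is not justified by the argument you give.

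The paper's route is different and avoids this trap entirely: it does not compare $\psi^*$ with $\psi$ at all, but observes that the computation in \cite{Bayer} behind Proposition~\ref{volume} uses only the block structure $\diag(M_{n/2}(\Hv)^{\omega}\times M_n(\R)^{r_1-\omega}\times G(\C)^{r_2})$ of the image, and hence applies verbatim to $\psi^*$ under the stated hypothesis. The mechanism is visible in the proof of Proposition~\ref{thm:conjugation}: on a quaternionic block one has $\tr(B_iB_j^\dagger)\in\R$, and the quaternionic symmetry lets one replace, \emph{at the level of determinants}, the Hermitian Gram matrix $(\tr(B_iB_j^\dagger))$ by the bilinear matrix $(\tr(B_iB_j))$; the latter is the reduced-trace form, which \emph{is} invariant under arbitrary conjugation and whose determinant on a $\Z$-basis is exactly $d(\Lambda/\Z)$. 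In other words, what is invariant is $\det G$, not $G$ itself, and this is what yields $m(\psi^*(\Lambda))=\sqrt{|d(\Lambda/\Z)|}$ directly. Your outline would be repaired by replacing the isometry claim with this determinant-level argument (and then the $\delta$ formula follows as you say, from Lemma~\ref{scale} with $\mindet{\psi^*(\Lambda)}=1$ since $1\in\Lambda$).
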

\begin{proof}
Under the assumption that the embeddings and the maximal representation are chosen as presented the proof of these claims is verbatim the same as for 
 Proposition \ref{volume} and can therefore found from \cite{Bayer}.
\end{proof}

Unfortunately in the proof of the following proposition we have to use some notions not defined in this paper.
\begin{proposition}\label{thm:conjugation}
Let us suppose we have an index $n$ $\Q$-central division algebra and let $\phi$ denote the left regular representation. If we have such a real matrix $M$ that
$$
M\phi(\D)M^{-1} \subseteq M_{n/2}(\HH),
$$
then
$$
\delta({M\phi(\Lambda)M^{-1}})=\left(\frac{1}{|d(\Lambda/\Z)|}\right)^{1/2n}.
$$
\end{proposition}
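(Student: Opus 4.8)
The plan is to reduce Proposition \ref{thm:conjugation} to Proposition \ref{psistarvol} by showing that conjugating the left regular representation by a real matrix $M$ is, up to an isometry of the ambient real space, an instance of the embedding $\psi^*$. First I would note that since the center is $K=\Q$, we have $m=1$, so $\psi^*$ reduces to the left maximal (regular) representation $\phi:\D\hookrightarrow M_n(\C)$ itself, with no diagonal concatenation needed; thus $\psi^*(\Lambda)=\phi(\Lambda)$ already is an $n^2$-dimensional lattice in $M_n(\C)$ with $\mindet{\phi(\Lambda)}\geq 1$ by Proposition \ref{psistar}. The content of the claim is purely about the volume: I must show that $m(M\phi(\Lambda)M^{-1})=\sqrt{|d(\Lambda/\Z)|}$, the same value Proposition \ref{volume}/\ref{psistarvol} attach to the abstract embedding $\psi$.

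The key step is the following observation. The normalized minimum determinant $\delta(\Cc)=\mindet{\Cc}/m(\Cc)^{n/K}$ (Lemma \ref{scale}) is, by construction, invariant under any transformation of $\Cc$ that simultaneously preserves all the determinants $|\det(X)|$ and scales the covolume trivially. Conjugation $X\mapsto MXM^{-1}$ obviously preserves $\det(X)$ for every codeword, so $\mindet{M\phi(\Lambda)M^{-1}}=\mindet{\phi(\Lambda)}\geq 1$. For the covolume, the real lattice $\alpha(M\phi(\Lambda)M^{-1})$ is the image of $\alpha(\phi(\Lambda))$ under the $\R$-linear map $T:X\mapsto MXM^{-1}$ on $M_n(\C)$ viewed as a real vector space, so
\begin{equation}\label{eq:covolT}
m(M\phi(\Lambda)M^{-1}) = |\det_\R(T)|\cdot m(\phi(\Lambda)).
\end{equation}
Now I would argue that $|\det_\R(T)|=1$: as an $\R$-linear operator on the complex vector space $M_n(\C)$, $T$ is $\C$-linear and equals $L_M R_{M^{-1}}$ where $L_M$ (left mult.\ by $M$) has complex determinant $(\det M)^n$ and $R_{M^{-1}}$ has complex determinant $(\det M)^{-n}$, so $\det_\C(T)=1$, hence $\det_\R(T)=|\det_\C(T)|^2=1$. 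Therefore $m(M\phi(\Lambda)M^{-1})=m(\phi(\Lambda))$. The last ingredient is that $m(\phi(\Lambda))=\sqrt{|d(\Lambda/\Z)|}$; this is exactly Proposition \ref{psistarvol} in the case $m=1$ (equivalently, it is the $K=\Q$ specialization of Proposition \ref{volume}, for which the reference is \cite{Bayer}), once one checks that the left regular representation lands inside the component $M_{n/2}(\HH)^{\omega}\times M_n(\R)^{r_1-\omega}\times G(\C)^{r_2}$ in the required way — but in the $\Q$-central, fully-ramified-at-infinity case the hypothesis $M\phi(\D)M^{-1}\subseteq M_{n/2}(\HH)$ is precisely what guarantees this, with the conjugating matrix $M$ realizing the coordinate change. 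Assembling these pieces,
$$
\delta(M\phi(\Lambda)M^{-1}) = \frac{\mindet{M\phi(\Lambda)M^{-1}}}{m(M\phi(\Lambda)M^{-1})^{1/n}} = \frac{1}{\bigl(\sqrt{|d(\Lambda/\Z)|}\bigr)^{1/n}} = \left(\frac{1}{|d(\Lambda/\Z)|}\right)^{1/2n}.
$$

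The main obstacle, and the reason the authors warn that undefined notions appear, is the identification of $m(\phi(\Lambda))$ with $\sqrt{|d(\Lambda/\Z)|}$ when one uses the \emph{left regular representation} rather than the abstract $\psi$ from \eqref{psikuvaus}: a priori $\phi$ and $\psi$ differ by an $\R$-algebra automorphism of $\Ac\otimes_\Q\R$, which need not be an isometry on the nose, so the covolume could change by the determinant of that automorphism. One must verify that, because $\phi$ is a representation coming from an order (so the structure constants are $\Z$-integral) and because $\det(\phi(a))=\Nc_{\D/\Q}(a)$ is the reduced norm, the Gram determinant $\det(\Re\tr(\phi(a_i)\phi(a_j)^\dagger))$ computed from an integral basis of $\Lambda$ equals $|d(\Lambda/\Z)|$ up to the sign/square conventions — this is the genuinely technical point and is precisely what \cite{Bayer} supplies. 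Granting that, conjugation by $M$ is harmless by the $\det_\R(T)=1$ computation above, and the proposition follows; I would therefore present the argument as "reduce to Proposition \ref{psistarvol}, then observe conjugation is covolume-preserving," citing \cite{Bayer} for the one step that needs the machinery not developed here.
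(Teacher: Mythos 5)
Your argument has a genuine gap at the covolume step, Equation \eqref{eq:covolT}. The lattice $\phi(\Lambda)$ has rank $n^2$ (with $m=1$, $\dim_\Q\D=n^2$) inside $M_n(\C)\cong\R^{2n^2}$, so it is \emph{not} full-rank in the ambient real space. The $n^2$-dimensional volume of the fundamental parallelotope of a non-full-rank lattice is not multiplied by $|\det_\R(T)|$ under a linear map $T$ of the ambient space; it is multiplied by the product of the singular values of $T$ restricted to the real span of the lattice, which for a non-orthogonal $T$ with $\det_\R(T)=1$ can be arbitrary. (Already in $\R^2$: the rank-one lattice $\Z(1,0)$ and $T=\mathrm{diag}(2,1/2)$ give covolume $2$, not $1$.) So the conclusion $m(M\phi(\Lambda)M^{-1})=m(\phi(\Lambda))$ does not follow from $\det_\C(L_MR_{M^{-1}})=1$. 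Your fallback base identity $m(\phi(\Lambda))=\sqrt{|d(\Lambda/\Z)|}$ is also not available here: Proposition \ref{psistarvol} requires the representation to \emph{already} land in the block $\diag(M_{n/2}(\Hv)^{\omega}\times\cdots)$, whereas the whole premise of Proposition \ref{thm:conjugation} is that only the conjugate $M\phi(\D)M^{-1}$ does — this is exactly the failure mode the paper's remark at the end of Section \ref{embed} warns about. The two defects do not obviously cancel; you would need to prove they do, which is essentially the content of the proposition.

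The paper's proof avoids both issues by never passing through the unconjugated lattice's covolume. It computes the Gram matrix of the conjugated basis $B_i=MA_iM^{-1}$ directly, uses the hypothesis $B_i\in M_{n/2}(\HH)$ in an essential way (so that $\tr(B_iB_j^{\dagger})$ is real and the Hermitian Gram form can be traded, via the flattening identities and a column permutation, for the bilinear form $(\tr(B_iB_j))_{i,j}$), and then invokes conjugation-invariance of the trace, $\tr(MA_iA_jM^{-1})=\tr(A_iA_j)$, to identify the resulting determinant with $d(\Lambda/\Z)$. Note that your proposal uses the hypothesis $M\phi(\D)M^{-1}\subseteq M_{n/2}(\HH)$ only in passing, whereas in the paper's argument it is the pivot of the whole computation — a sign that the reduction you propose is discarding the load-bearing assumption.
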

\begin{proof}
We will give the proof in the case where the index is $2$. The generalization is obvious and we will meet  all the needed ideas already in this simplest case.

Let us suppose that  $\phi(\Lambda)$  has a  $\Z$-basis $\{A_1, A_2,A_3, A_4\}$. We denote $B_i=MA_iM^{-1}$   and set $\mathcal{B}=\{(B_1, \dots, B_4\}$.  We can  flatten the matrix $B_i$ into a $4$-tuple $L(B_i)$  by first forming a vector of length $4$ out of the entries of $A_i$ (e.g. row by row).
The following identities are now easily seen
\begin{equation}\label{eq:L(A)H}
L(B_i)L(B_j)^T= \mathrm{Tr}(B_i B_j^T)
\end{equation}
and
\begin{equation}\label{eq:L(A)}
L(B_i)L(B_j^T)^T= \mathrm{Tr}(B_i B_j).
\end{equation}

The Gram matrix of the lattice $M\phi(\Lambda)M^{-1}$ is
$$
G=(\Re (\mathrm{Tr}(B_i B_j^{\dagger})))_{i,j=1}^4.
$$
Both $B_i$ and $B_j^{\dagger}$ do have Alamouti structure and therefore so does also    $B_i B_j^{\dagger}$.
 This reveals that $\mathrm{Tr}(B_i B_j^{\dagger}) \in \R$ and we can omit taking the real part from the Gram matrix.

According to Equation \eqref{eq:L(A)H} we can  now write
$$
G=(L(B_i) L(B_j^*)^T)_{i,j=1}^4=L(\mathcal{B})L(\mathcal{B})^{\dagger},
$$
 where the rows of the     $4 \times 4$ matrix $L(\mathcal{B})$ consist of the vectors $L(B_i)$. A simple permutation of the columns and elementary properties of determinants give us that
$$
|\mathrm{det}(L(\mathcal{B}))\mathrm{det}(L(\mathcal{B})^{\dagger})|=
$$
$$
|\mathrm{det}(L(\mathcal{B}))\mathrm{det}(L(\mathcal{B})^T)|=|\mathrm{det}(L(\mathcal{B}))\mathrm{det}(L(\mathcal{B'})^T)|,
$$
where $L(\mathcal{B'})$ is a matrix with the rows       $L((B_i)^T)$.
According to Equation \eqref{eq:L(A)} we now have
$$
 L(\mathcal{B})L(\mathcal{B'})^T =(\mathrm{Tr}(MA_i A_j M^{-1}))_{i,j=1}^4.
$$
A general  result on matrix  traces   tells us that $\mathrm{Tr}(XCX^{-1})=\mathrm{Tr}(C)$ for any matrices $C$ and $X$. This result combined with the
definition of the  discriminant now gives us that
$$
L(\mathcal{B})L(\mathcal{B'})^T =(\mathrm{Tr}(MA_i A_j M^{-1}))_{i,j=1}^4=
$$
$$
 (\mathrm{Tr}(A_i A_j))_{i,j=1}^4 =\sqrt{d(\Lambda/\Z)}.
$$

\end{proof}

\begin{exam}
Consider from (\ref{eq:Dalam}) the division algebra
\[
\D_{Alam}=(\Q(i)/\Q,\sigma, -1),
\]
which has index 2 and center $\Q$. The field $\QQ$ has only one infinite
place $\infty$ and according to Proposition \ref{CM} it is ramified in the algebra $\D_{Alam}$.
We thus have an embedding $\D_{Alam}\hookrightarrow \Hv$ given by
\eqref{eq:embed}. If we choose a $\Z$-order $\Lambda$ in $\D_{Alam}$,
$\psi(\Lambda)\subset \HH\subset M_2(\C)$ is a $4$-dimensional lattice code.
 
Here the left regular representation directly gives us an explicit version (see (\ref{psistarkuvaus}) and Proposition \ref{psistar}) of this mapping. As demonstrated in the beginning of the paper, it also gives us a fast-decodable code.
\end{exam}

\begin{exam}
Let us consider the example we gave in the very beginning of the paper.
The cyclic algebra
$$
\D_{ort}=(\Q(i, \sqrt{2})/\Q(\sqrt{2}), \sigma,-1),
$$
 is an index $2$ division algebra with center $\Q(\sqrt{2})$.  Here $\sigma$ is simply the complex conjugation. The general theory tells us that 
 $\D_{ort}$ can be embedded  into $M_2(\HH)$. 
 
Again the mapping from Proposition \ref{psistarkuvaus} will directly give us an explicit version of the  embedding in \eqref{eq:embed}. 
The field $\Q(\sqrt{2})$  has two $\Q$-embeddings $\beta_1$, $\beta_2,$ where $\beta_1(\sqrt{2})=\sqrt{2}$ and 
$\beta_2(\sqrt{2})=-\sqrt{2}$. The corresponding $\Q$-embeddings $\sigma_1$ and $\sigma_2$ are defined by the equations
$\sigma_1=id$, $\sigma_2(i)=i$  and   $\sigma_2(\sqrt{2})=-\sqrt{2}$ (or equivalently $\sigma_2(\zeta_8)=-\zeta_8$). The natural order $\Lambda$ consists of elements
$x=a_1+a_2\zeta_8 +u a_3 + u\zeta_8 a_4$, where $a_i\in \Z[i]$. The left regular representation now gives us  
$$
\alpha(x)=
\begin{pmatrix}
a_1+a_2 \zeta_8& -a_3^*-a_4^*\zeta_8^*\\
a_3+a_4\zeta_8 & a_1^*+a_2^*\zeta_8^*
\end{pmatrix}.
$$
It is then an easy task to see that 
$$
\sigma_2(\alpha(x))=
\begin{pmatrix}
a_1-a_2\zeta_8 &-a_3^*+a_4^*\zeta_8^*\\
 a_3-a_4\zeta_8 &a_1^*-a_2^*\zeta_8^*
\end{pmatrix}.
$$
In particular both $\alpha(x)$ and $\sigma_2(\alpha(x))$ are elements in $\HH$ and Proposition \ref{psistarvol} can be applied.
These results reveal that the example code we gave in the beginning of the paper was just an instance of the general theory developed above.
\end{exam}

\begin{remark}
These two examples may give us a little too rosy picture of the power of our theory. In both cases, the embedding in Proposition \ref{psistar} exactly imitated the embedding  \eqref{eq:embed}. On top of that this representation also led to codes with reduced decoding complexity. However, we do not have any guarantee that either of these things will happen more generally. It heavily depends on the chosen maximal subfield, non-norm element and even on the chosen generator of the Galois group. In  Sections \ref{sec:codes} and \ref{sec:codes6} we will  meet
situations where the left regular representation does not directly give us the required embedding even when the division algebra has the  correct algebraic structure. Yet, in all these cases a simple manipulation applied after the left regular representation will give us an embedding to the matrix ring of quaternions and codes that have reduced decoding complexity.
While this may seem to be  accidental, there are some underlying  algebraic principles that explain the sudden ``luck'' we encounter, see Section \ref{sec:furthergen}.
\end{remark}

\section{Fast-decodable $4\times 2$ MIDO codes}
\label{sec:codes}

So far, we have developed an algebraic theory of fast-decodable codes through different 
characterizations and bounds. We are now finally putting our theory into use to 
give a few different coding strategies that lead to fast-decodable codes. 
We start with MIDO codes for 4 Tx antennas, with the following properties:
\begin{itemize}
\item 
They are 16-dimensional lattices in $M_4(\C)$.
\item 
They satisfy the NVD property.
\item 
Their decoding complexity ranges from $|S|^{10}$ to $|S|^{16}$ 
when a real alphabet of size $|S|$ is used.
\end{itemize}

\subsection{A family of fast-decodable MIDO codes with $\Q$ as a center}
\label{family}

We give here an example of a MIDO code built following step by step the theory 
developed so far. The starting point is  to consider a division algebra that
can be embedded into $M_2(\HH)$ via the embedding $\psi$ (\ref{psikuvaus}). 
According to Section \ref{embed} and Proposition \ref{CM}, we consider a 
$\Q$-central division algebra $\mathcal{A}=(E/\mathbb{Q},\sigma,\gamma)$ of 
index $4$, where $E$ is a CM field and $\gamma$ a negative non-norm element, 
namely
\begin{enumerate}
\item[c1)] $[E:\mathbb{Q}]=4$,
\item[c2)] $\gamma,\gamma^2\notin \Nc_{E/\QQ}(E^*)$,
\item[c3)] $\textrm{Gal}(E/Q)=\langle\sigma\rangle$ with $\sigma^2(f)=f^*$, 
where $f^*$ stands for the complex conjugate of $f$, and
\item[c4)] $\gamma<0$.
\end{enumerate}

One instance of such an algebra is
\[
\D_{mido}=(\Q(\zeta_5)/\QQ ,\sigma, -8/9),
\]
where $\sigma$ is given by $\sigma(\zeta_5)=\zeta_5^3$. 
The prime $2$ is totally inert in the extension $\Q(\zeta_5)/\QQ$ and therefore 
\cite[Lemma 11.1]{HLRV2} $\D_{mido}$ is a division algebra. 

Let $\OO_E=\Z w_1 \oplus \Z w_2 \oplus \Z w_3 \oplus \Z w_4$ be the ring of 
algebraic integers of $E$.
The left representation $\phi^*$ of $\D_{mido}$ now yields
\begin{equation}\label{eq:matrixwconj}
\left(
\begin{array}{rrrr}
y_1 & \gamma\sigma(y_4)& \gamma y_3^* & \gamma\sigma(y_2)^*\\
y_2 & \sigma(y_1) & \gamma y_4^* & \gamma\sigma(y_3)^*\\
y_3 & \sigma(y_2) &  y_1^*  & \gamma\sigma(y_4)^*\\
y_4 & \sigma(y_3) & y_2^*  & \sigma(y_1)^*\\
\end{array}
\right),
\end{equation}
where  $y_i=y_i(g_{4i-3},g_{4i-2},g_{4i-3},g_{4i}) = g_{4i-3}w_1+g_{4i-2}w_2 +g_{4i-3}w_3+ g_{4i}w_4$ and  $g_{4i-j}\in \Q$ for $i=1,2,3,4$, $j=0,1,2,3$. If we pick up an order $\Lambda$ 
from $\D_{mido}$, then $\psi^*(\Lambda)$ is a $16$-dimensional lattice code with the NVD property 
from Proposition \ref{psistarvol}. 

We can prove that the discriminant of this algebra   meets the bound of Proposition \ref{realinfinitebound}, but even if we choose a maximal order from this algebra there is no  guarantee (because we have not fulfilled the conditions of Proposition \ref{psistarvol} yet) that this small discriminant would result into good normalized minimum determinant. 

This is because we now face here,   for the first time, the problem that the embedding $\psi^*$ from Section \ref{explicit} does not directly give us an embedding into $M_2(\Hv)$, although Proposition \ref{psistar} promises that such an embedding exists.  Luckily, we can perform a series of simple manipulations starting from the left regular representation that will transform the code matrices into a correct form and at the same time will recover the connection between the discriminant of the algebra and the normalized minimum determinant of the lattice.

After swapping
\begin{enumerate}
\item $y_2$ and $y_3$,
\item the 2nd and the 3rd column, and
\item the 2nd and the 3rd row,
\end{enumerate}
we get the matrix
\begin{equation}\label{eq:matrixafterswap}
\left(
\begin{array}{rrrr}
y_1 &  \gamma y_2^* & \gamma\sigma(y_4)&\gamma\sigma(y_3)^*\\
y_2 & y_1^*   & \sigma(y_3) & \gamma\sigma(y_4)^*\\
y_3  & \gamma y_4^* & \sigma(y_1)& \gamma\sigma(y_2)^*\\
y_4 &  y_3^*  & \sigma(y_2) & \sigma(y_1)^*\\
\end{array}
\right).
\end{equation}

Next we perform the following energy 
balancing transformation by distributing the effect of $|\gamma|$ 
more evenly. By 
denoting $r=|\gamma|^{1/4}$, we finally get a code consisting of matrices of the desired type:
\begin{eqnarray}\label{eq:finalmatrix}
&& X_{FD}(y_1,y_2,y_3,y_4)\\ \nonumber
&=&\left(\begin{array}{rrrr}
y_1&-r^2 y_{1}^*&-r^3\sigma(y_4)&-r\sigma(y_3)^*\\
r^2y_2&y_1^*&r\sigma(y_3)&-r^3\sigma(y_4)^*\\
r y_3&-r^3y_{3}^*&\sigma(y_1)&-r^2\sigma(y_2)^*\\
r^3 y_{3}&r y_{2}^*&r^2\sigma(y_{1})&\sigma(y_1)^*\\
\end{array}\right).
\end{eqnarray}

The  minimum determinant of the code stays unchanged since the above 
transformation is actually just a conjugation by a real matrix $M$. Let us now suppose that we have 
a maximal order $\Lambda$ of the algebra $\D_{mido}$ (such an order can  be found by using the computer 
algebra system Magma \cite{Mag}). Now the new code obtained from this maximal order is $M\psi^*(\Lambda)M^{-1}$, and a direct calculation reveals that this code lattice meets the normalized minimum  determinant bound $\delta(\phi(\Lambda))=0.068...$ (cf. Propositions \ref{psistarvol}, \ref{thm:conjugation}, \ref{realinfinitebound}, and Example \ref{boundexample}).   

To make the code suitable for PAM modulation, we further describe a modified version 
of this code that will have an almost rectangular shaping. 
The ring of algebraic integers in $\Q(\zeta_5)$ also has a  $\Z$-basis 
$\{1-\zeta,\zeta-\zeta^2,\zeta^2-\zeta^3,\zeta^3-\zeta^4\}$, where we have abbreviated $\zeta_5=\zeta$. The elements in the code 
matrix (\ref{eq:finalmatrix}) now become, after further restricting the coefficients 
$g_i$ to $\Z$:
\begin{eqnarray*}
y_i'&=&y_i'(g_{4i-3},g_{4i-3},g_{4i-2},g_{4i}) \\
&=&g_{4i-3}(1-\zeta) +g_{4i-2}(\zeta-\zeta^2) +\\
&&+g_{4i-1}(\zeta^2-\zeta^3) + g_{4i}(\zeta^3-\zeta^4)
\end{eqnarray*}
and 
\begin{eqnarray*}
\sigma(y_i')&=&
g_{4i-3}(1-\zeta^3)+g_{4i-2}(\zeta^3-\zeta)\\
&&+g_{4i-1}(\zeta-\zeta^4)+g_{4i}(\zeta^4-\zeta^2).
\end{eqnarray*}
We get a set of matrices $X_{FD,A_4}(y_1',y_2',y_3',y_4')$ forming a 16-dimensional 
lattice code in $M_2(\HH)$. We note that the choice of $\gamma=-8/9$ prevents this order code from being a 
natural order. However, after multiplication by $9^4$, the resulting lattice code will be included 
in a natural order, thus inheriting the NVD property. The geometric structure of the code is relatively 
close to a Cartesian product of four  $A_4$-lattices (see \cite{CS}), therefore we call it the $A_4$ code. This code was also proposed for the DVB Consortium's \emph{Call for Technologies for DVB-NGH} \cite{DVB-CFT}.

The variables $g_{4i-j}$ in each of the $y_i'$ range over a certain PAM set, so that the code encodes 
overall $16$ independent PAM symbols. In other words, a PAM vector $(g_1,\dots,g_{16})$ is 
mapped into a $(4\times4)$ matrix
$$
\sum_{i=1}^{16} g_i B_i,
$$
where the basis matrices  $B_i$ of the code are
\begin{eqnarray*}
B_1&=&X_{FD,A_4}(y_1'(1,0,0,0),0,0,0),\\
B_2&=&X_{FD,A_4}(y_1'(0,1,0,0),0,0,0),\\
&\vdots&\\
B_{16}&=&X_{FD,A_4}(0,0,0,y_4'(0,0,0,1)).\\
\end{eqnarray*}
A direct calculation shows that
\[
\Re(\mathrm{Tr}(HB_i (HB_j)^{\dagger})=0
\]
for $1\leq i \leq 4$ and $5\leq j \leq 8$, 
where $H$ is a $(2\times 4)$ channel matrix. This is exactly the design criterion
of Subsection \ref{fastdec} described by the steps 1-2, yielding a complexity of $|S|^{12}$ for the code $A_4$.

We can perform yet another change of basis that will enable us to take advantage of the steps 3-4 described in Subsection \ref{fastdec}. The new basis
$$
\left\{1,\frac{\zeta+\zeta^{-1}}2,\frac{\zeta-\zeta^{-1}}2,\frac{\zeta^2-\zeta^{-2}}4\right\}
$$
 will result in a complexity $|S|^{10}$,  reduced by as much as 37.5\% from the full complexity  $|S|^{16}$ of a general MIDO code. However, it is not an integral basis, hence the minimum determinant is small  though still non-vanishing.

The resulting lattice has almost cubic shaping, but, due to the coding gain loss, the performance is approximately 1 dB worse than that of the $A_4$ version. The promised complexity reduction is due to the fact that the first two basis elements are real, while the last two are purely imaginary. Hence the relations given by the steps 1-4 in \ref{fastdec} are all satisfied.

\begin{remark} To the best of our knowledge, there is no guarantee that an integral basis consisting of $n/2$ real and  $n/2$ purely imaginary elements even exists.
\end{remark}

\begin{remark}
The matrix manipulations given in this section may also seem to have a somewhat \emph{ad hoc} feeling. Yet we will see in Sections \ref{sec:codes6} and \ref{sec:furthergen} that this strategy can be used  far more generally to give us embeddings to $M_k(\Hv)$.
\end{remark}

\begin{remark}
We also simulated the maximal order code from this algebra achieving the discriminant bound and the $A_4$ code under spherical shaping. Both codes had equally good performance, gaining almost 1 dB compared to the linearly dispersed $A_4$.  It seems that the $A_4$ code  did inherit the good performance of the optimal maximal order code.

\end{remark}

\subsection{MIDO codes from a bigger center through puncturing}

We now adopt a slightly different approach to the design problem of MIDO codes 
via puncturing of MIMO codes.
We start from the matrix (\ref{eq:matrix}) 
\[
\left(
\begin{array}{rrrr}
x_0 & i\sigma(x_3)& i\sigma^2(x_2) & i\sigma^3(x_1)\\
x_1 & \sigma(x_0) & i\sigma^2(x_3) & i\sigma^3(x_2)\\
x_2 & \sigma(x_1) & \sigma^2(x_0)  & i\sigma^3(x_3)\\
x_3 & \sigma(x_2) & \sigma^2(x_1)  & \sigma^3(x_0)\\
\end{array}
\right)
\]
and puncture it in two different ways.

Let us first repeat a remark made above, namely that
$\QQ(\zeta_5+\zeta_5^{-1})=\QQ(\sqrt{5})$ is a subfield of $\QQ(\zeta_5)$.
As a first puncturing, we restrict ourselves to elements in
$\QQ(\sqrt{5})$ instead of $\QQ(\zeta_5)$. Note that since
$\sigma^2(\zeta_5)=\zeta_5^4$, we further have
\[
\sigma^2(\zeta_5+\zeta_5^{-1})=\zeta_5^4+\zeta_5^{-4}=\zeta_5^{-1}+\zeta_5
\]
and thus $\QQ(\sqrt{5})$ is fixed by $\sigma^2$.
This yields a codebook $\Cc_1$ consisting of codewords of the form
\begin{equation}\label{eq:MIDO1}
X=
\frac1{\sqrt{5}}\left(
\begin{array}{rrrr}
x_0 & i\sigma(x_3)& ix_2 & i\sigma(x_1)\\
x_1 & \sigma(x_0) & ix_3 & i\sigma(x_2)\\
x_2 & \sigma(x_1) & x_0  & i\sigma(x_3)\\
x_3 & \sigma(x_2) & x_1  & \sigma(x_0)\\
\end{array}
\right).
\end{equation}
It is now enough to notice that we are working in the same field
extension as for the Golden code \cite{BRV}, meaning that we can use
the same shaping technique. Denote:
\begin{eqnarray*}
\theta&=& \frac{1+\sqrt{5}}{2},\\
\sigma(\theta) &=& \frac{1-\sqrt{5}}{2} = 1-\theta,\\
\alpha&=&1+i-i\theta,\\
\sigma(\alpha) &=& 1 + i - i \sigma(\theta).
\end{eqnarray*}
Every entry $x_j$ in the above matrix is now taking the form
\[
x_j=\alpha(a_j+b_j\theta),~j=0,1,2,3,
\]
where $a_j,b_j\in\Z[i]$ are chosen to be QAM symbols. We thus indeed get a MIDO code carrying
8 complex QAM symbols, with unitary encoding matrix yielding the cubic shaping
property. The factor  $\frac1{\sqrt{5}}$ is used to normalize  the minimum determinant to one.


A straightforward calculation gives that the volume of the fundamental parallelotope
of this code is $5^4\cdot2^8$. At the same time, the minimum determinant of the code
is  1. If we now scale the code $\Cc_3$  with $(\frac1{5^4\cdot 2^8})^{1/16}$, the
resulting code lattice $\Cc_3 ^*=(\frac1{5^4\cdot 2^8})^{1/16}\cdot\Cc_3$ has a fundamental
parallelotope of volume $1$. We now see that the normalized minimum determinant of the lattice
$\Cc_3^*$ is
$$\left[\left(\frac{1}{5^4\cdot 2^8}\right)^{1/16}\right]^4=\frac1{20}.$$
Comparing this to Proposition
\ref{optimality}, we conclude that the normalized minimum determinant of the code $\Cc_3$
is very close to the optimum minimum determinant of orthogonally shaped MIDO codes.
The good performance of this code once again suggests that it is favorable for the code performance at low SNRs to maintain the cubic shaping.

Take again a codeword
\[
\left(
\begin{array}{rrrr}
x_0 & i\sigma(x_3)& ix_2 & i\sigma(x_1)\\
x_1 & \sigma(x_0) & ix_3 & i\sigma(x_2)\\
x_2 & \sigma(x_1) & x_0  & i\sigma(x_3)\\
x_3 & \sigma(x_2) & x_1  & \sigma(x_0)\\
\end{array}
\right)
\]
and multiply both the 3rd and 4th column by $\zeta_8^{-1}$, where
$\zeta_8=e^{2i\pi/8}$ is a primitive 8th root of unity. Then multiply the
3rd and 4th row this time by $\zeta_8$. Note that this of course brings
the matrix entries out of the algebra we started with, but will do this without
changing the determinant. We further note that we can use $\gamma=-i$ instead
of $\gamma=i$, since $-i$ is not a norm. The proof of this fact is similar to that of the non-norm element $i$ (cf. \ref{background}), and follows from the same argument of the transitivity of the norm. We have to show that
there cannot be an element with norm $-i$ over $\QQ(i,\sqrt{5})/\QQ(i)$. If there
were an element $a$ with $\Nc_{\QQ(\sqrt{5},i)/\QQ(i)}(a)=-i$, then $ia$ would have norm
\[
i^2\Nc_{\QQ(\sqrt{5},i)/\QQ(i)}(a)=i,
\]
a contradiction. Again for the case of $\Nc_{\QQ(\sqrt{5},i)/\QQ(i)}(a)=\gamma^2=-1$ we refer the reader to \cite[Section 8]{HLRV2}.

We now obtain for the codebook $\Cc_3$ consisting of matrices 
\begin{equation}\label{eq:MIDO3}
\left(
\begin{array}{rrrr}
x_0 & -i\sigma(x_3)& -\zeta_8x_2 & -\zeta_8\sigma(x_1)\\
x_1 & \sigma(x_0) & -\zeta_8x_3 & -\zeta_8\sigma(x_2)\\
\zeta_8x_2 & \zeta_8\sigma(x_1) & x_0  & -i\sigma(x_3)\\
\zeta_8x_3 & \zeta_8\sigma(x_2) & x_1  & \sigma(x_0)\\
\end{array}
\right).
\end{equation}
Let us denote by ${\bf c}_1$, ${\bf c}_2$, ${\bf c}_3$ and ${\bf c}_4$
the 4 columns of the above matrix.
It can be easily seen that the above manipulations result in having
columns 1 and 3, and 2 and 4 satisfying 
\[
{\bf c}_1^T{\bf c}_3=0,~{\bf c}_2^T{\bf c}_4=0
\]
without changing the shaping. This construction thus increases 
the ``orthogonality-likeness'' of the columns of the code without altering its other properties. Though this transformation does increase the number of zeroes 
in the $R$-matrix of the QR decomposition, it does not reduce the 
decoding complexity as defined. This is due to the fact that, albeit the above relations resemble the real inner product, the vectors $\mathbf{c}_i$ actually consist of complex elements. 

We now propose another puncturing, which focuses this time
on having orthonormal columns, in order to have  provable fast decodability.
Since $\QQ(\zeta_5,i)=\QQ(\zeta_{20})$, where $\zeta=\zeta_{20}=e^{2i\pi/20}$ is a
primitive 20th root of unity, we can alternatively take as basis for
$\QQ(i,\zeta_5)$ the set $\{1,\zeta,\zeta^2,\zeta^3\}$. An element
$x$ is then written as
\[
x=a+b\zeta+c\zeta^2+d\zeta^3,~a,b,c,d\in\QQ(i).
\]
We perform the following puncturing and restriction of coefficients. Take $x_0,x_1$ of the form
\[
a+ib\zeta+c\zeta^2+id\zeta^3,~a,b,c,d\in\ZZ
\]
so that $\sigma^2(x_0)={x_0}^*$, $\sigma^2(x_1)={x_1}^*$. For $x_2$
and $x_3$, take instead
\[
a(1+i)+b(1-i)\zeta+c(1+i)\zeta^2+d(1-i)\zeta^3,~a,b,c,d\in\ZZ
\]
to get this time $\sigma^2(x_2)=-{x_2}^*$, $\sigma^2(x_3)=-{x_3}^*$.
This results in a codebook $\Cc_2$ with codewords given by
\begin{equation}\label{eq:MIDO2}
X=
\left(
\begin{array}{rrrr}
x_0 & i\sigma(x_3)& -{x_2}^* & i{\sigma(x_1)^*}\\
x_1 & \sigma(x_0) & -{x_3}^* & -{\sigma(x_2)^*}\\
x_2 & \sigma(x_1) & {x_0}^*  & -{\sigma(x_3)^*}\\
x_3 & \sigma(x_2) & {x_1}^*  & {\sigma(x_0)^*}\\
\end{array}
\right).
\end{equation}
An easy computation shows that the 1st and 3rd row, resp. the
2nd and 4th row, are orthonormal.
By permuting the 2nd and 3rd rows and columns resp.,
we get
\begin{eqnarray}
&& X_{\Cc_2}(x_0,x_1,x_2,x_3)=\nonumber \\
&&\left(
\begin{array}{rrrr}
x_0 & -x_2^*  & i\sigma(x_3)& i{\sigma(x_1)}^*\\
x_2 &  x_0^*  & \sigma(x_1) & -{\sigma(x_3)}^*\\
x_1 & -{x_3 ^*} & \sigma(x_0) &  -{\sigma(x_2)}^*\\
x_3 &  {x_1^*}  & \sigma(x_2) &  {\sigma(x_0)}^*\\
\end{array}
\right)\label{eq:MIDO2-AF}
\end{eqnarray}
which clearly exhibits the Alamouti block structure of the code.

As previously for the $A_4$-code, a PAM vector $(g_1,\dots,g_{16})$ is  mapped into a $(4\times4)$ matrix
$$
\sum_{i=1}^{16} g_i B_i,
$$
where the basis matrices $B_i$ are
\begin{eqnarray*}
B_1&=&X_{\Cc_2}(x_0(1,0,0,0),0,0,0),\\
B_2&=&X_{\Cc_2}(x_0(0,1,0,0),0,0,0),\\
&\vdots&\\
B_{16}&=&X_{\Cc_2}(0,0,0,x_3(0,0,0,1)).\\
\end{eqnarray*}
Again a direct calculation gives
\[
\Re(\mathrm{Tr}(HB_i (HB_j)^{\dagger})=0
\]
for $1\leq i \leq 4$ and $5\leq j \leq 8$ and a complexity of $|S|^{12}$.

\subsection{The Srinath-Rajan (SR) code}
\label{SR-code}
So far, the best performing fast-decodable $4\times 2$ code has been the code based on stacked CIODs proposed in \cite{SR-fast}. The real and imaginary parts of the encoded symbols are separated in a careful way, so that when a rotated 4- or 16-QAM alphabet is used, the code has high coding gain. It is moreover conjectured that the code has the NVD property, but this has not been proved. Before rotating the constellation, the code is equivalent to transmitting four independent Alamouti blocks $A,B,C,D$:
$$
X_{\textrm{SR unrotated}}=\left(\begin{array}{cc}A&\zeta_8 B\\ \zeta_8C &D \end{array}\right),
$$
where  a primitive 8th root of unity $\zeta_8$  has been added in order to maximize the coding gain of the rotated code. Because the blocks are independent prior to rotation, the unrotated code does not have full diversity. For this reason, getting a proof for the possible NVD by using the theory developed in this paper does not seem  possible.  

If we ignore the constant $\zeta_8$, the code is exactly of the same form as the codes proposed in this paper (except possibly for the NVD), as clearly
$$
\left(\begin{array}{cc}A& B\\ C&D \end{array}\right)\in M_2(\mathbf{H}).
$$
Adding the constant $\zeta_8$ does not affect fast decodability, but helps to maximize the coding gain.
 
We have not tried whether it is possible to improve the coding gain of the  codes  proposed in this paper by using a suitable rotation. This may be seen as a reason for the small performance loss of the proposed codes compared to the rotated SR code. We did however try another type of optimization, namely using a spherical constellation instead of linearly dispersed constellation (cf. \ref{mindetorder}). The spherically shaped fast-decodable code outperforms  the SR code (see Section \ref{sec:simu} below) by a fraction of a dB. 


\section{Simulation results of MIDO codes}
\label{sec:simu}

In Figure \ref{fig:simul}, we have plotted the block error rates of different MIDO 
codes at the rate 4 bpcu. All of the codes use the 2-PAM or 4-QAM alphabet, except for the spherically 
shaped $A_4$ code referred to as $NC\, (FD,A4,\textrm{spher.})$. This code  is constructed  by using a 6-PAM alphabet and then choosing the codewords with the smallest Frobenius norms, resulting in a codebook with  $2^{16}$ codewords. 

We can see that the punctured code 
$\Cc_2$ ($NC\, (FD,\textrm{punct.})$) does not perform too well due to its small (though non-vanishing) coding gain. 
The other new codes, for their part, perform more or less equally to the Biglieri-Hong-Viterbo (BHV) 
code. The $A_4$ code ($NC\, (FD,A4)$) is slightly beaten by the BHV code at low-moderate SNRs, but will eventually 
outperform it starting from 20 dB, thanks to its full diversity. The shaped code ($NC\, (\textrm{shaped})$), which is not 
fast-decodable, outperforms the BHV code starting from 16 dB. The Srinath-Rajan (SR) code with a rotated 4-QAM constellation wins the BHV code by a fraction of a dB. The spherically shaped $A_4$ 
outperforms the BHV and SR codes by roughly 0.5 dB, and performs slightly better compared to the best 
previously known MIDO code IA-MAX \cite{asykonstru}. The code IA-MAX is constructed from a 
certain maximal order, and has higher decoding complexity. It is added here for the sake of 
completeness in comparison.

Let us point out that we have not optimized any of the proposed codes by \emph{e.g.} rotating the constellation. Just out of interest, we  simulated the unrotated  SR code, and the performance got somewhat worse than that of the $A_4$ code. Hence, we also expect some improvement in the performance of our codes, when an optimal rotation is used.

\begin{figure}
\includegraphics[scale=0.5]{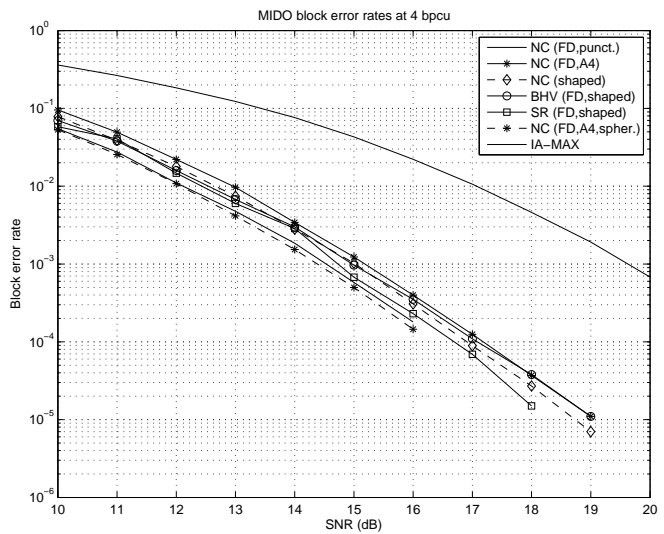}
\caption{Comparison among different MIDO codes at rate 4 bpcu.}
\label{fig:simul}
\end{figure}

We can also use the  maximal order of the $A_4$ code algebra, which will result in similar 
performance as the IA-MAX and spherically shaped $A_4$ code. While the maximal order codes 
are not fast-decodable, the spherically shaped $A_4$ code still uses the same linear dispersion 
matrices and hence admits fast decodability. However, an extra step is required to check that 
the decoded word really belongs to the codebook. For a detailed description of the required changes in a sphere decoder, see \cite{maxdecoding}. As a conclusion, sticking to linear dispersion 
and natural orders causes a penalty of about 0.5 dB in the BLER performance. On the other hand, 
it seems that the requirement of fast-decodability itself does not cause any performance loss. 
This is hardly a surprise, as the proposed constructions are nothing but orders of cyclic 
division algebras, which have been shown to have excellent performance \cite{HLRV2,CK,asykonstru}.

%
%

\section{Fast-decodable codes for the $6\times 3$ and $6\times 2$ channels}
\label{sec:codes6}

Let us now extend our code constructions to six transmit antennas. While this paper mainly deals 
with MIDO codes, \emph{i.e.}, codes for two receivers, here we also consider the case of three receivers. 
The reason for this is that the embedding (\ref{psikuvaus})
$$
\psi:\mathcal{A}\hookrightarrow\textrm{diag}(M_{n/2}(\mathbf{H})^m)
$$
into to a matrix ring of the Hamiltonian quaternions naturally yields codes with dimension rate $R_1=n$,  
which is also the number of Tx antennas. Thus, for six transmitters we have $R_1=6$, which is ideal for 
reception with three antennas. From this, we can construct a code suitable for two receivers ($R_1=4$) by puncturing. The so-called \emph{smart puncturing} \cite{spm,asykonstru} will be applied in order to further reduce the decoding complexity, while maintaining a low peak-to-mean power ratio (PAPR).

\subsection{Construction for the $6\times 3$ channel}
We build our $(6\times 6)$ code matrix analogously to the $(4\times 4)$ case (cf. Subsection \ref{family}). To this end, we consider the index-six cyclic algebra
$$
\mathcal{A}=(\mathbb{Q}(\zeta_7)/\mathbb{Q},\sigma:\zeta_7\mapsto \zeta_7^3,-3/4)
$$
built upon the 7th cyclotomic field. Since -3 is inert ($(3 \ \textrm{mod}\ 7)$ generates the whole group $\mathbb{Z}_7^*$), the element $\gamma=-3/4$ is a non-norm element and $\mathcal{A}$ is a division algebra.  As the center $\mathbb{Q}$ is totally real and only has one infinite place which is ramified, we have an embededding $\mathcal{A}\hookrightarrow M_{3}(\mathbf{H})$.

Let us now build the embedded code matrix more explicitly. We start by noting that
$$
\sigma^3(x)=x^*
$$
for all $x\in\mathbb{Q}(\zeta_7)$, and hence, taking into account that  $\sigma(x^*)=\sigma(x)^*$, we get
$$
\sigma^4(x)=\sigma(x)^*, \quad \sigma^5(x)=\sigma^2(x)^*.
$$
We can again start with the left regular representation   and perform some simple manipulation on the resulting matrix: first, we swap the 2nd and the 4th row, and the 3rd and the 5th row. After this, we swap the 3rd and the 4th row. Next, we do the same for the columns. Let us denote this intermediate form by $X'$. Then we balance the effect of $\gamma$ to get a more unified energy distribution among the antennas. This can be done by conjugating the matrix $X'$ by the matrix
$$
P=\textrm{diag}(r,r^2,r,r^2,r,r^2),
$$
where $r=\sqrt{|\gamma|}$. Finally, we do the exchange $x_3\leftrightarrow x_1$ and $x_4\leftrightarrow x_2$, followed by $x_2\leftrightarrow x_3$.
The final form of the code matrix now becomes
\begin{equation}\label{6x3}
X=PX'P^{-1}=\left(\begin{array}{ccc}A&B&C\end{array}\right),
\end{equation}
where each
\begin{equation}
A=\left(\begin{array}{rr}
x_0 & -rx_1^*  \\
rx_1 & x_0^*  \\
x_2 & -rx_3^*  \\
rx_3  & x_2^*  \\
x_4 & -rx_5^*  \\
rx_5 & x_4^*  \\
\end{array}\right),
\end{equation}
\begin{equation}
B=\left(\begin{array}{rr}
 -r^2\sigma(x_5)& -r\sigma(x_4)^*   \\
 r\sigma(x_4) & -r^2\sigma(x_5)^* \\
 \sigma(x_0) & -r\sigma(x_1)^*  \\
 r\sigma(x_1) & \sigma(x_0)^*   \\
 \sigma(x_2) &  -r\sigma(x_3)^*  \\
 r\sigma(x_3) & \sigma(x_2)^*  \\
\end{array}\right),
\end{equation}
and
\begin{equation}
C=\left(\begin{array}{rr}
 -r^2\sigma^2(x_3) & -r\sigma^2(x_2)^*\\
 r\sigma^2(x_2) & -r^2\sigma^2(x_3)^*\\
 -r^2\sigma^2(x_4) & -r\sigma^2(x_5)^*\\
 r\sigma^2(x_5) & -r^2\sigma^2(x_4)^*\\
 \sigma^2(x_0) & -r\sigma^2(x_1)^*\\
 r\sigma^2(x_1) & \sigma^2(x_0)^*\\
\end{array}\right)
\end{equation}
consist of three  Alamouti blocks.

The encoding can be performed similarly as in the $4\times 2$ case. Let us denote the 36 basis matrices by $$B_1=B_1(x_0(1,0,0,0,0,0),0,0,0,0,0),$$
$\hspace{2cm}\vdots$  $$B_{2}=B_{2}(x_0(0,1,0,0,0,0),0,0,0,0,0),$$ $$B_{36}=B_{36}(0,0,0,0,0,x_5(0,0,0,0,0,1)).$$ We then form a finite code by setting 
$$
\Cc_{6\times 3}=\{\sum_{i=1}^{36}g_iB_i\ |\ g_i\in \mathcal{G} \},
$$
where $\mathcal{G}\subseteq\Z$ is, for instance, a $Q$-PAM alphabet. 


\subsection{Decoding}
Let us now consider the sphere decoding process as described in \ref{sec:FD} for the code (\ref{6x3}). Following the above notation, we notice that the code lattice has six basis matrices $B_1,\ldots,B_6$ of the form
$$
\left(\begin{array}{cccccc}
x_0&&&&&\\
&x_0^*&&&&\\
&&\sigma(x_0)&&&\\
&&&\sigma(x_0)^*&&\\
&&&&\sigma^2(x_0)&\\
&&&&&\sigma^2(x_0)^*\\
\end{array}\right),
$$
and six basis matrices $B_7,\ldots,B_{12}$ of the form
$$
\left(\begin{array}{ccc}
A'&0&0\\
0&B'&0\\
0&0&C'\\
\end{array}\right),
$$
where
$$A'=\left(\begin{array}{cc}0&-rx_1^*\\
rx_1&0\\
\end{array}\right),$$
$$B'=
\left(\begin{array}{cc}
0&-r\sigma(x_1)^*\\
r\sigma(x_1)&0\\
\end{array}\right),$$
and
$$C'=
\left(\begin{array}{cc}0&-r\sigma^2(x_1)^*\\
r\sigma^2(x_1)&0\\
\end{array}\right).
$$
A straightforward calculation shows that
$$
\Re(\textrm{Tr}(HB_i(HB_j)^\dag))=0
$$
for  $1\leq i\leq 6,\ 7\leq j\leq 12$ and any channel matrix $H$. Hence, the $(36\times 36)$ $R$-matrix of the QR decomposition has a $(6\times 6)$ zero block in the corresponding position, and the $(12\times 12)$ upper left corner of $R$ looks like
$$
\left(\begin{array}{cc}
R^{1,1} & 0\\
0 & R^{2,2}\\
\end{array}\right),
$$
where the blocks $R^{i,i}$ are $(6\times 6)$ matrices. From this we see that the symbols $g_1,\ldots,g_6$ can be decoded independently of the symbols $g_7,\ldots,g_{12}$, resulting in complexity $2|S|^{30}$ instead of the full complexity $|S|^{36}$. Further reductions are possible by a change of basis, similarly as in the $4\times 2$ case. By forming the basis of elements half of which are real and the other half  purely imaginary (cf. \ref{family}), we get more zeros in the $R$ matrix. In that case we again have, for any channel matrix $H$, that
$$
\Re(\textrm{Tr}(HB_i(HB_j)^\dag))=0
$$
 for $1\leq i\leq 6,\ 7\leq j\leq 12$, but further also get
 $$
\Re(\textrm{Tr}(HB_i(HB_j)^\dag))=0
$$
for  $1\leq i\leq 3,\ 4\leq j\leq 6$ and  $7\leq i\leq 9,\ 10\leq j\leq 12$, resulting in complexity $4|S|^{27}$.

 \subsection{Construction for the $6\times 2$ channel by puncturing}
In order to construct a $6\times 2$ MIDO code, we will next consider a punctured version of the above code. The puncturing affects the shape of the code lattice, so different puncturing will give a different lattice and whence also different performance. One obvious option is to keep an eye on the Gram matrix of the lattice -- the closer it is to a (scaled) identity matrix, the better the shape.  A smart puncturing may also aid the decoding process, namely we may puncture the basis matrices that cause nonorthogonality. On the other hand, it is not a good idea to puncture all six basis matrices corresponding to one of the elements $x_i$ in (\ref{6x3}), because this will cause zeros in the encoding matrix and hence increase the PAPR.

Here we provide just one possible puncturing, to give the reader an idea as to how one may go about it.
Let us denote the basis matrices as in the previous section by $B_1,\ldots,B_{36}$.
We puncture the following basis matrices
$$\textrm{in } x_2:\quad B_{13},B_{14},B_{15},$$
$$\textrm{in } x_3:\quad B_{19},B_{20},B_{21},$$
$$\textrm{in } x_4:\quad B_{25},B_{26},B_{27},$$
$$\textrm{in } x_5:\quad B_{31},B_{32},B_{33}.$$
The resulting code will still have the same orthogonality relations as the original code, but  will only have 24 basis elements giving us decoding complexity $4|S|^{15}$.

 \section{Further generalizations via conjugations of the left-regular representation}\label{sec:furthergen}
 As already pointed out, we can always embed a division algebra into a matrix ring of the Hamiltonian quaternions, provided that the center is totally real and all of its infinite places ramify. For all such division algebras, we have that $\sigma^{n_t/2}(x)=x^*,\, \sigma^{j+n_t/2}(x)=\sigma^{j}(x)^*$, and $\gamma<0$. The embedding 
$$
\psi:\mathcal{A}\hookrightarrow\textrm{diag}(M_{n/2}(\mathbf{H})^m),
$$ 
however, will only give us the existence of a fast-decodable code with dimension rate $n=n_t$. 

In what follows, we are going to show how to overcome the problem of the implicit nature of the map $\psi$. Once we have constructed a CDA $\mathcal{A}=(E/\mathbb{Q},\sigma,\gamma)$ of the required form, the explicit map $\psi:\mathcal{A}\rightarrow M_{n_t/2}(\mathbf{H})$ is given as follows. 
\begin{proposition}\label{thm:general}
 Let $X$ denote the left regular representation matrix of an element $a=x_0+ux_1+\cdots+u^{n_t-1}x_{n_t-1}\in\mathcal{A}$. Then
$$
\psi(X)=BPX(BP)^{-1}\in M_{n_t/2}(\mathbf{H}),
$$
where the elements $P(i,j), \,1\leq i, j\leq n_t,$ of the permutation matrix $P$ are
$$
P(i,j)=\left\{\begin{array}{clc} 1, & \ \textrm{if}\quad  2\not|\,\, i\quad \textrm{and}\quad j=\frac{i+1}2,\\
1, & \  \textrm{if} \quad  2\, | \, \, i \quad \textrm{and} \quad j=\frac{i+n_t}2,\\
0, &  \ \textrm{otherwise}
 \end{array}\right.
$$ 
and
$$
B=\textrm{diag}(\sqrt{|\gamma|},|\gamma|,\ldots,\sqrt{|\gamma|},|\gamma|)
$$
is the energy balance matrix. 
\end{proposition}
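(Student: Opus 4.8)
The plan is to compute $\psi(X)=BP\,X\,(BP)^{-1}=B\,(PXP^{-1})\,B^{-1}$ blockwise: I would show that the conjugation by $P$ reshuffles the left regular representation into an $(n_t/2)\times(n_t/2)$ array of $2\times 2$ blocks of ``Alamouti shape'' $\left(\begin{smallmatrix}a&\gamma\overline b\\ b&\overline a\end{smallmatrix}\right)$, and that the subsequent conjugation by the real diagonal matrix $B$ rescales each such block into a genuine Hamiltonian quaternion $\left(\begin{smallmatrix}a&-\overline{b'}\\ b'&\overline a\end{smallmatrix}\right)$ with $b'=\sqrt{|\gamma|}\,b$ (cf.\ the description of $\HH$ as exactly the set of $2\times2$ complex matrices of this form in Subsection~\ref{subsec:hamil}). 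Since $\psi$ is then a conjugate of the faithful, multiplicative left regular representation by the invertible matrix $BP$, it is automatically a $\QQ$-algebra embedding $\mathcal A\hookrightarrow M_{n_t/2}(\HH)$, so it suffices to establish the block claim.

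First I would record the explicit form of the left regular representation of $a=x_0+ux_1+\cdots+u^{n_t-1}x_{n_t-1}$ in the basis $\{1,u,\dots,u^{n_t-1}\}$: its $(i,j)$ entry is $\gamma^{\varepsilon(i,j)}\sigma^{j-1}\!\big(x_{(i-j)\bmod n_t}\big)$, where $\varepsilon(i,j)=1$ if $i<j$ and $0$ otherwise --- exactly the pattern of \eqref{eq:matrix}. Next I would observe that $P$ is indeed a permutation matrix, realizing the bijection $\pi$ with $\pi(2l-1)=l$ and $\pi(2l)=l+n_t/2$ for $1\le l\le n_t/2$; hence $(PXP^{-1})_{ij}=X_{\pi(i),\pi(j)}$, and the $2\times2$ block occupying rows $2l-1,2l$ and columns $2l'-1,2l'$ of $PXP^{-1}$ is
\[
\begin{pmatrix}X_{l,l'}&X_{l,\,l'+n_t/2}\\ X_{l+n_t/2,\,l'}&X_{l+n_t/2,\,l'+n_t/2}\end{pmatrix}.
\]

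The crux is a short computation using the two standing hypotheses on $\mathcal A$: that $\sigma^{n_t/2}$ acts as complex conjugation on $E$ (so $\sigma^{j+n_t/2}(y)=\sigma^{j}(y)^{*}$) and that $\gamma\in\QQ$ with $\gamma<0$. Writing $a:=X_{l,l'}$ and $b:=X_{l+n_t/2,\,l'}$, I would first check --- using that the block indices run only over $\{1,\dots,n_t/2\}$ --- that the two relevant exponents are $\varepsilon(l+n_t/2,l')=0$ and $\varepsilon(l,l'+n_t/2)=1$; then applying $\sigma^{n_t/2}={}^{*}$ to the $\sigma$-exponent and using $-n_t/2\equiv n_t/2\pmod{n_t}$ on the subscript, together with $\gamma=\overline\gamma$, yields $X_{l+n_t/2,\,l'+n_t/2}=\overline a$ and $X_{l,\,l'+n_t/2}=\gamma\,\overline b$. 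Thus every block of $PXP^{-1}$ has the form $\left(\begin{smallmatrix}a&\gamma\overline b\\ b&\overline a\end{smallmatrix}\right)$. Finally, conjugating by $B=\diag(\sqrt{|\gamma|},|\gamma|,\dots,\sqrt{|\gamma|},|\gamma|)$ multiplies the $(1,2)$ entry of each such block by $\sqrt{|\gamma|}/|\gamma|=|\gamma|^{-1/2}$ and the $(2,1)$ entry by $|\gamma|^{1/2}$ while fixing the diagonal; since $\gamma=-|\gamma|$, the block becomes $\left(\begin{smallmatrix}a&-|\gamma|^{1/2}\overline b\\ |\gamma|^{1/2}b&\overline a\end{smallmatrix}\right)$, which with $b'=|\gamma|^{1/2}b$ lies in $\HH$. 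Hence $\psi(X)\in M_{n_t/2}(\HH)$.

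I expect the index bookkeeping to be the only delicate part: keeping straight how $\pi$ permutes rows versus columns, verifying that the $\gamma$-exponents land exactly as claimed (which is precisely why the balancing matrix $B$ has period two rather than a longer pattern), and confirming that $(i-j)\bmod n_t$ transforms compatibly under $\sigma^{n_t/2}={}^{*}$. None of this is conceptually hard, but it is where a sign or an off-by-one would most easily slip in, so I would sanity-check the general formulas against the worked $n_t=4$ computation \eqref{eq:matrixwconj}--\eqref{eq:finalmatrix} (up to the harmless variable relabeling used there) and against the $n_t=6$ matrix \eqref{6x3}.
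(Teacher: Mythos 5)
Your proposal is correct and follows essentially the same route as the paper: conjugation by $P$ pairs the column (and row) indexed by $\sigma^j$ with the one indexed by $\sigma^{j+n_t/2}$ so that $PXP^{-1}$ becomes an array of blocks $\left(\begin{smallmatrix}a&\gamma\overline b\\ b&\overline a\end{smallmatrix}\right)$, and conjugation by $B$ then rescales each block into a genuine Hamiltonian quaternion. Your version is in fact more explicit than the paper's (which argues descriptively that ``the plus and minus signs are automatically rearranged''), since you verify the $\gamma$-exponents and the action of $\sigma^{n_t/2}$ entrywise; this is a welcome tightening rather than a different approach.
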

\begin{proof} Let us first consider the columns of $X$, and denote $X=(1, \sigma, \ldots, \sigma^{n_t-1})$ to represent the fact that the first column is mapped by the identity element, the second is mapped by $\sigma$, etc. In order to get the required Alamouti block form, we need to reorder the columns as
$$
(1,\sigma^{n_t/2},\sigma^2,\sigma^{2+n_t/2},\ldots,\sigma^{n_t/2-1}\sigma^{n_t-1}),
$$
so that $\sigma^j$ is followed by its conjugate for all $j$. This is done exactly by post-multiplying $X$ by $P^{-1}$.

Next we have to rearrange the rows. Notice first that, by ignoring $\gamma$, the rows of $XP^{-1}$ look like
$$\left(\begin{array}{ccccc}
a_1&b_1^*&\ldots& a_{n_t/2}&b_{n_t/2}^* \\
c_1&d_1^*&\ldots& c_{n_t/2}&d_{n_t/2}^* \\
\vdots&\vdots&&\vdots&\vdots\\
s_1&t_1^*&\ldots&s_{n_t/2}&t_{n_t/2}^* \\
\hline 
b_1&a_1^*&\ldots& b_{n_t/2}&a_{n_t/2}^* \\
d_1&c_1^*&\ldots& d_{n_t/2}&c_{n_t/2}^* \\
\vdots&\vdots&&\vdots&\vdots\\
t_1&s_1^*&\ldots&t_{n_t/2}&s_{n_t/2}^*\\
\end{array}\right)$$ 
where the horizontal line divides the matrix in two parts each having $n_t/2$ rows. We easily see that the Alamouti block form can be achieved by pairing the rows as
$$
(1,n_t/2+1),(2,n_t/2+2),\ldots,(n_t/2,n_t).
$$
This is done by pre-multiplying $XP^{-1}$ by $P$, \emph{i.e.}, we conjugate $X$ by $P$. As the last step, we should take care of the effect of $\gamma$. By conjugating $PXP^{-1}$ further by $B=\diag(\sqrt{|\gamma|},|\gamma|,\ldots,\sqrt{|\gamma|},|\gamma|)$, the elements $\gamma$ will appear in each $(2\times 2)$ block of the matrix as follows:
$$
\left(\begin{array}{cc}
(\pm)\sqrt{|\gamma|}&(\pm) |\gamma|\\
(\pm)|\gamma|&(\pm)\sqrt{|\gamma|}\\ 
\end{array}\right).
$$
In addition, the plus and minus signs are automatically rearranged by this conjugation so that the resulting matrix consists of Alamouti blocks. \end{proof}

\begin{remark}
After Proposition \ref{thm:general}, we can algebraically optimize the normalized minimum determinant. Namely, the resulting parallelotope will be  exactly that given by Proposition \ref{thm:conjugation}. Notice that this was not the case before the conjugation, for while the conjugation  does not affect the non-normalized minimum determinant, it does affect the measure of the fundamental parallelotope and hence the normalized minimum determinant!
\end{remark}

Now that we have an explicit form of the mapping $\psi$, the fast-decodability property can be seen as follows: with $\mathbb{Q}$ as the center ($m=1$), the $R$-matrix of the QR decomposition of the matrix $B$ (cf. \ref{sec:FD}) will consist of  $(n\times n)$  blocks  $R^{i,j},\ 1\leq i,j \leq n$, where 
\begin{equation} \label{n-reduction}
R^{1,2}=R^{3,4}=\cdots=R^{n-1,n}=\mathbf{0}_{n\times n}
\end{equation} 
and the diagonal blocks $R^{i,i},\ 1\leq i \leq n$, are block-diagonal: 
\begin{equation}\label{n/2-reduction}
R^{i,i}=\left(\begin{array}{cc}
P^{1,1}&0\\
0& P^{2,2}\\
\end{array}\right)_{n\times n}.
\end{equation}
The zero blocks (\ref{n-reduction}) result from the Alamouti block structure and offer us a reduction of $n$ real dimensions. The diagonal block structure (\ref{n/2-reduction}) is due to the fact that when we construct the algebra upon a complex multiplication field, we can always choose a basis in which half of the elements are real and the other half purely imaginary. This, for its part, provides us with further reduction by $\frac n2$ dimensions.  Hence,  the decoding complexity will be of order 
$$\leq |S|^{n_t^2-n_t-\frac {n_t}2}=|S|^{n_t^2-\frac {3n_t}2},$$ 
where the factor ${n_t}^2$ is the exhaustive search complexity. 

By puncturing, we obtain fast-decodable codes suitable for any number of receivers. The complexity of the punctured code is at most $$|S|^{n_tR_1-\frac {3n_t}2},$$ where $R_1\leq n_t$ is the dimension rate. For $n_r=2$, we get a complexity reduction of $\frac{4n_t-2.5n_t}{4n_t}=37.5\%$ as promised. However, this may require a non-integral basis, and hence cause performance loss compared to an integral basis. With an integral basis, we get a reduction of $\frac{4n_t-n_t}{4n_t}=25\%$ while guaranteeing a high coding gain. 

In Table \ref{tab:comp} we have summarized the complexities for $n_t=4, 6, 8$ and $2\leq n_r\leq \frac{n_t}2$ as an example.

 \begin{table}[thp!]\caption{Complexities of the proposed fast-decodable codes.}\label{tab:comp}
\begin{center}
\begin{tabular}{|c|c|c|c|}
\hline
$n_t\times n_r$ & $R_1$ & $n_tR_1-\frac{3n_t}2$ & Comp.reduction$/n_tR_1$\\
\hline
$4\times 2$ & $4$ & $10$ & $37.50\%$\\
$6\times 3$ & $6$ & $27$ & $25.00\%$\\
$6\times 2$ & $4$ & $15$ & $37.50\%$\\
$8\times 4$ & $8$ & $52$ & $18.75\%$\\
$8\times 3$ & $6$ & $36$ & $25.00\%$\\
$8\times 2$ & $4$ & $20$ & $37.50\%$\\
\vdots & \vdots & \vdots & \vdots \\
\hline
\end{tabular}
\end{center}
\end{table}

%
%

\section{Conclusions}
\label{sec:conc}
In this paper, fast-decodable asymmetric lattice space-time codes were studied, proposing one possible generalization of the Alamouti code and the quasi-orthogonal codes to any even number of transmit antennas $n_t$ and for any dimension rate $R_1\leq n_t$. The codes allow linear ML processing with \emph{e.g.} a sphere decoder for any number of receivers $\geq R_1/2$, but with  lower dimensionality (less variables per linear equation). It was explicitly shown how such novel constructions follow from general algebraic principles by embedding a division algebra into a matrix ring $M_k(\Hv)$ of the  Hamiltonian quaternions. All this is in strong contrast to the previous \emph{ad hoc} constructions of fast-decodable codes that have been specific to a certain number of antennas and lacking an obvious  generalization. The proposed codes furthermore enjoy the NVD property, a property that no other fast-decodable MIDO code found in the literature has been proved to have.

We mainly considered the $4\times 2$ MIDO case suitable for DVB-NGH, but also provided constructions for the $6\times 2$ and $6\times 3$ cases. The explicit embeddings obtained in these situations were shown to be fully generalizable to any even number of Tx antennas.
Simulations were presented to show that the performance of the proposed codes is comparable to the best known MIDO codes. The achieved complexity reduction up to 37.5\% is also among the best known for the MIDO channel.

In addition, a complete solution to the discriminant minimization problem for division algebras with arbitrary centers was given. As an application a normalized minimum determinant bound for code lattices in $M_k(\Hv)$ was derived from the algebraic  results.

%
%

\section*{Acknowledgments}

The research of R. Vehkalahti and C. Hollanti is supported by the Emil Aaltonen
Foundation's Young Researcher's Project, and by the Academy of Finland 
grant \#131745.

This research was partly carried out during the visit of C. Hollanti at the Nanyang Technological University, Singapore. The research of F. Oggier is supported in part by the Singapore National
Research Foundation under Research Grant NRF-RF2009-07 and NRF-CRP2-2007-03,
and in part by the Nanyang Technological University under Research
Grant M58110049 and M58110070.

%
%


%
%

\section*{Appendix}

In this Appendix we are going to present some basic results from the theory of central simple algebras and in particular from the theory of Hasse-invariants. These results are needed only in Section \ref{bounds}.

For a quick introduction we refer the reader to \cite{HLRV2} and \cite{Roopenkirja} where similar optimization has been done.

Let us consider a $K$-central division algebra of index $n$.
Then attached to each pair $(\A, P)$, where $P$ is a prime of $K$,  is a positive rational number $h_P={a/m_P}$, the so-called \emph{Hasse-invariant} of $\A$ at $P$.
The Hasse invariants of $\A$ fulfill the following.
When $P$ is a prime ideal of $K$, then
$$
h_P=\frac{a}{m_P},  \quad  0\leq a < m_P \leq n,\,\, (a, m_P)=1,
$$
 when $P$ is infinite and real, then
$$
h_P=1/2 \ \mathrm{ or } \ h_P=0,
$$
and when  $P$ is infinite and complex, then $$ h_P=0.$$
The number $m_P$ is called the \emph{local index} at prime $P$ (see Section \ref{mainmapsection}). We say that the algebra $\D$ is \emph{ramified} at the prime $P$, if
$h_P\neq 0$.
The Hasse invariants define the algebraic structure of a division algebra and in particular the discriminant of the algebra.
\begin{proposition}\label{collection}
Assume  that  $P_1,\ldots,
P_s $ are a set of finite  prime ideals of $\mathcal{O}_K$ and $P_{s+1},\dots, P_n$ are a set of real primes.

Assume further that a sequence of rational
numbers
$$
\frac{a_1}{m_{P_1}},\ldots , \frac{a_s}{m_{P_s}}, \frac{a_{s+1}}{m_{P_{s+1}}},\ldots, \frac{a_{n}}{m_{P_{n}}},
$$
subject to the restriction that when $i>s$, $a_i/m_{P_i}=1/2$,
satisfies
$$
\sum_{i=1} ^{n} \frac{a_i}{m_{P_i}} \equiv 0 \qquad  (
\textrm{mod}\quad 1 ),
$$
$1\leq a_i \leq m_{P_i}$, and $(a_i , m_{P_i})=1$.

Then there exist a $K$-central division algebra $\mathcal{A}$ that has local
indices $m_{P_i}$ and the least common multiple (LCM) of the
numbers $\{m_{P_i}\}$ as an index.

If $\Lambda$ is a maximal $\mathcal{O}_K$-order in $\mathcal{A}$, then
 the discriminant of $\Lambda$ is
$$
 d(\Lambda/ \mathcal{O}_K )= \prod_{i=1} ^s
P_i^{(m_{P_i}-1)\frac{[\A:K]}{m_{P_i}}}.
$$
\end{proposition}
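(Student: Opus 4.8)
The statement has two independent halves -- an existence assertion and a discriminant formula -- and I would treat them in that order. For existence the tool is the Albert--Brauer--Hasse--Noether theorem, i.e.\ the exactness of
\[
0 \longrightarrow \Br(K) \longrightarrow \bigoplus_{v}\Br(K_v) \xrightarrow{\ \sum_v \mathrm{inv}_v\ } \Q/\Z \longrightarrow 0,
\]
where $v$ runs over all places of $K$ and $\mathrm{inv}_v$ has image in $\tfrac12\Z/\Z$ at a real place and is $0$ at a complex place. The data prescribed in the hypothesis -- the invariants $a_i/m_{P_i}$ at $P_1,\dots,P_s$, the invariant $1/2$ at the real primes $P_{s+1},\dots,P_n$, and $0$ at all other places -- defines an element of $\bigoplus_v\Br(K_v)$ precisely when $\sum_i a_i/m_{P_i}\equiv 0\pmod 1$, and by exactness it is the image of a unique class in $\Br(K)$; take $\mathcal{A}$ to be the division algebra in that class. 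Over a local field the invariant $a/m$ in lowest terms is realized exactly by the division algebra of degree $m$, so the local index of $\mathcal{A}$ at $P_i$ is $m_{P_i}$, and the global index of $\mathcal{A}$ equals the least common multiple of its local indices by the classical index--LCM theorem (\cite[\S32]{R}). The divisibility $\ind(\mathcal{A}_P)\mid\ind(\mathcal{A})$ is immediate; the reverse divisibility -- the point that makes the LCM exact -- is where the global arithmetic is concentrated, and is the genuinely hard part if one is unwilling to invoke Brauer--Hasse--Noether wholesale.

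For the discriminant I would localize. The $\OO_K$-discriminant of a maximal order is determined prime by prime, $d(\Lambda/\OO_K)=\prod_P d(\Lambda_P/\OO_{K_P})$, and for each finite $P$ the completion $\Lambda_P$ is a maximal $\OO_{K_P}$-order in $\mathcal{A}_P\cong M_{r_P}(D_P)$, where $D_P$ is the local division algebra of degree $m_P$ and $n=r_Pm_P$; up to conjugacy $\Lambda_P=M_{r_P}(\Delta_P)$ with $\Delta_P$ the unique maximal order of $D_P$. Two computations then close the argument. First, a standard local calculation -- using that the different of $\Delta_P$ over $\OO_{K_P}$ is $\mathrm{rad}(\Delta_P)^{m_P-1}$ -- gives $d(\Delta_P/\OO_{K_P})=P^{m_P(m_P-1)}$, as one already sees in the quaternion case $m_P=2$ (see \cite[\S14]{R}). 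Second, passing from $\Delta_P$ to $M_{r_P}(\Delta_P)$ multiplies the discriminant exponent by $r_P^2$: writing $M_{r_P}(D_P)=\bigoplus_{i,j}E_{ij}D_P$ and using $\mathrm{Trd}_{M_{r_P}(D_P)/K}(E_{ij}cI)=\delta_{ij}\mathrm{Trd}_{D_P/K}(c)$, the reduced trace form becomes block-(anti)diagonal, with $r_P$ copies of the trace form of $\Delta_P$ on the diagonal blocks and $\binom{r_P}{2}$ hyperbolic two-block pieces, so that $d(M_{r_P}(\Delta_P)/\OO_{K_P})=d(\Delta_P/\OO_{K_P})^{\,r_P^2}$.

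Putting these together, the exponent of $P_i$ in $d(\Lambda/\OO_K)$ is $r_{P_i}^2\,m_{P_i}(m_{P_i}-1)=(m_{P_i}-1)\,[\mathcal{A}:K]/m_{P_i}$, using $[\mathcal{A}:K]=n^2=r_{P_i}^2 m_{P_i}^2$; the unramified finite primes contribute nothing ($m_P=1$ there) and the archimedean primes do not enter an ideal of $\OO_K$, so only $P_1,\dots,P_s$ survive. This is exactly the asserted product. Granting the cited structural results, what remains is purely careful bookkeeping with the reduced trace and reduced norm normalizations in the matrix-amplification step and in the local different computation; the one conceptual obstacle is the index--LCM exactness in the existence half.
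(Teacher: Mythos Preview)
The paper does not actually prove this proposition: it is presented in the Appendix as a summary of classical facts from the theory of central simple algebras and Hasse invariants, with a pointer to \cite{HLRV2} and \cite{Roopenkirja} for background. Your sketch is correct and is precisely the standard route one takes to establish such a result --- the Albert--Brauer--Hasse--Noether exact sequence for existence, the global index $=$ LCM of local indices for the index statement, and the local computation $d(\Delta_P/\OO_{K_P})=P^{m_P(m_P-1)}$ together with the matrix amplification $d(M_r(\Delta_P)/\OO_{K_P})=d(\Delta_P/\OO_{K_P})^{r^2}$ for the discriminant formula. There is nothing to compare against in the paper itself; you have supplied what the authors deliberately omitted as standard.
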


We have  the following two general results.

\begin{thm}[\cite{HLRV2}]\label{bound}
Let us suppose that we have a number field $K$ and an integer $n$, where  $4\mid n$ or $2\nmid n$. If $P_1\leq P_2$ is a pair of smallest primes in $\OO_K$, then there exists a $K$-central division algebra of index $n$ having a maximal order with the $\OO_K$-discriminant
$$
(P_1 P_2)^{n(n-1)}.
$$
This is the smallest possible discriminant for an order inside any $K$-central division algebra of index $n$.
\end{thm}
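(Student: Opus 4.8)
The plan is to split the statement into an existence half and an optimality half, and to extract both from Proposition~\ref{collection}, which records exactly which systems of local Hasse invariants occur, what index they produce (the least common multiple of the local indices), and the discriminant of a maximal order in terms of the ramified finite primes.

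\emph{Existence.} I would apply Proposition~\ref{collection} to the two smallest primes $P_1\le P_2$ of $\OO_K$, with Hasse invariant $1/n$ at $P_1$, invariant $(n-1)/n$ at $P_2$, and no other (finite or infinite) ramification. These invariants are admissible since $\gcd(1,n)=\gcd(n-1,n)=1$ and $1/n+(n-1)/n\equiv0\pmod 1$, so the proposition furnishes a $K$-central division algebra of index $\operatorname{lcm}(n,n)=n$ whose maximal $\OO_K$-orders have discriminant $P_1^{(n-1)n^2/n}P_2^{(n-1)n^2/n}=(P_1P_2)^{n(n-1)}$. Since inside a fixed division algebra the maximal orders realize the smallest discriminant, it then suffices to show that no index-$n$ division algebra over $K$ beats this value.

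\emph{Optimality.} Let $\mathcal{A}$ be any $K$-central division algebra of index $n$, let $Q_1,\dots,Q_t$ be its ramified finite primes (there are at least two, by the balancing argument below) with local indices $m_1,\dots,m_t$, ordered so that $N(Q_1)\le\cdots\le N(Q_t)$; then a maximal $\OO_K$-order of $\mathcal{A}$ has discriminant of absolute norm $\prod_{j}N(Q_j)^{e_j}$ with $e_j=n^2(1-1/m_j)$, the real places contributing nothing and hence being irrelevant here. The core of the proof is the arithmetic estimate
$$
\sum_{j=1}^{t}\Bigl(1-\tfrac1{m_j}\Bigr)\ \ge\ 2\Bigl(1-\tfrac1n\Bigr),
$$
with equality precisely when $t=2$ and $m_1=m_2=n$. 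For this I would argue separately for each rational prime $\ell\mid n$: writing $\ell^{v}$ for the exact power of $\ell$ dividing $n$, the hypothesis ($4\mid n$ or $2\nmid n$) forces $\ell^{v}\neq2$, so a real place---whose only nonzero Hasse invariant is $1/2$, of order $2$---cannot be the prime carrying the top $\ell$-primary component, and balancing the $\ell$-primary parts of the local invariants then forces the number of \emph{finite} ramified primes $Q_j$ with $\ell^{v}\mid m_j$ to be even, hence at least two (so in particular $t\ge2$). An elementary argument on the factorization of $n$---the extremal case being that one and the same pair of primes is divisible by $\ell^{v}$ for every $\ell\mid n$, forcing both to have local index $n$---then yields the displayed inequality together with its equality case. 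This is exactly where the hypothesis on $n$ is used: when $2$ divides $n$ exactly, a pair of real places can substitute for the two finite primes otherwise forced at $\ell=2$, and the bound $(P_1P_2)^{n(n-1)}$ genuinely fails.

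Granting the estimate, I finish with a weighting argument that shifts the deficit onto the smaller primes. Putting $d:=n(n-1)-e_1=n^2/m_1-n\ge0$, the estimate gives $\sum_{j\ge2}e_j\ge n(n-1)+d$, whence, using $N(Q_2)\le N(Q_j)$ for $j\ge2$ and $N(Q_2)\ge1$,
$$
\prod_{j=1}^{t}N(Q_j)^{e_j}\ \ge\ N(Q_1)^{\,n(n-1)-d}N(Q_2)^{\,n(n-1)+d}\ =\ \bigl(N(Q_1)N(Q_2)\bigr)^{n(n-1)}\Bigl(\tfrac{N(Q_2)}{N(Q_1)}\Bigr)^{d}\ \ge\ \bigl(N(Q_1)N(Q_2)\bigr)^{n(n-1)}.
$$
Finally $N(Q_1)\ge N(P_1)$ and $N(Q_2)\ge N(P_2)$ because $P_1,P_2$ are the two smallest primes of $\OO_K$, so every maximal order in every index-$n$ division algebra over $K$ has discriminant of norm at least that of $(P_1P_2)^{n(n-1)}$, and equality forces $t=2$, $m_1=m_2=n$, and $\{Q_1,Q_2\}=\{P_1,P_2\}$, i.e.\ the algebra built in the first step. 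The main obstacle is the arithmetic estimate---above all, verifying rigorously that under the stated hypothesis the real places cannot be traded for finite ramification.
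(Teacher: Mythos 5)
The paper itself does not prove this theorem: it is imported verbatim from \cite{HLRV2} and used as a black box in the Appendix, so there is no internal proof to compare against. That said, your reconstruction is sound and follows the same Hasse-invariant strategy as the cited source: existence by prescribing invariants $1/n$ and $(n-1)/n$ at $P_1,P_2$ and reading off the discriminant from Proposition~\ref{collection}; optimality by observing that for each prime $\ell$ with $\ell^{v}\,\|\,n$ the congruence $\sum_P h_P\equiv 0\pmod 1$, restricted to $\ell$-primary parts, forces at least two places of local index divisible by $\ell^{v}$, and that the hypothesis ($4\mid n$ or $2\nmid n$) excludes real places from playing this role; then the exponent estimate combined with your weight-shifting step (which is correct: $d\ge 0$ since $m_1\le n$, and $N(Q_2)/N(Q_1)\ge 1$) reduces everything to the two smallest primes. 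Two small caveats. First, your claim that the number of finite primes with $\ell^{v}\mid m_j$ must be \emph{even} is only true for $\ell=2$ (a sum of units mod an odd $\ell$ can vanish with three terms, e.g.\ $1/3+1/3+1/3$); what you actually need, and what does follow since a single unit cannot vanish mod $\ell$, is merely that this number is at least two. Second, the ``elementary argument on the factorization of $n$'' is the one genuinely nontrivial step and is left as a gesture; it can be completed by bounding $1-1/m_j\ge 1-1/n_{T_j}$ with $n_{T_j}=\prod_{\ell\in T_j}\ell^{v_\ell}$ and splitting on the number $t$ of ramified finite primes: $t=2$ forces both local indices to be $n$ (equality), $t=3$ reduces to the inequality $a+b+c\le abc+2$ for the disjoint complements of the $T_j$, and $t\ge 4$ gives $\sum_j(1-1/m_j)\ge 2$ outright. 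With those two repairs the argument is complete.
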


The following result is from \cite{Roopenkirja}, but is presented here for he first time in an article.

\begin{thm} [\cite{Roopenkirja}]\label{infinitebound}
Let $\mathcal{A}$ be a $K$-central division algebra of index
$2k=n$, where $k$ and $2$ are relatively prime and let $P_1\leq P_2$ be a pair of smallest primes in $\OO_K$.

If $K$ has at least two real primes, then there exists a $K$-central division algebra of index $n$ having a maximal order with the discriminant
$$
(P_1P_2)^{k(k-1)}.
$$

If $K$ has only one real prime $P_{\infty}$,  then there exists a $K$-central division algebra of index $n$ having a maximal order with the discriminant
$$
P_1^{n(n-1)}P_2^{k(k-1)}.
$$
This is  the smallest possible discriminant of all orders of index $n$ division algebras with center $K$.
\end{thm}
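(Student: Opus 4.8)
The plan is to prove the two halves separately: an \emph{existence} (upper bound) statement producing an index-$n$ division algebra over $K$ whose maximal order realizes the stated discriminant, and an \emph{optimality} (lower bound) statement that no index-$n$ division algebra over $K$ can do better. Both rest on the classification of central simple $K$-algebras by their local Hasse invariants, as packaged in Proposition \ref{collection}, together with the global reciprocity relation $\sum_P h_P\equiv 0\pmod 1$ and the constraint that a real place can only carry Hasse invariant $0$ or $1/2$.

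For the existence part, the decisive observation is that $\gcd(k,2)=1$, so the Brauer class we want splits canonically into a $2$-primary part of order $2$ and an odd part of order $k$, and these two parts can be placed at essentially disjoint sets of primes. I would carry the order-$2$ part at infinite places and the order-$k$ part at the two smallest finite primes. In the case where $K$ has at least two real primes, assign $h=1/2$ at two of them, $h=1/k$ and $h=-1/k$ at $P_1\le P_2$, and $h=0$ elsewhere; then $\sum_P h_P = 1/2+1/2+1/k-1/k\equiv 0$, the local indices are $2,2,k,k$ with least common multiple $\mathrm{lcm}(2,k)=n$ (using $\gcd(2,k)=1$ again), so Proposition \ref{collection} yields a division algebra of index exactly $n$ ramified at no finite prime other than $P_1,P_2$, and its discriminant is read off from the discriminant formula in the same proposition, giving $(P_1P_2)^{k(k-1)}$. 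When $K$ has only one real place, reciprocity forbids carrying the order-$2$ part at infinity alone (an odd number of $1/2$'s would remain unmatched), so it must also ramify at a finite prime; placing that extra $1/2$ at the smallest prime $P_1$ while keeping the order-$k$ part at $P_1$ and $P_2$ yields local index $n$ at $P_1$ and $k$ at $P_2$, hence the asymmetric discriminant $P_1^{n(n-1)}P_2^{k(k-1)}$.

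For the optimality part, let $\A$ be any index-$n$ division algebra over $K$ and write $[\A]=[\A_2]\oplus[\A_k]$ in $\mathrm{Br}(K)$ with $\A_2$ of index dividing $2$ and $\A_k$ of index dividing $k$; since $\mathrm{ind}(\A)=n=2k$, necessarily $\mathrm{ind}(\A_2)=2$ and $\mathrm{ind}(\A_k)=k$. The finite part of the discriminant of $\A$ factors through these two classes: the $\A_k$-contribution is at least the minimum discriminant of an index-$k$ division algebra over $K$, which is $(P_1P_2)^{k(k-1)}$ by Theorem \ref{bound} applied with the odd modulus $k$ (so $2\nmid k$); and the $\A_2$-contribution to the finite discriminant is trivial if and only if the quaternion class $\A_2$ can be taken unramified at every finite prime, which by reciprocity in $\mathrm{Br}(K)$ is possible precisely when $K$ has at least two real primes, and otherwise forces at least the extra factor $P_1^{n(n-1)-k(k-1)}$ obtained by raising the local index at the smallest prime from $k$ to $n$. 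Monotonicity of the exponent $m\mapsto (m-1)[\A:K]/m$ in the local index $m$ then shows that spreading ramification over more (hence larger) primes, or inflating any local index beyond what the constraint $\mathrm{lcm}=n$ forces, can only increase the discriminant, which pins the minimum down to the configuration used in the existence part.

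The main obstacle I expect is the optimality direction, specifically the bookkeeping that the $2$-primary and odd parts of the discriminant genuinely decouple and that no clever simultaneous choice of local invariants — for instance merging the order-$2$ and order-$k$ ramification at a single small prime, or distributing the odd index $k$ across several primes when $k$ is composite — beats $(P_1P_2)^{k(k-1)}$. One needs both the monotonicity of $m\mapsto(m-1)[\A:K]/m$ and the fact, already contained in Theorem \ref{bound} for the odd modulus $k$, that the smallest admissible ramification set for an odd-index-$k$ class over $K$ requires two primes and no fewer. A secondary, routine point is to fix the normalization of the discriminant (the relation between $d(\A)$ and the discriminant of the division part at each ramified prime) so that the exponents come out exactly as stated once the ramification data is chosen.
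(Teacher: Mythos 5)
Your proposal is correct and follows essentially the same route as the paper: prescribe Hasse invariants equal to $1/2$ at two real places (resp.\ at the single real place together with $P_1$) and odd-order invariants at $P_1,P_2$ so that they sum to an integer with local indices of least common multiple $n$, invoke Proposition \ref{collection} for existence and for reading off the discriminant, and prove optimality by splitting the Brauer class into its $2$-primary and $k$-primary parts and combining Theorem \ref{bound} with the monotonicity of $m\mapsto(m-1)[\A:K]/m$. The paper itself only outlines this strategy and defers the optimality bookkeeping to \cite{Roopenkirja}, so the step you flag as the main obstacle in the one-real-place case --- verifying that raising the local index at $P_1$ from $k$ to $n$ is cheaper than ramifying the quaternionic part at any other finite prime or rearranging the odd part --- is precisely the piece that would still need to be written out in full.
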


We have now given completely general discriminant bounds for any center and for any index $n$. 

\begin{proposition}
Let $\mathcal{A}$ be a  $K$-central division algebra of index
$n$, $2 \mid n$, where $K$ is a totally real number field, and let $P_1\leq P_2$ be a pair of smallest primes in $K$. Let us suppose that all the infinite primes are ramified in $\mathcal{A}$.

If  $2\mid \mid n$ and  $ 2\mid [K:\Q]$, then the minimal discriminant of
$\mathcal{A}$ is
$$
(P_1P_2)^{k(k-1)}.
$$

If $4\mid n$ then the minimal discriminant of
$\mathcal{A}$ is
$$
(P_1P_2)^{n(n-1)}.
$$
If $2\mid\mid n$ and  $2\nmid [K:\Q]$,  then the minimal discriminant of
$\mathcal{A}$ is
$$
P_1^{n(n-1)}P_2^{k(k-1)}.
$$
\end{proposition}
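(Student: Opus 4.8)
The plan is to reduce the whole statement to bookkeeping with Hasse invariants and then feed the result into Theorems \ref{bound} and \ref{infinitebound}. By Proposition \ref{collection}, a $K$-central division algebra of index $n$ is pinned down up to isomorphism by its family of local invariants $h_P\in\Q/\Z$, and the $\OO_K$-discriminant of a maximal order is $\prod_i P_i^{(m_{P_i}-1)[\A:K]/m_{P_i}}$, the product running over the \emph{finite} ramified primes with $[\A:K]=n^2$; in particular the discriminant is blind to the infinite primes and depends only on which finite primes ramify and with which local indices $m_{P_i}\mid n$. The hypothesis that $K$ is totally real with all infinite places ramified says precisely that $K$ has $m=[K:\Q]$ real places, each carrying $h_P=1/2$, hence local index $2$ and a total contribution $m/2$ to the congruence $\sum_P h_P\equiv 0\pmod 1$. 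So the Proposition becomes an optimisation: minimise $\prod_i P_i^{(m_{P_i}-1)n^2/m_{P_i}}$ over all admissible finite ramification data, subject to $\mathrm{lcm}(2,m_{P_1},m_{P_2},\dots)=n$ and $\tfrac m2+\sum_i a_i/m_{P_i}\equiv 0\pmod 1$ with $(a_i,m_{P_i})=1$.

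For the lower bounds I would invoke the two general theorems already in the Appendix. Theorem \ref{bound} gives, for $4\mid n$, that \emph{every} $K$-central division algebra of index $n$ has maximal-order discriminant at least $(P_1P_2)^{n(n-1)}$, and Theorem \ref{infinitebound} gives, for $n=2k$ with $k$ odd, the sharp lower bounds $P_1^{n(n-1)}P_2^{k(k-1)}$ when $K$ has a single real place and $(P_1P_2)^{k(k-1)}$ when it has at least two, where in all cases $P_1\le P_2$ are the two smallest primes of $\OO_K$. Since our algebra is an instance of these, those are automatically lower bounds for it; what remains is the matching \emph{upper} bound, i.e. that the minimum is still attained among algebras in which all infinite places ramify.

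For the constructions I would build the extremal algebra directly via Proposition \ref{collection}, taking the minimal finite ramification from Theorems \ref{bound}/\ref{infinitebound} and additionally declaring $h_P=1/2$ at every real place of $K$; the three cases are exactly the cases in which (a) and (b) can be met without enlarging the finite ramification. If $4\mid n$, the $2$-part of $n$ is at least $4$, so the real places (which contribute only a factor $2$ to the lcm) cannot carry it and one is forced to ramify at finite primes of local index $n$; choosing $P_1,P_2$ of local index $n$ with $h_{P_1}=1/n$ and $h_{P_2}=(n-1)/n$ when $m$ is even, respectively $h_{P_2}=(n/2-1)/n$ when $m$ is odd (the coprimality with $n$ being immediate since $4\mid n$ forces $n/2-1$ odd), yields discriminant $(P_1P_2)^{n(n-1)}$. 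If $2\mid\mid n$, write $n=2k$ with $k$ odd: the real places already carry the full $2$-part, so the finite primes only need to supply the odd part $k$. When $2\mid[K:\Q]$ the real contribution $m/2$ is integral, so two finite primes $P_1,P_2$ of local index $k$ with $h_{P_1}=1/k$, $h_{P_2}=(k-1)/k$ (none needed at all when $k=1$) produce discriminant $(P_1P_2)^{k(k-1)}$; when $2\nmid[K:\Q]$ the real contribution is a half-integer, which cannot be cancelled by a prime of odd local index, so one prime must be given even local index, forced to be $n$ itself so as also to carry the odd part, giving the asymmetric bound $P_1^{n(n-1)}P_2^{k(k-1)}$. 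In each case the lcm of the local indices is $n$, so the algebra really has index $n$, and it is a division algebra since the prescribed invariants are nonzero where required.

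The step I expect to be the real obstacle is showing these constructions are optimal, i.e. the lower bound in the ``$2\mid\mid n$ and $2\mid[K:\Q]$'' case, since Theorem \ref{infinitebound} states $(P_1P_2)^{k(k-1)}$ there only as an existence result. One must rule out by hand that a cleverer distribution of finite invariants — more ramified primes but of smaller local index, or the odd part $k$ split across several primes — could undercut $(P_1P_2)^{k(k-1)}$. This reduces to the elementary but fiddly facts that the exponent $(m_P-1)n^2/m_P$ is strictly increasing in $m_P$, that the congruence (b) forces at least two finite ramified primes as soon as the odd part of $n$ exceeds $1$ (a single nonzero invariant $a/m_P$ with $(a,m_P)=1$, $m_P>1$, can never be killed modulo $1$ once the real contribution is integral) and similarly constrains which combinations of local indices are even realisable, and that among all admissible prime sets the product is minimised by taking $P_1\le P_2$ to be the two smallest primes of $\OO_K$. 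Combining these with the lcm constraint gives the claimed minimum in each of the three cases.
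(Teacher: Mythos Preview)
Your approach is essentially the paper's: reduce to Hasse-invariant bookkeeping, invoke Theorems \ref{bound} and \ref{infinitebound} for the lower bounds, and then exhibit explicit local invariants at $P_1,P_2$ (together with $1/2$ at every real place) realising those bounds, treating the case $2\mid\mid n$, $2\nmid[K:\Q]$ separately because the odd number of half-integers forces one finite prime to carry local index $n$. Your worry about the first case is unnecessary: the paper reads the final sentence of Theorem \ref{infinitebound} (``This is the smallest possible discriminant\dots'') as applying to \emph{both} subcases, so $(P_1P_2)^{k(k-1)}$ is already a universal lower bound when $K$ has at least two real primes, and your construction with $h_{P_1}=1/k$, $h_{P_2}=(k-1)/k$ simply shows it is attained under the extra ramification hypothesis.
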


\begin{proof}
In the proofs of   Theorems \ref{infinitebound} and \ref{bound}  the general strategy was to choose a set of H-invariants that will yield an index $n$ division algebra (see Theorem \ref{collection}) and then prove that our choice was the  best possible. We will use the same strategy here, but the difference is that we can do  the optimization over division algebras that are totally ramified at infinite primes.

The assumption of ramified infinite primes always gives us $m$ non-trivial Hasse invariants $\{h_{P_1},\dots, h_{P_m}\}$, where $h_{P_i}=\frac{1}{2}$ and $P_i$ are all the infinite primes in $K$.

The Hasse-invariants at infinite places do not contribute anything
on discriminant of the division algebra. If we have an index $n$ division algebra, the contribution of a Hasse-invariant $h_{P}=\frac{s}{m_P}$, where $m_P$ is the local index at finite prime $P$,   to     the $\OO_K$-discriminant is $P^{(m_{P}-1)\frac{n}{m_{P}}}$. Therefore in most cases we can simply prove the minimality  of the corresponding discriminant by showing that,  despite the extra ramification at infinite primes, we can choose a set of  Hasse-invariants that will give  us an index $n$ division algebra with a discriminant reaching the bound  \ref{infinitebound} or \ref{bound}.

\begin{table}[thp!]\caption{}\label{table3}
\begin{center} \begin{tabular}{|l|l|l|l|l|l|}
\hline
 index & [K:Q]& H-invariants at finite places   \\ \hline
 odd &    & -                           \\
 4k& odd     &   $ h_{P_1}=\frac{1}{4k}$, $h_{P_2}=\frac{2k-1}{4k}$              \\
 4k&  even    & $ h_{P_1}=\frac{1}{4k}$, $h_{P_2}=\frac{k-1}{4k}$                    \\
 2k, $2\nmid k$& even & $h_{P_1}=\frac{1}{k}$, $h_{P_2}=\frac{k-1}{k}$                         \\
 2k, $2\nmid k$& odd    &  $h_{P_1}=\frac{k-2}{2k}$, $h_{P_2}=\frac{1}{k}$                 \\
 \hline
\end{tabular}
\end{center}
\end{table}

In  Table \ref{table3}  we have  collected the Hasse-invariants (at finite places) of the algebras  we claim to be optimal. In addition to what is said in the table about the H-invariants at the finite places, we suppose that each of these algebras have H-invariants $\frac{1}{2}$ at all the infinite primes. By a direct calculation we can see that in each case we get a division algebra of index $n$ with all the infinite primes ramified. This will take care of the first two claims of the proposition. In the first case, where $2\mid \mid n$ and  $ 2\mid [K:\Q]$,  the division algebra given in the table will reach the claimed bound which coincides with the general bound in \ref{infinitebound}. In the case $4\mid n$  the algebras given in Table \ref{table3}  reach the bound \ref{infinitebound} and we are done with the second claim.

We are left with the case, where    $2\mid\mid n$ and   $2\nmid [K:\Q]=m$. In this case the problem is that while the  sum of the $m-1$ first infinite  Hasse-invariants is an integer, there is still one extra  infinite H-invariant $h_{P_m}=\frac{1}{2}$ we have to take care of. Therefore we are forced to choose Hasse-invariants $h_{P_1}=\frac{k-2}{2k}$  and $h_{P_2}=\frac{1}{k}$ for the finite places. The proof that this set  of Hasse-invariants will give us the optimal discriminant is verbatim the same as it is for the case where the center has exactly one real place. This case was dealt in the proof  of Proposition   \ref{infinitebound} and we refer the reader to \cite{Roopenkirja}.

\end{proof}

\end{document}